\newcommand{\bQ}{{\mathbf{Q}}}
\newcommand{\bX}{{\mathbf{X}}}
\newcommand{\bx}{{\mathbf{x}}}
\newcommand{\bY}{{\mathbf{Y}}}
\newcommand{\by}{{\mathbf{y}}}
\newcommand{\bZ}{{\mathbf{Z}}}
\newcommand\smallO{
	\mathchoice
	{{\scriptstyle\mathcal{O}}}% \displaystyle
	{{\scriptstyle\mathcal{O}}}% \textstyle
	{{\scriptscriptstyle\mathcal{O}}}% \scriptstyle
	{\scalebox{.7}{$\scriptscriptstyle\mathcal{O}$}}%\scriptscriptstyle
}
\newcommand{\cN}{\mathcal{N}}
\newcommand{\cH}{\mathcal{H}}
\newcommand{\cL}{\mathcal{L}}
\newcommand{\cZ}{\mathcal{Z}}
\newcommand{\E}{\mathbb{E}}
\newcommand{\simiid}[1][0pt]{\mathrel{\raisebox{#1}{$\sim$}}}
\newcommand{\iid}{\overset{\text{\tiny i.i.d.}}{\simiid[-2pt]}}
\newtheorem{theorem}{Theorem}
\newtheorem{lemma}{Lemma}
\newtheorem{proposition}{Proposition}
\title{Mutual information for low-rank even-order symmetric tensor estimation}
\author[1]{\textsc{Cl\'{e}ment Luneau}}
\author[2]{Jean Barbier}
\author[1]{Nicolas Macris}
\affil[1]{Communication Theory Laboratory, \'{E}cole Polytechnique F\'{e}d\'{e}rale de Lausanne, Switzerland}
\affil[2]{The Abdus Salam International Center for Theoretical Physics, Trieste, Italy.}
\date{}
\begin{document}
	\maketitle
	
\begin{abstract}
We consider a statistical model for finite-rank symmetric tensor factorization and prove a single-letter 
variational expression for its asymptotic mutual information when the tensor is of even order.
The proof applies the adaptive interpolation method originally invented for rank-one factorization.
Here we show how to extend the adaptive interpolation to finite-rank and even-order tensors.
This requires new nontrivial ideas with respect to the current analysis in the literature. We also underline where the proof falls short when dealing with odd-order tensors.
\end{abstract}

\section{Introduction}
There exist well-known unsupervised algorithms to discover structure in a 2D dataset, e.g., singular value decomposition (SVD), principal component analysis (PCA) and other spectral methods \cite{Jolliffe2002Principal}.
Tensors naturally handle multidimensional data and their use becomes more and more beneficial with the emergence of big data, a strong incentive to go beyond the flat matrix world.
Tensor decompositions come with some advantages with respect to matrices, and 
have
numerous applications in signal processing and machine learning, e.g., data compression, data visualization, learning probabilistic latent variables 
models, etc. \cite{Cichocki2015Tensor,Rabanser2017Introduction}.
The canonical polyadic decomposition (CPD), also known as tensor rank decomposition or tensor factorization, is the most familiar one and represents a tensor 
as a minimum-length linear combination of rank-one tensors. 
This minimum-length defines the tensor rank. If instead the number $K$ of rank-one tensors forming the linear combination is not minimal, we talk of a $K$-term decomposition. 
Decompositions of tensors are also called tensor factorizations and this is the terminology we adopt in the rest of the paper.

One approach to explore computational and/or statistical limits of tensor factorization is to consider a statistical model, as done in \cite{Richard2014Statistical}. The model 
is the following: draw $K$ column vectors, evaluate for each of them their $p$\textsuperscript{th} tensor power and sum those $K$ symmetric order-$p$ tensors
(this sum is exactly a $K$-term polyadic decomposition).
Tensor factorization can then be studied as an inference problem, namely, to estimate the initial $K$ vectors from noisy observations of the tensor and to determine information theoretic limits for this task.
To do so, we focus on proving formulas for the asymptotic mutual information between the noisy observed tensor and the original $K$ vectors.
Such formulas were first rigorously derived for $p=2$ and $K=1$, i.e., rank-one matrix factorization:
see \cite{Korada2009Exact} for the case with a binary input vector,
\cite{Deshpande2016Asymptotic} for the restricted case in which no discontinuous phase transition occurs,
\cite{Krzakala2016Mutual} for a single-sided bound and, finally, \cite{Barbier2016Mutual} for the fully general case.
The proof in \cite{Barbier2016Mutual} combines interpolation techniques with spatial coupling and an analysis of the Approximate Message-Passing (AMP) algorithm.
Later, and still for $p=2$, \cite{Lelarge2019Fundamental} went beyond rank-one by using a rigorous version of the cavity method.
Reference \cite{Lesieur2017Statistical} applied the heuristic replica method to conjecture a formula for any $p$ and finite $K$, which is then proved for $p \geq 2$ and $K= 1$.
Reference \cite{Lesieur2017Statistical} also details the AMP algorithm for tensor factorization and shows how the single-letter variational expression for the mutual information allows one to give guarantees on AMP's performance.
Afterwards, \cite{Barbier2019Adaptivea,Barbier2019Adaptive} introduced the adaptive interpolation proof technique which they applied to the case $p \geq 2$, $K=1$.
Other proofs based on interpolations recently appeared, see \cite{Alaoui2018Estimation} ($p=2$, $K=1$) and \cite{Mourrat2018Hamilton}
($p\geq 2$, $K=1$).

In this work, we prove the conjectured replica formula for any finite rank $K$ and any even order $p$ using the adaptive interpolation method.
We also underline what is missing to extend the proof to odd orders.
The adaptive interpolation method was introduced in \cite{Barbier2019Adaptivea,Barbier2019Adaptive} as a powerful extension to the Guerra-Toninelli interpolation scheme \cite{Guerra2002Thermodynamic}.
Since then, it has been applied to many other inference problems in order to prove formulas for the mutual information, e.g., \cite{Barbier2017Layered,Barbier2019Optimal}.
While our proof outline is similar to \cite{Barbier2019Adaptive}, there are two important new ingredients.
First, to establish a tight lower bound on the asymptotic mutual information, we have to prove the regularity of a change of variable given by the solutions to an ordinary differential equation. This is nontrivial when the rank becomes greater than one.
Second, the same bound requires one to prove the concentration of the overlap (a quantity that fully characterizes the system in the high-dimensional limit).
When the rank is greater than one, this overlap is a matrix and a recent result \cite{Barbier2020Overlap} on the 
concentration of overlap matrices can be adapted to obtain the required concentration in our interpolation scheme.

The paper is organized as follows.
In Section \ref{sec:Low-rank_symmetric_tensor_factorization} we set up our precise statistical model and state the main theorems giving the single-letter variational expression for the asymptotic mutual information.
The adaptive interpolation method is formulated in Section~\ref{sec:Adaptive_path_interpolation} and the basic upper and lower bounds on the asymptotic mutual information are proved in Section~\ref{sec:matching_bounds}.
Sections \ref{sec:derivative_free_entropy} and \ref{sec:Barbier2020Overlap} contain the new and essential results which allow to go from rank-one to finite-rank tensors. 
Finally, the difficulties encountered for odd-order tensors are discussed in the last section, that is, Section~\ref{sec:discussion}.
The reader will find in Appendix~\ref{app:divergence} a technical calculation which is new and crucial to our proof, while Appendices \ref{app:properties_psi} and \ref{app:concentration_free_entropy} present more classical material.
\section{Low-rank symmetric tensor factorization}\label{sec:Low-rank_symmetric_tensor_factorization}
We study the following statistical model.
Let $n$ be a positive integer. $X_1,\dots,X_n$ are random column vectors in $\mathbb{R}^{K}$, independent and identically distributed (i.i.d.) with distribution $P_X$.
These vectors are not directly observed. Instead, for each $p$-tuple $i=(i_1,\dots,i_p) \in [n]^{p}$ with $i_1 \leq i_2 \leq \dots \leq i_p$, we observe
%one is given access to the noisy observation
\begin{equation}\label{obs_model}
Y_{i}= \sqrt{\frac{\lambda (p-1)!}{n^{p-1}}} \sum_{k=1}^{K} X_{i_1 k}X_{i_2 k}\dots X_{i_p k}  + Z_{i}\;,
\end{equation}
where $\lambda$ is a known signal-to-noise ratio (SNR) and the noise $Z_{i}$ is i.i.d.\ with respect to the standard normal distribution $\cN(0,1)$.
Let $\bX$ be the $n \times K$ matrix whose $i$\textsuperscript{th} row is given by $X_i$.
All the observations \eqref{obs_model} are combined into the following symmetric order-$p$ tensor ($X_{\cdot,k} \in \mathbb{R}^n$ denotes the $k$\textsuperscript{th} column of $\bX$):
$$
\bY \coloneqq\sqrt{\nicefrac{\lambda (p-1)!}{n^{p-1}}} \sum_{k=1}^{K} X_{\cdot,k}^{\otimes p}  + \bZ\;.
$$
Our main result is the proof of a formula for the mutual information in the limit $n \to + \infty$ while the rank $K$ is kept fixed.
This formula is given as the optimization of a potential over the cone of $K \times K$ symmetric positive semidefinite matrices $\mathcal{S}_K^+$.
Let $\widetilde{Z} \sim \cN(0,I_K)$ and $X \sim P_X$.
Define the convex function (see Lemma~\ref{lemma:properties_psi} in Appendix~\ref{app:properties_psi})
\begin{equation}
\psi: S \in \mathcal{S}_K^+ \mapsto
\E \ln \int dP_X(x) \,\exp\Big(X^T S x + \widetilde{Z}^T\sqrt{S} x-\frac{1}{2}x^T S x\Big)\;;
\end{equation}
as well as the potential
\begin{equation}\label{potential_phi_p}
\phi_{p,\lambda}: S \in \mathcal{S}_K^+ \mapsto
\psi\big(\lambda S^{\circ (p-1)}\big)
-\frac{\lambda (p-1)}{2p}\sum_{\ell,\ell'=1}^K \big(S^{\circ p}\big)_{\ell \ell'} \;,
\end{equation}
where $S^{\circ k}$ is the $k$\textsuperscript{th} Hadamard power of $S$ (for two matrices $A$ and $B$ of the same dimension, the Hadamard product $A \circ B$ is the matrix of same dimension with elements given by $(A\circ B)_{ij}=A_{ij}B_{ij}$).
Note that, by the Schur Product Theorem \cite{Schur1911Bemerkungen}, the Hadamard product of two matrices in $\mathcal{S}_K^+$ is 
also in $\mathcal{S}_K^+$.
Let $\Sigma_{X} \coloneqq \E[X X^\intercal] \in \mathcal{S}_K^+$ the second moment matrix of a random vector $X \sim P_X$.
Our main result is the proof of the replica formula conjectured in \cite{Lesieur2017Statistical}, that is,
\begin{theorem}(Mutual information in the high-dimensional limit)\label{th:mutual_info}
Assume $p$ is even and $P_X$ is such that its first $2p$ moments are finite. Then
	\begin{equation}\label{formula_mutual_info}
	\lim_{n \to +\infty} \frac{1}{n} I(\bX ; \bY)
	= \frac{\lambda}{2p}\sum_{\ell,\ell'=1}^K \big(\Sigma_{X}^{\circ p}\big)_{\ell \ell'} - \sup_{S \in \mathcal{S}_K^+} \phi_{p,\lambda}(S) \,.
	\end{equation}
\end{theorem}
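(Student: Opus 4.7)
My proof plan follows the adaptive interpolation scheme of \cite{Barbier2019Adaptivea,Barbier2019Adaptive} lifted to finite-rank, even-order tensors. The plan is to interpolate between the original observation model \eqref{obs_model} at $t=0$ and a decoupled scalar side channel at $t=1$, coupling $\bY$ dampened by a factor $\sqrt{1-t}$ with an auxiliary Gaussian channel of matrix SNR $R(t) \in \mathcal{S}_K^+$, where the path $t \mapsto R(t)$ will be chosen adaptively. At $t=1$ the tensor data is drowned in noise and only the side channel survives, whose asymptotic free energy is exactly $\psi(R(1))$; at $t=0$ the interpolating free energy coincides, up to an explicit additive term involving $\Sigma_X^{\circ p}$, with $-\tfrac{1}{n}I(\bX;\bY)$.

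Differentiating the interpolating free energy and invoking Gaussian integration by parts together with the Nishimori identity yields an exact sum rule schematically of the form
\[
\tfrac{1}{n}I(\bX;\bY) = \tfrac{\lambda}{2p}\sum_{\ell,\ell'=1}^K (\Sigma_X^{\circ p})_{\ell\ell'} - \psi(R(1)) - \int_0^1 \E\big\langle \Phi(R'(t), Q_n(t)) \big\rangle\, dt + \mathrm{err}_n,
\]
where $Q_n(t) = \tfrac{1}{n}\bX^\intercal \bx$ is the $K\times K$ overlap matrix, $\langle \cdot \rangle$ is the posterior expectation, $\Phi$ is the bilinear-quadratic form dictated by $\phi_{p,\lambda}$, and $\mathrm{err}_n \to 0$ by free-energy concentration (Appendix~\ref{app:concentration_free_entropy}). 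The evenness of $p$ enters here: the divergence-type identity of Appendix~\ref{app:divergence} leaves a nonnegative remainder only when $p$ is even, reflecting PSD-ness of Hadamard powers of $Q_n$. For the upper bound, I take the constant path $R(t) = \lambda t\, S^{\circ(p-1)}$ for fixed $S \in \mathcal{S}_K^+$; convexity of $\psi$ (Lemma~\ref{lemma:properties_psi}) and Jensen's inequality turn the integrand into $\phi_{p,\lambda}(S)$, yielding $\limsup_n \tfrac{1}{n}I(\bX;\bY) \leq \tfrac{\lambda}{2p}\sum_{\ell,\ell'} (\Sigma_X^{\circ p})_{\ell\ell'} - \sup_S \phi_{p,\lambda}(S)$ after optimizing in $S$.

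For the matching lower bound I would pick $R(\cdot;\epsilon)$ as the Carath\'eodory solution of an ODE of the form $R'(t) = \lambda\, \E\langle Q_n(t;\epsilon)\rangle^{\circ(p-1)}$ with a perturbative initial condition parameterized by $\epsilon$ in a small ball of $\mathcal{S}_K^+$, and then integrate over $\epsilon$. Two technical ingredients are required to turn expected overlaps into concentrated squared overlaps: (i) the map $\epsilon \mapsto R(t;\epsilon)$ must be a bi-Lipschitz $C^1$ diffeomorphism with uniformly invertible Jacobian on a compact subset of $\mathcal{S}_K^+$, delivered in Section~\ref{sec:derivative_free_entropy}, and (ii) concentration of the matrix overlap $Q_n(t;\epsilon)$ around its mean under the perturbed posterior, obtained in Section~\ref{sec:Barbier2020Overlap} by adapting \cite{Barbier2020Overlap}. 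The main obstacle is (i): the flow must stay inside $\mathcal{S}_K^+$, avoid the singularities of $S \mapsto S^{\circ(p-1)}$ on the boundary of the cone, and have a Jacobian that is uniformly invertible on a nontrivial neighborhood. This is routine for $K=1$ by monotonicity but genuinely delicate once the Hadamard power acts on cone-valued matrices, and is precisely the technical novelty that makes the extension from rank-one to finite rank a real problem rather than a bookkeeping exercise.
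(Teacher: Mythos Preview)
Your architecture matches the paper's exactly: linear path $R(t)=t\,S^{\circ(p-1)}$ for one bound, adaptive ODE path $R'(t,\epsilon)=\E[\langle\bQ\rangle_{t,\epsilon}]^{\circ(p-1)}$ for the other, with the Jacobian control and overlap concentration supplying the missing ingredients. However, you have swapped the key analytic tools between the two bounds, and this matters.

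For the linear path (upper bound on mutual information), convexity of $\psi$ plays no role: with $R'\equiv S^{\circ(p-1)}$ one has $\psi(R(1))=\psi(S^{\circ(p-1)})$ exactly, no Jensen needed. The sum rule then reads $f_n=\phi_p(S)+\tfrac{1}{2p}\int_0^1\sum_{\ell,\ell'}\E\langle h_p(S_{\ell\ell'},Q_{\ell\ell'})\rangle\,dt+\mathcal{O}(n^{-1})$ with $h_p(r,q)=q^p-pqr^{p-1}+(p-1)r^p$. The bound follows because $h_p\ge 0$ on all of $\mathbb{R}^2$ when $p$ is even (tangent-line inequality for $x\mapsto x^p$). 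This is the first place evenness enters, and your write-up misses it.

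For the adaptive path (lower bound on mutual information), convexity of $\psi$ \emph{is} needed, and you omit it: one writes $\psi(R(1,\epsilon))=\psi\big(\epsilon+\int_0^1 R'(t,\epsilon)\,dt\big)\le \int_0^1\psi\big(\E[\langle\bQ\rangle_{t,\epsilon}]^{\circ(p-1)}\big)\,dt+\mathcal{O}(\|\epsilon\|)$ by Jensen, which is what produces $\int_0^1\phi_p(\E\langle\bQ\rangle_{t,\epsilon})\,dt\le\sup_S\phi_p(S)$. Moreover, ``uniformly invertible Jacobian'' is not quite the right property: what the paper proves (Proposition~\ref{prop:ode_and_properties}) and uses in the change of variables is the sharper $\det J_{R(t,\cdot)}(\epsilon)\ge 1$, obtained from Liouville's formula together with nonnegativity of the divergence of the ODE field. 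That divergence (Appendix~\ref{app:divergence}) factors as $n(p-1)\sum_{\ell,\ell'}\E[\langle Q_{\ell\ell'}\rangle]^{p-2}\Delta_{\ell\ell'}$ with $\Delta_{\ell\ell'}\ge 0$; the factor $\E[\langle Q_{\ell\ell'}\rangle]^{p-2}$ is nonnegative only because $p$ is even. So evenness is needed in \emph{both} bounds, not just once. Finally, the $\mathrm{err}_n$ in your sum rule is not a free-energy-concentration error: the sum rule is exact up to $\mathcal{O}(n^{-1})+\mathcal{O}(\|\epsilon\|)$; concentration of the free entropy is used only inside the overlap-concentration proof (Section~\ref{sec:Barbier2020Overlap}).
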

\noindent\textbf{Important remark:} We can reduce the proof of \eqref{formula_mutual_info} to the case $\lambda=1$ by rescaling properly $P_X$.
From now on, we set $\lambda=1$ and define $\phi_p \coloneqq \phi_{p,1}$.\\

Before proving Theorem~\ref{th:mutual_info}, we introduce important information theoretic quantities, adopting the statistical mechanics terminology.
Let $\mathcal{I} \coloneqq \{i \in [n]^p:i_a \leq i_{a+1}\}$.
Given the observations $\bY$, define the Hamiltonian for all $\bx \in \mathbb{R}^{n \times K}$:
\begin{equation}\label{hamiltonian}
\cH_n(\bx ; \bY)
\coloneqq
\sum_{i \in \mathcal{I}}
\frac{(p-1)!}{2n^{p-1}} \bigg(\sum_{\ell=1}^{K} \prod_{a=1}^{p}x_{i_a \ell}\bigg)^{\! 2}
-\sum_{i \in \mathcal{I}}\sqrt{\frac{ (p-1)!}{n^{p-1}}} Y_{i_1\dots i_p}  \sum_{\ell=1}^{K} \prod_{a=1}^{p}x_{i_a \ell}\:.
\end{equation}
Using Bayes' rule, the posterior probability density function is
\begin{equation*}\label{posterior}
dP(\bx \,\vert\, \bY)
\coloneqq \frac{1}{\cZ_n(\bY)} \, e^{-\cH_n(\bx; \bY)} \,\prod_{j=1}^n dP_{X}(x_j) \:,
\end{equation*}
with $\cZ_n(\bY) \coloneqq \int e^{-\cH_n(\bx; \bY)} \prod_{j} dP_{X}(x_j)$ the normalization factor.
Finally, the free entropy is the quantity
\begin{equation}\label{free_entropy}
 f_n  \coloneqq \frac{1}{n}\E\ln \cZ_{n}(\bY) \,,
\end{equation}
which is linked to the mutual information through the identity
\begin{equation}\label{link_in_fn}
\frac{1}{n} I(\bX ; \bY)
= \frac{1}{2p}\sum_{\ell,\ell'=1}^K \big(\Sigma_{X}^{\circ p}\big)_{\ell \ell'}
- f_n + \mathcal{O}(n^{-1}) \,.
\end{equation}
In \eqref{link_in_fn}, $\mathcal{O}(n^{-1})$ is a quantity such that $n\mathcal{O}(n^{-1})$ is bounded uniformly in $n$.
Thanks to \eqref{link_in_fn}, Theorem~\ref{th:mutual_info} will follow directly from the next two bounds on the asymptotic free entropy.
\begin{theorem}(Lower bound on the asymptotic free entropy)\label{th:lowerbound}
Assume $p$ is even and $P_X$ is such that its first $2p$ moments are finite. Then
\begin{equation}\label{liminf}
\liminf_{n \to +\infty} f_n \geq 
\sup_{S \in \mathcal{S}_K^+} \phi_p(S)\,.
\end{equation}
\end{theorem}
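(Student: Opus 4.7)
The plan is to prove Theorem~\ref{th:lowerbound} by the adaptive interpolation method of \cite{Barbier2019Adaptive}, extended to accommodate the matrix-valued overlap that arises when $K>1$. First, introduce a one-parameter family of inference problems indexed by $t \in [0,1]$ in which the signal $\bX$ is observed through two coupled channels: the original tensor observation \eqref{obs_model} with SNR scaled by $1-t$, together with Gaussian side-observations $Y'_i = \sqrt{R_\epsilon(t)}\, X_i + Z'_i$ for $i \in [n]$, where $Z'_i \iid \cN(0,I_K)$ and $R_\epsilon : [0,1] \to \mathcal{S}_K^+$ is a continuous interpolation path to be chosen adaptively, with $R_\epsilon(0) = \epsilon I_K$ (a small perturbation that primes overlap concentration). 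Denoting the free entropy of this combined model by $f_{n,t,\epsilon}$, one has $f_{n,0,\epsilon} = f_n + o_\epsilon(1)$ and $f_{n,1,\epsilon} = \psi(R_\epsilon(1))$, because at $t=1$ the signal is observed only through the decoupled scalar Gaussian channel whose free entropy is exactly $\psi$. The fundamental theorem of calculus then gives the sum rule
\begin{equation*}
f_n = \psi(R_\epsilon(1)) - \int_0^1 \frac{d}{dt} f_{n,t,\epsilon}\, dt + o_\epsilon(1).
\end{equation*}

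Using Gaussian integration by parts and the Nishimori identity, the derivative $\frac{d}{dt} f_{n,t,\epsilon}$ becomes bilinear in $R'_\epsilon(t)$ and the Gibbs expectation of the $K\times K$ overlap matrix $Q \coloneqq \frac{1}{n} \bx^T \bX$. Select the path by the matrix ODE $R'_\epsilon(t) = (p-1)\, q_\epsilon(t)^{\circ(p-1)}$ with $q_\epsilon(t) \coloneqq \E\langle Q\rangle_{n,t,\epsilon}$. Completing the Hadamard ``square'' then produces the key inequality
\begin{equation*}
f_n \geq \phi_p(q_\epsilon(1)) + \frac{p-1}{2p}\int_0^1 \sum_{\ell,\ell'=1}^K \E\big\langle \big((Q-q_\epsilon(t))^{\circ p}\big)_{\ell\ell'} \big\rangle\, dt + o_\epsilon(1).
\end{equation*}
Since $p$ is even, every entry $(Q_{\ell\ell'} - q_\epsilon(t)_{\ell\ell'})^p$ is nonnegative, so the integral is nonnegative and can be discarded. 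This is precisely where the even-order hypothesis enters; for odd $p$ the integrand has no definite sign and the argument breaks down.

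The two ingredients that are genuinely new for $K>1$ now come into play. First, one must show that the matrix ODE is well-posed on $[0,1]$ and that $\epsilon \mapsto R_\epsilon(1)$ is a regular change of variables on an open subset of $\mathcal{S}_K^+$, with controlled Jacobian, so that $q_\epsilon(1)$ can be driven to an arbitrary target $S \in \mathcal{S}_K^+$ up to an error vanishing with $\epsilon$. Well-posedness follows from smoothness of Gibbs expectations in the side-channel SNR (Appendix~\ref{app:divergence} and Section~\ref{sec:derivative_free_entropy}), but the matrix character of the Hadamard-power nonlinearity makes the Jacobian analysis substantially more delicate than in the scalar case. Second, to replace $\langle Q\rangle_{n,t,\epsilon}$ by its deterministic counterpart and to close the argument, the $K\times K$ overlap matrix $Q$ must concentrate on $q_\epsilon(t)$ in $L^1$; this is obtained by adapting the matrix-overlap concentration of \cite{Barbier2020Overlap} to the interpolation at hand, after a Fubini-style averaging over a small ball of perturbation parameters (Section~\ref{sec:Barbier2020Overlap}). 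I expect this matrix-overlap concentration to be the main obstacle, since the scalar Gibbs-variance estimates used for $K=1$ no longer suffice and one must control a full $K\times K$ covariance uniformly along the adaptive path while keeping the perturbation compatible with the ODE. Once both are in place, passing to $\liminf_n$, then averaging over the perturbation, then letting $\epsilon \to 0$, and finally invoking continuity of $\phi_p$ on $\mathcal{S}_K^+$ together with the regularity of the change of variables, yields $\liminf_n f_n \geq \phi_p(S)$ for every $S \in \mathcal{S}_K^+$ and hence \eqref{liminf}.
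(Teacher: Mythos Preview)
Your proposal confuses the machinery for the upper bound with that for the lower bound. In the paper, the adaptive ODE path, the Jacobian/regularity analysis, and the matrix-overlap concentration are all used exclusively to prove Theorem~\ref{th:upperbound}. The proof of Theorem~\ref{th:lowerbound} is far more elementary: one takes the \emph{non-adaptive} linear path $R(t,0)=tS^{\circ(p-1)}$ for a fixed $S\in\mathcal{S}_K^+$ (so $\epsilon=0$ and $R'(t)=S^{\circ(p-1)}$). The sum-rule then reads
\[
f_n=\mathcal{O}(n^{-1})+\phi_p(S)+\frac{1}{2p}\int_0^1 dt\sum_{\ell,\ell'}\E\big\langle h_p(S_{\ell\ell'},Q_{\ell\ell'})\big\rangle_{t,0},
\qquad h_p(r,q)=q^p-pqr^{p-1}+(p-1)r^p.
\]
For $p$ even, $h_p\ge 0$ on all of $\mathbb{R}^2$, so the integral is nonnegative and $f_n\ge\phi_p(S)+\mathcal{O}(n^{-1})$. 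Taking $\liminf$ and then supremum over $S$ finishes the proof. No ODE, no concentration, no change of variables.

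Beyond the unnecessary complexity, your argument has genuine gaps. First, with the adaptive choice $R'_\epsilon(t)=q_\epsilon(t)^{\circ(p-1)}$ the sum-rule produces $\psi(R_\epsilon(1))$, not $\psi(q_\epsilon(1)^{\circ(p-1)})$; to pass from one to the other the paper uses convexity of $\psi$, which yields an \emph{upper} bound on $\psi(R_\epsilon(1))$, going the wrong way for a lower bound on $f_n$. Your claimed inequality $f_n\ge\phi_p(q_\epsilon(1))+\text{(nonneg.)}$ therefore does not follow. Second, the remainder after the ODE choice is $\sum_{\ell,\ell'}\E\langle Q_{\ell\ell'}(Q_{\ell\ell'}^{\,p-1}-q_\epsilon(t)_{\ell\ell'}^{\,p-1})\rangle$, not $\sum_{\ell,\ell'}\E\langle(Q_{\ell\ell'}-q_\epsilon(t)_{\ell\ell'})^p\rangle$; ``completing the Hadamard square'' does not produce the latter. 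Third, and most seriously, the assertion that ``$q_\epsilon(1)$ can be driven to an arbitrary target $S\in\mathcal{S}_K^+$'' is unsupported: the Jacobian result only gives local injectivity of $\epsilon\mapsto R(t,\epsilon)$ with Jacobian $\ge 1$, not surjectivity onto $\mathcal{S}_K^+$, and says nothing about the range of $q_\epsilon(1)$. Without this you cannot recover the supremum over $S$.
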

\begin{theorem}(Upper bound on the asymptotic free entropy)\label{th:upperbound}
	Assume $p$ is even and $P_X$ has bounded support.
	Then
	\begin{equation}\label{limsup}
	\limsup_{n \to +\infty} f_n \leq 
	\sup_{S \in \mathcal{S}_K^+} \phi_p(S)\,.
	\end{equation}
\end{theorem}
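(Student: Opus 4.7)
I would prove this upper bound via the \emph{adaptive} interpolation method of \cite{Barbier2019Adaptivea,Barbier2019Adaptive}, upgraded from rank one to finite rank $K$. As the introduction flags, the two new ingredients needed for $K \ge 2$ are the regularity of a matrix-valued ODE defining the interpolation path and the concentration of the overlap \emph{matrix}; the rest follows the rank-one blueprint, with Hadamard powers replacing scalar powers via the Schur Product Theorem.

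For each absolutely continuous path $R : [0,1] \to \mathcal{S}_K^+$ (to be chosen later), I would introduce an interpolating model combining a tensor channel as in \eqref{obs_model} at attenuated SNR $(1-t)$ with a side Gaussian channel giving noisy linear observations of each row $X_j \in \mathbb{R}^K$ through the effective precision matrix $\int_0^t R(s)\,ds \in \mathcal{S}_K^+$ (positivity being preserved because the integral of positive semidefinite matrices is positive semidefinite). Writing $f_n(t,R)$ for the associated free entropy, the boundary values are $f_n(0,R) = f_n$ and, by decoupling over $j$, $f_n(1,R) = \psi\bigl(\int_0^1 R(s)\,ds\bigr)$. Differentiating in $t$, applying Gaussian integration by parts to both noise families, lifting the ordering constraint on $p$-tuples at a cost $\mathcal{O}(n^{-1})$ controlled uniformly by the bounded-support hypothesis on $P_X$, and invoking the Nishimori identity to eliminate replica cross-terms, the $p$-fold index sums reassemble into Hadamard powers of the $K \times K$ overlap $Q = \tfrac{1}{n}\bX^\top \bx$ and yield the sum rule
\[
f_n \;=\; \psi\Bigl(\int_0^1 R(s)\,ds\Bigr) - \frac{1}{2}\int_0^1 \sum_{\ell,\ell'=1}^K \mathbb{E}\bigl\langle \bigl(R(t)\circ Q\bigr)_{\ell\ell'} - \tfrac{1}{p}\bigl(Q^{\circ p}\bigr)_{\ell\ell'} \bigr\rangle_{n,t,R}\, dt + \mathcal{O}(n^{-1}).
\]
The key feature of the adaptive scheme is that the \emph{local} value $R(t)$ appears in the integrand while only the \emph{integral} $\int_0^t R$ drives the model.

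I would then choose $R$ by the fixed-point relation $R(t) = \bigl(\mathbb{E}\langle Q\rangle_{n,t,R}\bigr)^{\circ(p-1)}$, a matrix-valued ODE in $\mathcal{S}_K^+$. Proving that this ODE admits a solution regular enough for the eventual change of variables $t \leftrightarrow S(t) := \mathbb{E}\langle Q\rangle_{n,t,R}$ to be legitimate in the matrix setting is the first new ingredient, handled via the divergence computation in Appendix~\ref{app:divergence} and Section~\ref{sec:derivative_free_entropy}. Next I would invoke the concentration of the overlap \emph{matrix}, $\mathbb{E}\|Q - \mathbb{E}\langle Q\rangle\|_F^2 \to 0$, to replace $\mathbb{E}\langle Q^{\circ p}\rangle$ by $\bigl(\mathbb{E}\langle Q\rangle\bigr)^{\circ p}$ up to $o(1)$ — this is the second new ingredient (Section~\ref{sec:Barbier2020Overlap}), adapting \cite{Barbier2020Overlap} to the present scheme. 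At the adaptive value $R(t) = S(t)^{\circ(p-1)}$, after these two substitutions the random integrand collapses to the deterministic quantity $\tfrac{p-1}{2p}\sum_{\ell,\ell'}\bigl(S(t)^{\circ p}\bigr)_{\ell\ell'} + o(1)$, and the sum rule becomes
\[
f_n \;=\; \psi\Bigl(\int_0^1 S(s)^{\circ(p-1)}\,ds\Bigr) - \frac{p-1}{2p}\int_0^1 \sum_{\ell,\ell'=1}^K \bigl(S(s)^{\circ p}\bigr)_{\ell\ell'}\, ds + o(1).
\]
Finally, Jensen's inequality applied to the convex function $\psi$ (Lemma~\ref{lemma:properties_psi}) pulls the outer $\psi$ inside the integral and produces
\[
f_n \;\le\; \int_0^1 \phi_p\bigl(S(s)\bigr)\, ds + o(1) \;\le\; \sup_{S \in \mathcal{S}_K^+} \phi_p(S) + o(1),
\]
whence, after taking $\limsup_n$, Theorem~\ref{th:upperbound} follows.

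The main obstacle will be the compatibility of the ODE regularity with the matrix overlap concentration: the former must be strong enough to legitimize a matrix-valued change of variables, and the latter must be uniform enough in $t$ to survive integration against a non-constant $S(t)$. Both are genuinely new phenomena for $K \ge 2$; in the rank-one case the ODE reduces to a scalar one controlled by monotonicity and the overlap is a scalar with classical concentration, whereas here Hadamard products and the Schur Product Theorem must replace scalar powers and the change of variables lives in the cone $\mathcal{S}_K^+$. The evenness of $p$ enters decisively through the non-negativity of the Hadamard power $S^{\circ p}$, which ensures the quadratic energy penalty in $\phi_p$ has the right sign; for odd $p$ this positivity fails and the argument breaks down, as discussed in Section~\ref{sec:discussion}.
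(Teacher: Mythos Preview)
Your overall architecture matches the paper's: adaptive interpolation, the ODE choice $R'(t)=\bigl(\E\langle \bQ\rangle\bigr)^{\circ(p-1)}$, overlap-matrix concentration to kill the fluctuation term, and Jensen on the convex $\psi$ to close. However, two structural points are misdescribed and one is missing outright.

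First, the perturbation $\epsilon\in\mathcal{S}_K^{++}$ is absent from your plan. The paper takes $R(0,\epsilon)=\epsilon$ and, crucially, \emph{averages} the sum rule over a small hypercube $\mathcal{E}_n\subset\mathcal{S}_K^{++}$ of initial conditions; the concentration statement you invoke does \emph{not} hold for a single path but only after this integration, i.e.\ $\int_{\mathcal{E}_n}d\epsilon\,\E\langle\|\bQ-\E\langle\bQ\rangle\|^2\rangle$ is what vanishes. Second, the change of variables is not ``$t\leftrightarrow S(t)$'' as you write: for each fixed $t$ one changes $\epsilon\mapsto R(t,\epsilon)$, and the divergence computation in Appendix~\ref{app:divergence} is used through Liouville's formula to show $\det J_{R(t,\cdot)}(\epsilon)\ge 1$, so that this change of variables does not inflate the concentration integral. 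Third, your explanation of where evenness of $p$ enters is off: it is \emph{not} the sign of the penalty $\sum(S^{\circ p})_{\ell\ell'}$ in $\phi_p$, but rather the factor $\E[\langle Q_{\ell\ell'}\rangle]^{p-2}$ in the divergence formula~\eqref{divergence_Fn}, whose nonnegativity for even $p$ (together with $\Delta_{\ell\ell'}\ge 0$) yields the Jacobian bound. Without the $\epsilon$-averaging and the correct $\epsilon\to R$ change of variables, the concentration step has no leverage and the argument does not close.
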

\noindent{\textbf{Important remark:}} Note that the assumption on $P_X$ in Theorem~\ref{th:upperbound} is stricter than the one in Theorem~\ref{th:mutual_info}.
Therefore, combining Theorem~\ref{th:lowerbound} and Theorem~\ref{th:upperbound} only proves the limit \eqref{formula_mutual_info} for a distribution $P_X$ which has bounded support.
The generalization to a distribution $P_X$ whose first $2p$ moments are finite is done by approaching $P_X$ with distributions having bounded support, much as it is done in \cite[Section 6.2.2]{Lelarge2019Fundamental}
\section{Adaptive path interpolation}\label{sec:Adaptive_path_interpolation}
We introduce a \textit{time} parameter $t \in [0,1]$.
The adaptive path interpolation interpolates from the original channel \eqref{obs_model} at $t=0$ to decoupled channels at $t=1$.
%In between, we follow an interpolation path parametrized by the interpolation function $R(\cdot,\epsilon): [0,1] \to \mathcal{S}_K^{+}$, which is continuously differentiable and depends on a ``small'' perturbation matrix $\epsilon \in \mathcal{S}_K^{+}$, and is such that $R(0,\epsilon)=\epsilon$.
In between, we follow an interpolation path $R(\cdot,\epsilon): [0,1] \to \mathcal{S}_K^{+}$, which is a continuously differentiable function parametrized by a \textit{small perturbation} $\epsilon \in \mathcal{S}_K^{+}$ and such that $R(0,\epsilon)=\epsilon$.
More precisely, for $t \in [0,1]$, we observe:
\begin{align}\label{interpolation_model}
\begin{cases}
Y_{i}^{(t)} \;\;= \sqrt{\frac{(1-t)(p-1)!}{n^{p-1}}} \sum\limits_{k=1}^{K}\prod\limits_{a=1}^p X_{i_a k} + Z_{i}\:,\quad i \in \mathcal{I}\,;\\
\widetilde{Y}_{j}^{(t,\epsilon)} = \sqrt{R(t,\epsilon)}\, X_j + \widetilde{Z}_j\,, \qquad\qquad\quad\;\;\, j \in [n].
\end{cases}
\end{align}
The noise $\widetilde{Z}_j \iid \cN(0,I_K)$ is independent of both $\bX$ and $\bZ$.
Let $\widetilde{\bZ}$ be the $n \times K$ matrix whose $j$\textsuperscript{th} row is given by $\widetilde{Z}_j$ and $\widetilde{\bY}^{(t,\epsilon)} \coloneqq \sqrt{R(t,\epsilon)}\, \bX^T + \widetilde{\bZ}^T$.
The associated interpolating Hamiltonian reads:
\begin{multline}\label{interpolating_hamiltonian}
\cH_{t,\epsilon}(\bx ; \bY^{(t)}, \widetilde{\bY}^{(t,\epsilon)})
\coloneqq \sum_{i \in \mathcal{I}}
\frac{(1-t) (p-1)!}{2n^{p-1}} \bigg(\sum_{k=1}^{K}\prod_{a=1}^p x_{i_a k}\bigg)^{\! 2}
-\sum_{i \in \mathcal{I}}\sqrt{\frac{(1-t) (p-1)!}{n^{p-1}}} Y_{i}^{(t)}  \sum_{k=1}^{K} \prod_{a=1}^p x_{i_a k}\\
+\sum_{j=1}^{n} \frac{1}{2} x_j^T R(t,\epsilon) x_j- \big(\widetilde{Y}_j^{(t,\epsilon)}\big)^T  \sqrt{R(t,\epsilon)} x_j \:.
\end{multline}
The interpolating free entropy is defined similarly to the original free entropy \eqref{free_entropy}, that is,
\begin{equation}\label{interpolating_free_entropy}
f_n(t,\epsilon) \coloneqq \frac1n \E \ln \cZ_{t,\epsilon}(\bY^{(t)},\widetilde{\bY}^{(t,\epsilon)})
\end{equation}
with $\cZ_{t,\epsilon}(\bY^{(t)},\widetilde{\bY}^{(t,\epsilon)}) \coloneqq \int e^{-\cH_{t,\epsilon}(\bx ; \bY^{(t)},\widetilde{\bY}^{(t,\epsilon)})} \prod_{j = 1}^n dP_{X}(x_j)$.
Evaluating \eqref{interpolating_free_entropy} at both extremes gives:
\begin{align}\label{extremes_interpolating_free_entropy}
\begin{cases}
f_n(0,\epsilon) = f_n + \mathcal{O}(\Vert \epsilon \Vert) \,;\\
f_n(1,\epsilon) = \psi(R(1,\epsilon)) \;.
\end{cases}
\end{align}
$\Vert \cdot \Vert$ denotes the Frobenius norm and $\mathcal{O}(\Vert \epsilon \Vert)$ is a quantity such that $\vert \mathcal{O}(\Vert \epsilon \Vert)\vert \leq \nicefrac{\mathrm{Tr}(\Sigma_{X})\Vert \epsilon \Vert}{2}$.
In order to deal with future computations, it is useful to introduce the Gibbs brackets $\langle - \rangle_{t,\epsilon}$
that denote an expectation with respect to the posterior distribution, i.e., 
\begin{equation}
\langle g(\bx) \rangle_{t,\epsilon} = \int \!g(\bx)\, \frac{e^{-\cH_{t,\epsilon}(\bx ; \bY^{(t)},\widetilde{\bY}^{(t,\epsilon)})}}{\cZ_{t,\epsilon}(\bY^{(t)},\widetilde{\bY}^{(t,\epsilon)}) }
\,\prod_{j=1}^n dP_{X}(x_j) \;.
\end{equation}
%Combining \eqref{extremes_interpolating_free_entropy} with the fundamental theorem of calculus
%\begin{align}
% f_n(0,\epsilon)=f_n(1,\epsilon)-\int_{0}^{1}f_n^{\prime}(t,\epsilon) dt\ , 
%\end{align}
%$f_n^{\prime}(\cdot,\epsilon)$ being the $t$-derivative of $f_n(\cdot,\epsilon)$, we obtain the sum-rule of the adaptive interpolation.
Combining \eqref{extremes_interpolating_free_entropy} with the fundamental theorem of calculus
$f_n(0,\epsilon)=f_n(1,\epsilon)-\int_{0}^{1}f_n^{\prime}(t,\epsilon) dt$,
%$f_n^{\prime}(\cdot,\epsilon)$ being the $t$-derivative of $f_n(\cdot,\epsilon)$,
we obtain the sum-rule of the adaptive path interpolation.
\begin{proposition}[Sum-rule]\label{prop:sum_rule}
Assume $P_X$ has finite $(2p)$\textsuperscript{th}-order moments.
Denote $R^{\prime}(\cdot,\epsilon)$ the derivative of the interpolation path $R(\cdot,\epsilon)$.
Let $Q_{\ell \ell'} \coloneqq \frac{1}{n}\sum_{j=1}^{n} x_{j \ell} X_{j \ell'}$ be 
the entries of the $K \times K$ overlap matrix $\bQ \coloneqq \frac1n\bx^\intercal\bX$.
Then
\begin{equation}\label{sum_rule}
f_n = \mathcal{O}(\Vert \epsilon \Vert) + \mathcal{O}(n^{-1}) + \psi(R(1,\epsilon))
	+ \frac{1}{2p} \int_{0}^{1} dt \sum_{\ell,\ell'=1}^K
	\E \langle ( Q_{\ell \ell'})^{p} \rangle_{t,\epsilon} - p (R^{\prime}(t,\epsilon))_{\ell \ell'} \E \langle Q_{\ell \ell'}\rangle_{t,\epsilon} \,,
\end{equation}
where $\mathcal{O}(n^{-1})$ and $\mathcal{O}(\Vert \epsilon \Vert)$ are independent of $\epsilon$ and $n$, respectively.
\end{proposition}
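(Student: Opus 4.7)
My plan is to apply the fundamental theorem of calculus to the interpolating free entropy, writing $f_n(0,\epsilon)=f_n(1,\epsilon)-\int_{0}^{1} f_n'(t,\epsilon)\,dt$ and plugging in the boundary values \eqref{extremes_interpolating_free_entropy}. This immediately reduces the proposition to establishing, with an $\mathcal{O}(n^{-1})$ remainder uniform in $t\in[0,1]$ and $\epsilon$, the $t$-wise identity
\[
-f_n'(t,\epsilon)
=\frac{1}{2p}\sum_{\ell,\ell'=1}^K\E\langle Q_{\ell\ell'}^p\rangle_{t,\epsilon}
-\frac{1}{2}\sum_{\ell,\ell'=1}^K (R'(t,\epsilon))_{\ell\ell'}\,\E\langle Q_{\ell\ell'}\rangle_{t,\epsilon}+\mathcal{O}(n^{-1}).
\]

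To produce this identity, I would first substitute the explicit definitions of $Y^{(t)}$ and $\widetilde{Y}^{(t,\epsilon)}$ from \eqref{interpolation_model} into the interpolating Hamiltonian \eqref{interpolating_hamiltonian}, so that the only randomness is carried by $\bX$, $\bZ$, $\widetilde{\bZ}$ and the whole $t$-dependence is explicit through the coefficients $(1-t)$, $\sqrt{1-t}$ and through the path $R(t,\epsilon)$. Differentiating and using the standard relation $f_n'(t,\epsilon)=-\frac{1}{n}\E\langle \partial_t\cH_{t,\epsilon}\rangle_{t,\epsilon}$ splits the derivative into a tensor-channel contribution and a side-channel contribution. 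Each contains one awkward term in which the noise ($Z_i$ or $\widetilde{Z}_j$) is multiplied by $1/\sqrt{1-t}$ or by $\partial_t\sqrt{R(t,\epsilon)}$; I would handle it by Gaussian integration by parts (Stein's lemma), which produces thermal variances of the form $\E\langle S_i(\bx)^2\rangle-\E\langle S_i(\bx)\rangle^2$ (tensor part, with $S_i(\bx):=\sum_k \prod_a x_{i_a k}$) and a $K\times K$ analogue for the side channel involving $(\partial_t\sqrt{R})\sqrt{R}$, whose symmetric part is exactly $\tfrac12 R'(t,\epsilon)$ because $R'=(\partial_t\sqrt{R})\sqrt{R}+\sqrt{R}(\partial_t\sqrt{R})$. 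The $\E\langle S_i(\bx)^2\rangle$ (resp.\ $\E\langle x_{j\ell}x_{jm}\rangle$) piece coming from Stein's lemma cancels the square-completion term already present in $\partial_t\cH_{t,\epsilon}$, leaving only $\E\langle S_i(\bx)\rangle^2$ and $\E[\langle x_{j\ell}\rangle\langle x_{jm}\rangle]$-type expressions, plus the signal cross-terms from $\bX$.

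Next, I would invoke the Nishimori identity, valid since the posterior is matched to the prior, to turn $\E\langle S_i(\bx)\rangle^2=\E\langle S_i(\bx)S_i(\bx')\rangle$ (with $\bx'$ an independent replica) into $\E\langle S_i(\bx)S_i(\bX)\rangle$, and similarly $\E[\langle x_{j\ell}\rangle\langle x_{jm}\rangle]=\E\langle x_{j\ell}X_{jm}\rangle$. The key combinatorial step for the tensor part is then
\[
\sum_{i\in\mathcal{I}}\prod_{a=1}^p x_{i_a\ell}X_{i_a\ell'}
=\frac{1}{p!}\Big(\sum_{j=1}^n x_{j\ell}X_{j\ell'}\Big)^{\!p}+\text{(diagonal)}
=\frac{n^p}{p!}\,Q_{\ell\ell'}^p\cdot n^0+\text{(diagonal)},
\]
since $|\mathcal{I}|=n^p/p!+\mathcal{O}(n^{p-1})$ and the off-diagonal ordered multi-indices fill the unrestricted sum up to $p!$. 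Combined with the prefactor $(p-1)!/n^{p-1}$ and the overall $1/n$, this yields $\frac{1}{2p}\sum_{\ell,\ell'}\E\langle Q_{\ell\ell'}^p\rangle_{t,\epsilon}$. The side-channel contribution, which requires no combinatorial manipulation since the sum runs over $j\in[n]$, produces $\frac{1}{2}\sum_{\ell,\ell'}(R'(t,\epsilon))_{\ell\ell'}\E\langle Q_{\ell\ell'}\rangle_{t,\epsilon}$ with the correct sign, after using symmetry of $R'(t,\epsilon)$ to identify the sum with $\E\langle Q_{\ell\ell'}\rangle_{t,\epsilon}$ (the overlap $Q$ need not be symmetric, but $R'$ is).

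The main technical obstacle is controlling the diagonal corrections. There are at most $\mathcal{O}(n^{p-1})$ multi-indices $i\in\mathcal{I}$ with two or more coinciding entries, and for each such $i$ the thermal average $\E\langle S_i(\bx)S_i(\bX)\rangle$ is bounded, via H\"older's inequality followed by Nishimori (to replace replicas by the planted signal), by a product of $(2p)$\textsuperscript{th} moments of $P_X$; this is precisely where the hypothesis of finite $(2p)$\textsuperscript{th} moments enters. After multiplication by $(p-1)!/n^{p-1}\cdot n^{-1}$ the total diagonal contribution is $\mathcal{O}(n^{-1})$ uniformly in $t$ and $\epsilon$. An analogous, simpler estimate using only second moments of $P_X$ controls the boundary error $f_n(0,\epsilon)-f_n=\mathcal{O}(\Vert\epsilon\Vert)$. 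Collecting the two contributions and integrating over $t\in[0,1]$ yields \eqref{sum_rule}.
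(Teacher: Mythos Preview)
Your proposal is correct and follows essentially the same route as the paper: fundamental theorem of calculus on $f_n(\cdot,\epsilon)$, Gaussian integration by parts on $Z_i$ and $\widetilde Z_j$, the symmetrization identity $v^T(\partial_t\sqrt{R})\sqrt{R}\,v=\tfrac12 v^T R' v$, the Nishimori identity to reduce replica products to signal--replica products, and the combinatorial step $\sum_{i\in\mathcal I}\prod_a x_{i_a\ell}X_{i_a\ell'}=\tfrac{n^p}{p!}Q_{\ell\ell'}^p+\mathcal O(n^{p-1})$ controlled by the $(2p)$\textsuperscript{th}-moment hypothesis. The only cosmetic difference is that the paper keeps $(\bY^{(t)},\widetilde\bY^{(t,\epsilon)})$ as observation variables when differentiating, obtaining two terms $T_1,T_2$ (the second vanishing by Nishimori), whereas you substitute the noise first and differentiate a single expression; the two computations are equivalent.
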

\begin{proof}
	See Section~\ref{sec:derivative_free_entropy} for the computation of $f_n^{\prime}(t,\epsilon)$, that is, the $t$-derivative of $f_n(\cdot,\epsilon)$.
\end{proof}
\section{Matching bounds}\label{sec:matching_bounds}
In this section we prove both Theorems \ref{th:lowerbound} and \ref{th:upperbound} by plugging two different choices for $R(\cdot,\epsilon)$ in the sum-rule \eqref{sum_rule}.
\subsection{Lower bound: proof of Theorem~\ref{th:lowerbound}}
A lower bound on $f_n$ is obtained by choosing the interpolation function $R(t,0)=tS^{\circ(p-1)}$ with $S$ a $K \times K$ symmetric positive semidefinite matrix, i.e., $\epsilon=0$ and $R^{\prime}(t,\epsilon)=S^{\circ (p-1)}$. Then the sum-rule \eqref{sum_rule} reads
\begin{equation}\label{sum_rule_lowerbound}
f_n = \mathcal{O}(n^{-1}) + \phi_p(S)
+ \frac{1}{2p} \int_{0}^{1} dt \sum_{\ell,\ell'=1}^K \E \big\langle h_p(S_{\ell \ell'}, Q_{\ell \ell'}) \big\rangle_{t,0} \quad,
\end{equation}
where $h_p(r,q) \coloneqq q^p - p q r^{p-1} + (p-1)r^p$. If $p$ is even then $h_p$ is nonnegative on $\mathbb{R}^2$ and \eqref{sum_rule_lowerbound} directly implies $f_n \geq \phi_p(S) + \mathcal{O}(n^{-1})$.
Taking the inferior limit on both sides of this inequality, and bearing in mind that the inequality is valid for all $S \in \mathcal{S}_K^+$, ends the proof of Theorem~\ref{th:lowerbound}. \hfill\ensuremath{\square}

We have at our disposal a wealth of interpolation paths when considering any continuously differentiable $R(\cdot,\epsilon)$.
However, to establish the lower bound \eqref{liminf}, we have used a simple linear 
interpolation, i.e., $R'(t,\epsilon)=S^{\circ (p-1)}$.
Such an interpolation dates back to Guerra \cite{Guerra2002Thermodynamic} and was already used by \cite{Lesieur2017Statistical,Lelarge2019Fundamental} to derive the lower bound \eqref{liminf} for both cases $K=1$, any order $p$, and $p=2$, any finite rank $K$.
Now that we turn to the proof of the upper bound \eqref{limsup}, we will see how the flexibility in the choice of $R(\cdot,\epsilon)$ constitutes an improvement on the classical interpolation.

\subsection{Upper bound: proof of Theorem \ref{th:upperbound}}
\subsubsection{Interpolation determined by an ordinary differential equation (ODE)}
The sum-rule~\eqref{sum_rule} suggests to pick an interpolation path satisfying
\begin{equation}\label{choice_interpolation_upperbound}
\forall (\ell,\ell') \in \{1,\dots,K\}^2 \! : (R^{\prime}(t,\epsilon))_{\ell \ell'}=\E[\langle Q_{\ell \ell'}\rangle_{t,\epsilon}]^{p-1}.
\end{equation}
The integral in \eqref{sum_rule} can then be split in two terms:
one similar to the second summand in \eqref{potential_phi_p},
and one that will vanish in the high-dimensional limit if the overlap concentrates.
The next proposition states that \eqref{choice_interpolation_upperbound} indeed admits a solution, a fact which is not obvious because the Gibbs brackets $\langle - \rangle_{t,\epsilon}$ themselves depend on $R(\cdot,\epsilon)$.
Nontrivial properties required to show the upper bound \eqref{limsup} are also proved.
\begin{proposition}\label{prop:ode_and_properties}
For all $\epsilon \in \mathcal{S}_K^+$, there exists a unique global solution $R(\cdot,\epsilon): [0,1] \to \mathcal{S}_K^+$ to the  first-order ODE 
\begin{equation*}
\forall \,t \in [0,1]: \frac{d R(t)}{dt}=\E[\langle \bQ \rangle_{t,\epsilon}]^{\circ(p-1)}\,,\, R(0)=\epsilon\,.
\end{equation*}
This solution is continuously differentiable and bounded.
If $p$ is even then $\forall \,t \in [0,1]$, $R(t,\cdot)$ is a $\mathcal{C}^1$-diffeomorphism from $\mathcal{S}_K^{++}$ (the open cone of $K \times K$ symmetric positive definite matrices) into $R(t,\mathcal{S}_K^{++})$ whose Jacobian determinant is greater than one, i.e.,
\begin{equation}\label{jacobian_greater_one}
	\forall \,\epsilon \in \mathcal{S}_K^{++}: \det J_{R(t,\cdot)}(\epsilon) \geq 1 \,.
\end{equation}
Here $J_{R(t,\cdot)}$ denotes the Jacobian matrix of $R(t,\cdot)$.
\end{proposition}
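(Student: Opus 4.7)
Write $F(t,R) := \E[\langle \bQ\rangle_{t,R}]^{\circ(p-1)}$ for the right-hand side of the ODE, where $\langle\cdot\rangle_{t,R}$ is the Gibbs bracket associated to the Hamiltonian \eqref{interpolating_hamiltonian} with $R(t,\epsilon)$ replaced by $R$. The plan proceeds in three steps.

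First I would establish global existence, uniqueness, and that $R(t,\epsilon)\in \mathcal{S}_K^+$. On $\mathcal{S}_K^{++}$ the $R$-dependent part of the Hamiltonian involves only the $\mathcal{C}^\infty$ maps $R\mapsto R$ and $R\mapsto \sqrt{R}$; dominated convergence together with the bounded-support assumption on $P_X$ (which gives uniform control of $\bQ$ and its $R$-derivatives) shows that $F$ is continuous in $t$ and locally Lipschitz (in fact $\mathcal{C}^1$) in $R$. On the full closed cone one first integrates out $\widetilde{\bZ}$ to remove the $\sqrt{R}$ nonsmoothness at rank-deficient points. Cauchy-Lipschitz then yields a unique maximal solution, and the uniform bound $\|F(t,R)\|\leq C(P_X)$ prevents blow-up, so the solution extends to $[0,1]$. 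To see $R(t)\in \mathcal{S}_K^+$, I would invoke the Nishimori identity to rewrite $\E\langle Q_{\ell\ell'}\rangle_{t,R}=\frac1n\sum_j \E[\langle x_{j\ell}\rangle_{t,R}\langle x_{j\ell'}\rangle_{t,R}]$, which exhibits $\E\langle\bQ\rangle$ as a sum of rank-one PSD matrices; the Schur product theorem then gives $F(t,R)\in \mathcal{S}_K^+$, and therefore $R(t)=\epsilon+\int_0^t F(s,R(s))\,ds$ is a sum of PSD matrices.

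Second, the $\mathcal{C}^1$-dependence on initial conditions is classical once $F$ is $\mathcal{C}^1$ in $R$: the flow $\epsilon\mapsto R(t,\epsilon)$ is $\mathcal{C}^1$ on $\mathcal{S}_K^{++}$, and its Jacobian $J(t):=J_{R(t,\cdot)}(\epsilon)$ solves the variational equation $J'(t)=D_R F(t,R(t,\epsilon))\,J(t)$, $J(0)=I$. Jacobi's formula then yields
\[
\det J_{R(t,\cdot)}(\epsilon) \;=\; \exp\!\Big(\!\int_0^t \mathrm{tr}\big(D_R F(s,R(s,\epsilon))\big)\, ds\Big).
\]
Injectivity of $R(t,\cdot)$ follows by running the ODE backward from time $t$ to $0$ (uniqueness), and combined with $\det J\neq 0$ the inverse function theorem gives smoothness of the inverse, so $R(t,\cdot)$ is a $\mathcal{C}^1$-diffeomorphism onto its image. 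The bound \eqref{jacobian_greater_one} then reduces to the single inequality $\mathrm{tr}(D_R F(t,R))\geq 0$.

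Third, I would prove this trace positivity. Writing $F_{\ell\ell'}(R)=G_{\ell\ell'}(R)^{p-1}$ with $G(R):=\E\langle\bQ\rangle_{t,R}$, the chain rule gives
\[
\frac{\partial F_{\ell\ell'}}{\partial R_{\ell\ell'}} \;=\; (p-1)\, G_{\ell\ell'}(R)^{p-2}\, \frac{\partial G_{\ell\ell'}(R)}{\partial R_{\ell\ell'}}.
\]
For $p$ even, $p-2$ is even so the Hadamard factor $G_{\ell\ell'}^{\,p-2}\geq 0$ entrywise. The other factor is a diagonal entry of the Hessian of $f_n(t,\cdot)$ in $R$; by differentiating $\E\langle Q_{\ell\ell'}\rangle$ under the integral (using Gaussian integration by parts on the $\sqrt{R}$ piece) and applying Nishimori, this derivative is the variance of an overlap-like quantity and is hence nonnegative. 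Summing nonnegative contributions over $\ell\leq\ell'$ gives $\mathrm{tr}(D_R F)\geq 0$, and Liouville then delivers $\det J_{R(t,\cdot)}(\epsilon)\geq 1$. The main obstacle is precisely this last step: in rank one it collapses to monotonicity of the scalar map $r\mapsto (\E\langle q\rangle_{t,r})^{p-1}$, but in the matrix case the decomposition into a Hadamard factor (sign-controlled only because $p$ is even, so $p-2$ is even) times a Nishimori variance is what makes the argument work. This is also why odd $p$ breaks the argument: when $p-2$ is odd, $G_{\ell\ell'}^{\,p-2}$ can be negative and the per-term sign control is lost.
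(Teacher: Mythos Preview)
Your plan's architecture---Cauchy--Lipschitz for existence and uniqueness, Nishimori plus Schur for positive semidefiniteness of the flow, smooth dependence on initial data, Liouville's formula reducing $\det J\ge 1$ to nonnegativity of the divergence, and the chain-rule factorisation into $(p-1)G_{\ell\ell'}^{\,p-2}$ times $\partial_{R_{\ell\ell'}}G_{\ell\ell'}$---is exactly the paper's. The gap is in the last sentence of your third step.

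You assert that after Gaussian integration by parts and Nishimori, $\partial_{R_{\ell\ell'}}\E\langle Q_{\ell\ell'}\rangle$ ``is the variance of an overlap-like quantity and is hence nonnegative''. It is not a single variance. The explicit computation (carried out in the paper's appendix on the divergence of $G_n$) gives, up to a positive factor,
\[
\Delta_{\ell\ell'}
\;=\;\E\Big\langle\big(\bar Q_{\ell\ell'}-\langle\bar Q_{\ell\ell'}\rangle\big)^2\Big\rangle
\;-\;\E\Big[\Big(\langle\bar Q_{\ell\ell'}\rangle-\tfrac{1}{n}(\langle\bx\rangle^{\!T}\langle\bx\rangle)_{\ell\ell'}\Big)^{2}\Big],
\qquad \bar Q_{\ell\ell'}:=\tfrac12(Q_{\ell\ell'}+Q_{\ell'\ell}),
\]
a \emph{difference} of two nonnegative terms. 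Proving $\Delta_{\ell\ell'}\ge 0$ is precisely the step the paper flags as new for $K>1$: one rewrites the second expectation as $\E\big\langle\tfrac{1}{2n}(\bx^{\!T}\bX+\bX^{\!T}\bx-\langle\bx\rangle^{\!T}\bx-\bx^{\!T}\langle\bx\rangle)_{\ell\ell'}\big\rangle^2$, applies Jensen to push the square inside the Gibbs bracket, and then uses the Nishimori identity to swap $\bX$ with a fresh replica, which turns the bound into the first expectation. Your one-line ``it's a variance'' skips exactly this. The alternative you hint at---that $\partial_{R_{\ell\ell'}}G_{\ell\ell'}$ is, up to a positive constant, a diagonal Hessian entry of $R\mapsto f_n(t,R)$, so nonnegativity would follow from convexity of the interpolating free entropy in $R$---is valid, but that convexity is not free either: establishing it requires essentially the same second-derivative computation (Gaussian IBP plus Nishimori, as done for $\psi$ in the paper's Appendix~\ref{app:properties_psi}) adapted to the full interpolating model with the tensor channel present.
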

\begin{proof}
We now rewrite \eqref{choice_interpolation_upperbound} explicitly as an ODE. Let $R$ be a matrix in $\mathcal{S}_K^{+}$. Consider the problem of inferring $\bX$ from the following observations:
\begin{align}\label{inference_problem_t_R}
\begin{cases}
Y_{i}^{(t)} &= \sqrt{\frac{(1-t)(p-1)!}{n^{p-1}}} \sum\limits_{k=1}^{K}\prod\limits_{a=1}^p X_{i_a k} + Z_{i}\:,\quad i \in \mathcal{I}\,;\\
\widetilde{Y}_{j}^{(t,R)} &= \sqrt{R}\, X_j + \widetilde{Z}_j\:, \qquad\qquad\qquad\quad\;\;\;\: j \in [n].
\end{cases}
\end{align}
It is reminiscent of the interpolating problem~\eqref{interpolation_model}.
We can form a Hamiltonian similar to~\eqref{interpolating_hamiltonian}, where $R(t,\epsilon)$ is simply replaced by $R$, and $\langle - \rangle_{t,R}$ are the Gibbs brackets associated to the posterior of this model.
We define the function
\begin{equation}
G_n:
\begin{array}{ccl}
		[0,1] \times \mathcal{S}_K^+ &\to& \mathcal{S}_K^+\\
(t,R) &\mapsto& \E[\langle \bQ \rangle_{t,R}]^{\circ(p-1)}
\end{array}\,.
\end{equation}
Note that $\E\langle \bQ \rangle_{t,R}$ is a symmetric positive semidefinite matrix.
Indeed, from the Nishimori identity\footnote{
The  Nishimori  identity  is  a  direct  consequence  of  the  Bayes  formula.
In our setting, it states $\E\langle g(\bx,\bX)\rangle_{t,R} = \E\langle g(\bx,\bx') \rangle_{t,R} = \E\langle g(\bX,\bx) \rangle_{t,R}$ where $\bx,\bx'$ are two samples drawn independently from the posterior distribution given $\bY^{(t)}$, $\widetilde{\bY}^{(t,R)}$.
Here $g$ can also explicitly depend on $\bY^{(t)}$, $\widetilde{\bY}^{(t,R)}$.
},
$\E\langle \bQ \rangle_{t,R} = n^{-1} \E[\langle \bx \rangle_{t,R}^\intercal \bX]
= n^{-1}\E[\langle \bx \rangle_{t,R}^\intercal \langle \bx \rangle_{t,R}]$.
By the Schur Product Theorem \cite{Schur1911Bemerkungen}, the Hadamard power $\E[\langle \bQ \rangle_{t,R}]^{\circ(p-1)}$ also belongs to $\mathcal{S}_K^+$, justifying that $G_n$ takes values in the cone of symmetric positive semidefinite matrices.
$G_n$ is continusouly differentiable on $[0,1] \times \mathcal{S}_K^+$.
By the Cauchy-Lipschitz theorem, there exists a unique global solution $R(\cdot,\epsilon)$ to the $K(K+1)/2$-dimensional ODE:
\begin{equation}\label{ODE}
		\forall t \in [0,1]:\;\frac{d R(t)}{dt} = G_n(t,R(t)) \;,\; R(0)=\epsilon \in S_K^{+} \;.
\end{equation}
Each initial condition $\epsilon \in S_K^{+}$ is tied to a unique solution $R(\cdot,\epsilon)$. This implies that the function $\epsilon \mapsto R(t,\epsilon)$ is injective. Its Jacobian determinant is given by Liouville's formula \cite{Hartman2002Ordinary}:
\begin{equation}\label{liouville_formula}
\det J_{R(t,\cdot)}(\epsilon)
= \exp \int_0^t ds \!\!\sum_{1\leq \ell \leq \ell' \leq K} \frac{\partial (G_n)_{\ell \ell'}}{\partial R_{\ell \ell'}}\bigg\vert_{s,R(s,\epsilon)} \,.
\end{equation}
Thanks to the identity \eqref{liouville_formula}, we can show that the Jacobian determinant is greater than (or equal to) one by proving that the divergence 
$$
\sum_{1\leq \ell \leq \ell' \leq K} \frac{\partial (G_n)_{\ell \ell'}}{\partial R_{\ell \ell'}}\bigg\vert_{s,R(s, \epsilon)}
$$
is nonnegative for all $(s,R) \in [0,1] \times \mathcal{S}_K^{+}$.
By Lemma~\ref{lemma:divergence_Fn} in Appendix \ref{app:divergence}, the divergence reads (we omit the subscripts of the Gibbs brackets $\langle - \rangle_{t,R}$):
\begin{equation}\label{divergence_Fn}
\sum_{\ell \leq \ell'} \frac{\partial (G_n)_{\ell \ell'}}{\partial R_{\ell \ell'}}\bigg\vert_{t,R}
= n(p-1)\sum_{\ell,\ell'} \E[\langle Q_{\ell \ell'} \rangle\,]^{p-2}\,\Delta_{\ell\ell'} \;,
\end{equation}
where
\begin{equation}\label{def_delta}
\Delta_{\ell \ell'} \coloneqq
\E\bigg[\bigg\langle \bigg(\frac{Q_{\ell \ell'} + Q_{\ell' \ell}}{2} - \bigg\langle \frac{Q_{\ell \ell'} + Q_{\ell' \ell}}{2} \bigg\rangle\bigg)^{\!\! 2} \,\bigg\rangle\bigg]
-\E\bigg[\bigg(\bigg\langle\frac{Q_{\ell \ell'} + Q_{\ell' \ell}}{2}\bigg\rangle
\!\!- \frac{(\langle\bx\rangle^T \langle\bx\rangle)_{\ell \ell'}}{n} \bigg)^{\!\! 2}\,\bigg].
\end{equation}
If $p$ is even then $\E[\langle Q_{\ell \ell'} \rangle_{t,R}]^{p-2}$ is nonnegative.
We show next that the $\Delta_{\ell \ell'}$'s are nonnegative, thus ending the proof of \eqref{jacobian_greater_one}.
The second expectation on the right-hand side (r.h.s.) of \eqref{def_delta} satisfies:
\begin{align*}
\E\,\bigg(\bigg\langle\frac{Q_{\ell\ell'} + Q_{\ell'\ell}}{2}\bigg\rangle - \frac{(\langle\bx\rangle^\intercal \langle\bx\rangle)_{\ell \ell'}}{n} \bigg)^{\!\! 2}
	&= \E\,\bigg\langle\frac{(\bx^\intercal\bX + \bX^\intercal\bx)_{\ell \ell'}}{2n} - \frac{(\langle\bx\rangle^\intercal \bx + \bx^\intercal \langle\bx\rangle)_{\ell \ell'}}{2n}\bigg\rangle^{\!\! 2}\\
	&\leq \E\,\bigg\langle\bigg(\frac{(\bx^\intercal\bX + \bX^\intercal\bx)_{\ell\ell'}}{2n} - \frac{(\langle\bx\rangle^\intercal \bx + \bx^\intercal \langle\bx\rangle)_{\ell\ell'}}{2n}\bigg)^{\!\! 2}\,\bigg\rangle\\
	&= \E\,\bigg\langle\Big(\frac{(\bX^\intercal\bx + \bx^\intercal\bX)_{\ell\ell'}}{2n}
	- \frac{(\langle\bx\rangle^\intercal \bX + \bX^\intercal \langle\bx\rangle)_{\ell \ell'}}{2n}\Big)^{\!\! 2}\,\bigg\rangle \\
	&= \E\,\bigg\langle\bigg(\frac{Q_{\ell' \ell}+Q_{\ell \ell'}}{2} - \bigg\langle\frac{ Q_{\ell \ell'}+Q_{\ell'\ell}}{2}\bigg\rangle\bigg)^{\!\! 2}\,\bigg\rangle\,.
\end{align*}
The inequality is a simple application of Jensen's inequality, while the equality that follows is an application of the Nishimori identity.
The final upper bound is nothing but the first expectation on the r.h.s. of \eqref{def_delta}.
Therefore, $\forall (\ell,\ell') \in \{1,\dots,K\}^2: \Delta_{\ell\ell'} \geq 0$.
\end{proof}
\subsubsection{Proof of Theorem~\ref{th:upperbound}}
Let $\epsilon$ be a symmetric positive definite matrix, i.e., $\epsilon \in \mathcal{S}_K^{++}$.
We interpolate with the unique solution $R(\cdot,\epsilon): [0,1] \mapsto \mathcal{S}_K^{++}$ to \eqref{choice_interpolation_upperbound}.
The sum-rule \eqref{sum_rule} then reads:
\begin{multline}
f_n
= \mathcal{O}(\Vert \epsilon \Vert) + \mathcal{O}(n^{-1}) + \psi(R(1,\epsilon))
-\frac{p-1}{2p}\sum_{\ell,\ell'=1}^K \int_{0}^{1} \!dt\, \E[\langle Q_{\ell \ell'} \rangle_{t,\epsilon}]^{p}\\
+ \int_{0}^{1} \frac{dt}{2p}\sum_{\ell,\ell'=1}^K 
\E \big\langle Q_{\ell \ell'}
\big((Q_{\ell \ell'})^{p-1} - \E[\langle Q_{\ell \ell'} \rangle_{t,\epsilon}]^{p-1}\big)\big\rangle_{t,\epsilon}
\:.\label{sum_rule_upperbound}
\end{multline}
Using first the Lipschitz continuity of $\psi$ and then its convexity (see Lemma~\ref{lemma:properties_psi}, Appendix~\ref{app:properties_psi}), it comes:
\begin{align}\label{inequality_psi_R(1,eps)}
\psi(R(1,\epsilon))
= \psi\Big(\epsilon + \int_{0}^{1}dt \,\E[\langle \bQ \rangle_{t,\epsilon}]^{\circ(p-1)}\Big)
&= \mathcal{O}(\Vert \epsilon\Vert) + \psi\Big(\int_{0}^{1}dt \,\E[\langle \bQ \rangle_{t,\epsilon}]^{\circ(p-1)}\Big)\nonumber\\
&\leq \mathcal{O}(\Vert \epsilon\Vert) + \int_{0}^{1}dt \:\psi\Big(\E[\langle \bQ \rangle_{t,\epsilon}]^{\circ(p-1)}\Big)\;,
\end{align}
with $\vert \mathcal{O}(\Vert \epsilon\Vert) \vert \leq \frac{\mathrm{Tr}\,\Sigma_X}{2} \Vert\epsilon\Vert$.
Combining both \eqref{sum_rule_upperbound} and \eqref{inequality_psi_R(1,eps)} directly gives:
\begin{align}
f_n
&\leq \mathcal{O}(n^{-1}) + \mathcal{O}(\Vert \epsilon\Vert) + \int_{0}^{1}dt \,\phi_p\big(\E\,\langle \bQ \rangle_{t,\epsilon}\big)
+ \int_{0}^{1} \frac{dt}{2p} \sum_{\ell,\ell'=1}^K 
\E \big\langle Q_{\ell \ell'}
\big((Q_{\ell \ell'})^{p-1} - \E[\langle Q_{\ell \ell'} \rangle_{t,\epsilon}]^{p-1}\big)\big\rangle_{t,\epsilon}\nonumber\\
&\leq \mathcal{O}(n^{-1}) + \mathcal{O}(\Vert \epsilon\Vert) + \sup_{S \in \mathcal{S}_K^+} \phi_p(S)
+ \int_{0}^{1} \frac{dt}{2p} \sum_{\ell,\ell'=1}^K 
\E \big\langle Q_{\ell \ell'}
\big((Q_{\ell \ell'})^{p-1} - \E[\langle Q_{\ell \ell'} \rangle_{t,\epsilon}]^{p-1}\big)\big\rangle_{t,\epsilon}. \label{fn_upperbound_remainder}
\end{align}
In order to end the proof of \eqref{limsup}, we must show that the last integral term in the upper bound \eqref{fn_upperbound_remainder} vanishes when $n$ goes to infinity.
This will be the case if the overlap matrix $\bQ$ concentrates around its expectation $\E \langle \bQ \rangle_{t,\epsilon}$.
Indeed, provided that the $(4p-4)$\textsuperscript{th}-order moments of $P_X$ are finite, there exists a constant $C_{X}$ depending only on $P_X$ such that
\begin{equation}
\bigg\vert \int_{0}^{1} \frac{dt}{2p} \sum_{\ell,\ell'} 
\E \big\langle Q_{\ell \ell'}
\big((Q_{\ell \ell'})^{p-1}-\E[\langle Q_{\ell \ell'} \rangle_{t,\epsilon}]^{p-1}\big)\big\rangle_{t,\epsilon}\bigg\vert
\leq \frac{C_{X}}{2}
\int_{0}^{1} dt\,
	\E\big[\big\langle \big\Vert \bQ-\E[\langle\bQ\rangle_{t,\epsilon}]\big\Vert^2\,\big\rangle_{ t,\epsilon}\,\big]^{\nicefrac{1}{2}}\,. \label{upperbound_remainder}
\end{equation}
However, proving that the r.h.s. of \eqref{upperbound_remainder} vanishes is only possible after integrating on a well-chosen set of \textit{perturbations} $\epsilon$ (that play the role of initial conditions in the ODE \eqref{ODE}).
In essence, the integration over $\epsilon$ smoothens the phase transitions that might appear for particular choices of $\epsilon$ when $n$ goes to infinity.
We now describe the set of perturbations on which to integrate.

Let $(s_n)_{n \in \mathbb{N}^*}$ be a decreasing sequence of real numbers in $(0,1)$ and define the sequence of subsets: 
\begin{equation}\label{definition_En}
\mathcal{E}_n
\coloneqq \left\{ \epsilon \in \mathbb{R}^{K \times K}\,\middle\vert\!\!
\begin{array}{l}
\forall\, \ell \neq \ell': \epsilon_{\ell \ell'}=\epsilon_{\ell' \ell} \in [s_n,2s_n]\\
\forall\, \ell: \,\epsilon_{\ell \ell} \in  [2Ks_n ,(2K+1)s_n]
\end{array}\!\!\right\}.
\end{equation}
Those are subsets of symmetric strictly diagonally dominant matrices with positive diagonal entries, hence they are included in $\mathcal{S}_K^{++}$ (see \cite[Corollary 7.2.3]{Horn2012Matrix}).
As $\mathcal{E}_n$ is a $K(K+1)/2$-dimensionnal hypercube whose side has length $s_n$, its volume is $V_{\mathcal{E}_n}=s_n^{\nicefrac{K(K+1)}{2}}$.

Remember that, as per Proposition \ref{prop:ode_and_properties}, for every $\epsilon \in \mathcal{E}_n$ the interpolation path is chosen as the unique solution $R(\cdot,\epsilon): [0,1] \mapsto \mathcal{S}_K^{++}$ to $R'(t,\epsilon) = \E[\langle \bQ \rangle_{t,\epsilon}]^{\circ (p-1)}$.
Then, for a fixed $t \in [0,1]$, using Cauchy-Schwarz inequality and the change of variable $\epsilon \to R \equiv R(t,\epsilon)$ -- which is justified because $\epsilon \mapsto R(t,\epsilon)$ is a $\mathcal{C}^1$-diffeomorphism (see Proposition \ref{prop:ode_and_properties}) --, we obtain:
\begin{align}
\frac{1}{V_{\mathcal{E}_n}}\int_{\mathcal{E}_n} d\epsilon\,
\E\big[\big\langle \big\Vert \bQ-\E[\langle\bQ\rangle_{t,\epsilon}]\big\Vert^2\,\big\rangle_{ t,\epsilon}\,\big]^{\!\nicefrac{1}{2}}
&\leq \frac{1}{V_{\mathcal{E}_n}^{\nicefrac{1}{2}}}\,
\bigg(\int_{\mathcal{E}_n} d\epsilon\,\E \big\langle \big\Vert \bQ-\E[\langle\bQ\rangle_{t,\epsilon}]\big\Vert^2\,\big\rangle_{ t,\epsilon} \bigg)^{\!\nicefrac{1}{2}}\nonumber\\
&= 
\bigg(\frac{1}{V_{\mathcal{E}_n}}\, \int_{\mathcal{R}_{n,t}} \frac{dR}{\vert \det J_{R(t,\cdot)}(\epsilon) \vert}
\E \big\langle \big\Vert \bQ-\E[\langle\bQ\rangle_{t,R}]\big\Vert^2\,\big\rangle_{ t,R}\bigg)^{\!\nicefrac{1}{2}}\nonumber\\
&\leq 
\bigg(\frac{1}{V_{\mathcal{E}_n}}\, \int_{\mathcal{R}_{n,t}} dR\:\E\big[\big\langle \big\Vert \bQ-\E[\langle\bQ\rangle_{t,R}]\big\Vert^2\,\big\rangle_{t,R}\,\big]\bigg)^{\! \nicefrac{1}{2}}\,.
\label{upperbound_integral_En/VEn}
\end{align}
We introduced the notation $\mathcal{R}_{n,t} \coloneqq R(t,\mathcal{E}_n)$ while $\langle - \rangle_{t,R}$ are still the Gibbs brackets associated to the posterior distribution of the inference problem \eqref{inference_problem_t_R}.
The last inequality follows from \eqref{jacobian_greater_one}.
It will be easier to work with the convex hulls of $\mathcal{R}_{n,t}$, denoted $\mathrm{C}(\mathcal{R}_{n,t})$.
These convex hulls are uniformly bounded compact sets of $\mathcal{S}_K^{++}$.
Indeed, every $\mathcal{R}_{n,t}$ is compact and included in the convex set
\begin{equation}
\mathcal{B}(\Sigma_{X},K,p) = \big\{S \in \mathcal{S}_K^{++}: \Vert S \Vert \leq 4K^{\nicefrac{3}{2}} + \mathrm{Tr}(\Sigma_X)^{p-1}\big\} \,,
\end{equation}
which does not depend on $n$ and $t$ (see Section~\ref{sec:Barbier2020Overlap}, property (i) of Lemma \ref{lemma:properties_convex_hulls}).
Note that the upper bound \eqref{upperbound_integral_En/VEn}, the inclusion $\mathcal{R}_{n,t} \subseteq \mathrm{C}(\mathcal{R}_{n,t})$ and the nonnegativity of the integrand directly imply:
\begin{equation}\label{upperbound_integral_En}
\frac{1}{V_{\mathcal{E}_n}}\int_{\mathcal{E}_n} d\epsilon\,
\E\big[\big\langle \big\Vert \bQ-\E[\langle\bQ\rangle_{t,\epsilon}]\big\Vert^2\,\big\rangle_{ t,\epsilon}\,\big]^{\!\nicefrac{1}{2}}
\leq 
\bigg(\frac{1}{V_{\mathcal{E}_n}}\, \int_{C(\mathcal{R}_{n,t})} dR\:\E\big[\big\langle \big\Vert \bQ-\E[\langle\bQ\rangle_{t,R}]\big\Vert^2\,\big\rangle_{t,R}\,\big]\bigg)^{\! \nicefrac{1}{2}} \,.
\end{equation}
By Theorem \ref{th:Barbier2020Overlap} in Section~\ref{sec:Barbier2020Overlap}, there exists a positive constant $C$ which depends \textit{only} on $P_X$, $K$ and $p$ such that:
\begin{equation}
\int_{\mathrm{C}(\mathcal{R}_{n,t})} \!\!dR\:\E\big[\big\langle \big\Vert \bQ-\E[\langle\bQ\rangle_{t,R}]\big\Vert^2\,\big\rangle_{t,R} \,\big]
\leq \frac{C}{s_n^{\nicefrac{3}{2}}n^{\nicefrac{1}{6}}}\quad.\label{upperbound_theorem}
\end{equation}
Combining~\eqref{upperbound_remainder},~\eqref{upperbound_integral_En} and \eqref{upperbound_theorem}, we finally get:
\begin{equation}
\bigg\vert \int_{\mathcal{E}_n} \frac{d\epsilon}{V_{\mathcal{E}_n}}
\int_{0}^{1} \frac{dt}{2p} \sum_{\ell,\ell'} 
\E \big\langle Q_{\ell \ell'}
\big((Q_{\ell \ell'})^{p-1}-\E[\langle Q_{\ell \ell'} \rangle_{t,\epsilon}]^{p-1}\big)\big\rangle_{\! t,\epsilon}\bigg\vert
\leq \frac{C_{X}}{2}
\sqrt{\frac{C}{\big(s_n^{9+3K(K+1)}n\big)^{\nicefrac{1}{6}}}}\,. \label{final_bound_remainder}
\end{equation}
To conclude the proof, we have to further constrain $s_n$ to satisfy both $s_n \to 0$ and $s_n^{9+3K(K+1)}n \to +\infty$ when $n \to +\infty$. E.g., $s_n = (0.99/n)^{\alpha}$ with $0<\alpha<(9 + 3K(K+1))^{-1}$ is a valid choice.
Under this constraint, the upper bound \eqref{final_bound_remainder} vanishes in the high-dimensional limit.
Integrating the inequality~\eqref{fn_upperbound_remainder} over $\epsilon \in \mathcal{E}_n$ and, then, making use of the vanishing upper bound \eqref{final_bound_remainder} as well as
\begin{equation*}
\frac1{V_{\mathcal{E}_n}}\int_{\mathcal{E}_n} \!d\epsilon\,\mathcal{O}(\Vert \epsilon\Vert)
\leq \mathcal{O}(1) \,\max_{\epsilon \in \mathcal{E}_n} \Vert \epsilon \Vert
= \mathcal{O}(1) \,s_n = \smallO_{n}(1)\;,
\end{equation*}
give the inequality
$f_n = V_{\mathcal{E}_n}^{-1} \int_{\mathcal{E}_n} d\epsilon\,f_n
\leq \sup_{S \in \mathcal{S}_K^+} \phi_p(S) + \smallO_{n}(1)$.
The upper bound \eqref{limsup} follows simply, thus ending the proof of Theorem~\ref{th:upperbound}.
\hfill\ensuremath{\square}
\section{Time-derivative of the average interpolating free entropy}\label{sec:derivative_free_entropy}
In order to prove the sum-rule in Proposition~\ref{prop:sum_rule}, we need to compute the derivative of the averaged interpolating free entropy \eqref{interpolating_free_entropy} with respect to $t$.
We recall that $R^{\prime}(\cdot,\epsilon)$ denotes the derivative of $R(\cdot,\epsilon)$ and that the overlap matrix is $\bQ = \frac{1}{n} \bx^T \bX \in \mathbb{R}^{K \times K}$, that is,
\begin{equation}
\forall (\ell,\ell') \in \{1,\dots,K\}^2:Q_{\ell \ell'}=\frac{1}{n}\sum_{j=1}^n x_{j\ell} X_{j\ell'}\,.
\end{equation}
\begin{proposition}[Derivative of the average interpolating free entropy]\label{prop:formula_derivative_free_entropy}
Assume that $P_X$ has finite $(2p)$\textsuperscript{th}-order moments.
Consider the average free entropy \eqref{interpolating_free_entropy}. Its derivative with respect to $t$ satisfies:
\begin{equation}\label{formula_derivative_free_entropy}
\forall t \in [0,1]:
f'_{n}(t,\epsilon)
= -\frac{1}{2p} \sum_{\ell,\ell'=1}^{K} 
\E\big[\big\langle \big( Q_{\ell \ell'}\big)^{p} \,\big\rangle_{t,\epsilon} \,\big]
+ \frac{1}{2}\mathrm{Tr}\big(R^{\prime}(t,\epsilon) \,\E\,\langle\bQ\rangle_{t,\epsilon} \big)
+ \mathcal{O}_{n}(n^{-1})\;.
\end{equation}
Here $\mathcal{O}_{n}(n^{-1})$ is a quantity such that $n\mathcal{O}_{n}(n^{-1})$ is bounded uniformly in $n$, $t$ and $\epsilon$.
\end{proposition}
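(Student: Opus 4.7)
The plan is to compute $f_n'(t,\epsilon)$ by differentiating under the expectation, obtaining $f_n'(t,\epsilon)=-\tfrac{1}{n}\E\langle\partial_t\cH_{t,\epsilon}\rangle_{t,\epsilon}$, and then to simplify the resulting expression via Gaussian integration by parts (IBP) and the Nishimori identity. To make the $t$-dependence explicit and avoid difficulties with the square-root factor $\sqrt{1-t}$ multiplying $Y_i^{(t)}$, I would first substitute $Y_i^{(t)}=\sqrt{(1-t)(p-1)!/n^{p-1}}\sum_k\prod_a X_{i_ak}+Z_i$ into the tensor part of the Hamiltonian, and likewise $\widetilde Y_j^{(t,\epsilon)}=\sqrt{R(t,\epsilon)}X_j+\widetilde Z_j$ into the decoupled part, so that the only $t$-dependencies left are a factor $(1-t)$, the matrices $R(t,\epsilon)$ and $\sqrt{R(t,\epsilon)}$, and a remaining $\sqrt{1-t}$ in front of each $Z_i$.

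I would then split $\partial_t\cH_{t,\epsilon}$ into a \emph{tensor} contribution and a \emph{decoupled} contribution and handle them separately. For the tensor part, the non-stochastic pieces produce terms proportional to $\E\langle(\sum_k\prod_a x_{i_ak})^2\rangle$ and to $\E\langle(\sum_k\prod_a X_{i_ak})(\sum_k\prod_a x_{i_ak})\rangle$. The remaining noise term $\tfrac{1}{2\sqrt{1-t}}\sum_i\sqrt{(p-1)!/n^{p-1}}\,\E[Z_i\langle\sum_k\prod_a x_{i_ak}\rangle]$ is treated by Gaussian IBP w.r.t.\ $Z_i$, using that $\partial\cH_{t,\epsilon}/\partial Z_i=-\sqrt{(1-t)(p-1)!/n^{p-1}}\sum_k\prod_a x_{i_ak}$, which produces a Nishimori-friendly $\langle(\sum_k\prod_a x_{i_ak})^2\rangle-\langle\sum_k\prod_a x_{i_ak}\rangle^2$. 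After combining the three tensor pieces, the $\langle(\sum_k\prod_a x_{i_ak})^2\rangle$ contributions cancel, leaving only $\E\langle(\sum_k\prod_a x_{i_ak})(\sum_k\prod_a X_{i_ak})\rangle$ (here I use Nishimori once to replace $\langle\cdot\rangle^2$ by $\langle\cdot\rangle$ with a planted replica). Finally, extending the sum from $\mathcal{I}$ to $[n]^p$ (the difference being $O(n^{p-1})$ non-distinct tuples, whose contribution is bounded uniformly thanks to the $(2p)$-th moment assumption) yields $\sum_{i\in[n]^p}\prod_a x_{i_ak}X_{i_ak'}=n^p Q_{kk'}^p$, and gives exactly $-\tfrac{1}{2p}\sum_{\ell,\ell'}\E\langle Q_{\ell\ell'}^p\rangle+\mathcal{O}_n(n^{-1})$.

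For the decoupled part, differentiating site by site gives $\tfrac12 x_j^T R'(t,\epsilon) x_j - X_j^T R'(t,\epsilon) x_j - \widetilde Z_j^T\bigl(\tfrac{d\sqrt{R}}{dt}\bigr)x_j$. Averaging the middle term produces $\sum_{\ell,\ell'}R'_{\ell\ell'}\E\langle Q_{\ell'\ell}\rangle$. For the noise term I apply Gaussian IBP w.r.t.\ $\widetilde Z_{j,m}$ using $\partial\cH_{t,\epsilon}/\partial\widetilde Z_{j,m}=-(\sqrt{R}x_j)_m$, which yields $\E\langle x_j^T\sqrt{R}\tfrac{d\sqrt{R}}{dt}x_j\rangle-\E[\langle x_j\rangle^T\sqrt{R}\tfrac{d\sqrt{R}}{dt}\langle x_j\rangle]$. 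The key algebraic step is that $\tfrac{d}{dt}(\sqrt{R})^2=R'$ forces $\sqrt{R}\tfrac{d\sqrt{R}}{dt}+\tfrac{d\sqrt{R}}{dt}\sqrt{R}=R'$, and the quadratic forms $x^T\sqrt{R}\tfrac{d\sqrt{R}}{dt}x$ are symmetric (being scalars equal to their transpose), so each equals $\tfrac12 x^T R' x$. This cancels the first $\tfrac12 x_j^T R' x_j$ contribution from $\partial_t\cH^{\text{dec}}$, leaving $-\tfrac{1}{2n}\sum_j\E[\langle x_j\rangle^T R'\langle x_j\rangle]$. A final use of the Nishimori identity (replacing one replica by $X_j$) turns this into $-\tfrac12\sum_{\ell,\ell'}R'_{\ell\ell'}\E\langle Q_{\ell'\ell}\rangle$, and combining with the $+\sum_{\ell,\ell'}R'_{\ell\ell'}\E\langle Q_{\ell'\ell}\rangle$ term yields exactly $\tfrac12\,\mathrm{Tr}(R'(t,\epsilon)\E\langle\bQ\rangle_{t,\epsilon})$, using symmetry of $R'$.

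The two main technical obstacles I anticipate are (i) the algebra around $\tfrac{d\sqrt{R}}{dt}$, which is not explicit but only implicit through $R'$, and is handled by the scalar-transpose trick above; and (ii) the bookkeeping needed to show that the error from passing between $\sum_{\mathcal{I}}$ and $\sum_{[n]^p}$ and from the IBP remainders is genuinely $\mathcal{O}_n(n^{-1})$ uniformly in $t$ and $\epsilon$. For (ii) I would keep explicit track of the counts of non-distinct multi-indices (which are $O(n^{p-1})$) and bound the integrand by a polynomial in the entries of $\bx$ and $\bX$, whose expectation is controlled by the $(2p)$-th moment of $P_X$; this uniformity is needed downstream when the sum rule of Proposition~\ref{prop:sum_rule} is integrated.
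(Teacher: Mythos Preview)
Your proposal is correct and uses essentially the same ingredients as the paper (Gaussian integration by parts, the Nishimori identity, the scalar identity $v^T\sqrt{R}\,\tfrac{d\sqrt{R}}{dt}\,v=\tfrac12 v^T R' v$, and the $\mathcal{I}\to[n]^p$ symmetrization controlled by the $(2p)$\textsuperscript{th}-moment assumption). The only organizational difference is that you substitute $Y_i^{(t)}$ and $\widetilde Y_j^{(t,\epsilon)}$ in terms of signal plus noise \emph{before} differentiating, yielding the single identity $f_n'=-\tfrac{1}{n}\E\langle\partial_t\cH_{t,\epsilon}\rangle_{t,\epsilon}$, whereas the paper keeps the observations as integration variables and differentiates the joint density, producing two terms $T_1=\E[\cH'_{t,\epsilon}(\bX)\ln\cZ_{t,\epsilon}]$ and $T_2=\E\langle\cH'_{t,\epsilon}(\bx)\rangle_{t,\epsilon}$; Nishimori then kills $T_2$, and Gaussian IBP on the surviving noise terms in $T_1$ (against $\ln\cZ$ rather than against a Gibbs average) gives the result. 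Your route is slightly more direct and avoids the $T_1/T_2$ bookkeeping; the paper's route makes it transparent why only the noise pieces of $\cH'_{t,\epsilon}(\bX)$ survive. Either way the computation and the error control are the same.
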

\begin{proof}
Note that the conditional probability density function of $(\bY^{(t)},\widetilde{\bY}^{(t,\epsilon)})$ given $\bX=\bx^{\star}$ reads:
\begin{equation}
p_{\bY^{(t)},\widetilde{\bY}^{(t,\epsilon)}\vert\bX}(\by,\widetilde{\by}\,\vert\,\bx^{\star}) =\frac{1}{\sqrt{2\pi}^{nK + \vert \mathcal{I} \vert}}
\exp\bigg(-\sum_{i \in \mathcal{I}} \frac{y_i^2}{2} -\frac{\Vert \widetilde{\by} \Vert^2}{2}-\cH_{t,\epsilon}(\bx^{\star}; \by,\widetilde{\by})\bigg) \;.
\end{equation}
Therefore, the average interpolating free entropy satisfies:
\begin{equation}\label{eq:precise_formula_f}
f_n(t,\epsilon)
%= \frac{1}{n}\E_{\bX}\big[\E[\ln \cZ_{t,\epsilon}(\bY^{(t)},\widetilde{\bY}^{(t,\epsilon)})\vert \bX]\big]
= \frac{1}{n} \E_{\bX}\bigg[\int d\by d\widetilde{\by}\:\frac{e^{-\sum_{i \in \mathcal{I}} \frac{y_i^2}{2} -\frac{\Vert \widetilde{\by} \Vert^2}{2}}}{\sqrt{2\pi}^{nK + \vert \mathcal{I} \vert}} e^{-\cH_{t,\epsilon}(\bX; \by,\widetilde{\by})}
\ln \cZ_{t,\epsilon}\big(\by,\widetilde{\by}\big) \bigg] \:.
\end{equation}
%where $C(n,K,p)=\sqrt{2\pi}^{nK + \vert \mathcal{I} \vert}$ is the correct normalization factor.
Taking the time-derivative of \eqref{eq:precise_formula_f}, we get:
\begin{align}
f'_n(t,\epsilon)
&= -\frac{1}{n} \E\Big[\cH'_{t,\epsilon}\big(\bX; \bY^{(t)},\widetilde{\bY}^{(t,\epsilon)}\big)\ln \cZ_{t,\epsilon}(\bY^{(t)},\widetilde{\bY}^{(t,\epsilon)})\Big]
-\frac{1}{n} \E\Big[\Big\langle \cH'_{t,\epsilon}\big(\bx; \bY^{(t)},\widetilde{\bY}^{(t,\epsilon)}\big)\Big\rangle_{\!t,\epsilon} \,\Big]
\nonumber\\
&= -\frac{1}{n} T_1 - \frac{1}{n} T_2 \;,
\end{align}
where $T_1$, $T_2$ are given by the two expectations $\mathbb{E}[-]$ and 
\begin{multline}\label{derivative_hamiltonian}
\cH'_{t,\epsilon}(\bx ; \bY,\widetilde{\bY})
= \sum_{i \in \mathcal{I}}
-\frac{(p-1)!}{2n^{p-1}} \Bigg(\sum_{\ell=1}^{K} \prod_{a=1}^{p} x_{i_a \ell} \Bigg)^2
+\frac{1}{2}\sqrt{\frac{(p-1)!}{(1-t) n^{p-1}}} Y_{i}  \sum_{\ell=1}^{K} \prod_{a=1}^{p} x_{i_a \ell}\\
+ \sum_{j=1}^{n} \frac{1}{2} x_j^T \frac{dR(t,\epsilon)}{dt} x_j - \big(\widetilde{Y}_j\big)^T  \frac{d\sqrt{R(t,\epsilon)}}{dt} x_j \,.
\end{multline}
Equation \eqref{derivative_hamiltonian} comes from differentiating the interpolating Hamiltonian \eqref{interpolating_hamiltonian}.
Before diving further, we remind two useful identities:
\begin{align}
 \frac{dR(t,\epsilon)}{dt} &= \sqrt{R(t,\epsilon)} \frac{d\sqrt{R(t,\epsilon)}}{dt} + \frac{d\sqrt{R(t,\epsilon)}}{dt} \sqrt{R(t,\epsilon)} \,;
 \label{eq:prop1_derivative_R}\\
 \forall v \in \mathbb{R}^K&: v^T \sqrt{R(t,\epsilon)} \frac{d\sqrt{R(t,\epsilon)}}{dt} v = v^T  \frac{d\sqrt{R(t,\epsilon)}}{dt} \sqrt{R(t,\epsilon)} v \,. \label{eq:prop2_derivative_R}
\end{align}
The identities \eqref{eq:prop1_derivative_R} and \eqref{eq:prop2_derivative_R} can further be combined to obtain
\begin{equation}
\forall v \in \mathbb{R}^K: v^T \sqrt{R(t,\epsilon)} \frac{d\sqrt{R(t,\epsilon)}}{dt} v
%= \frac{1}{2} v^T \bigg(\sqrt{R(t,\epsilon)} \frac{d\sqrt{R(t,\epsilon)}}{dt} + \frac{d\sqrt{R(t,\epsilon)}}{dt} \sqrt{R(t,\epsilon)}\bigg) v
= \frac{1}{2} v^T \frac{dR(t,\epsilon)}{dt} v \;. \label{eq:prop3_derivative_R}
\end{equation}
Evaluating \eqref{derivative_hamiltonian} at $(\bx,\bY,\widetilde{\bY})=(\bX,\bY^{(t)},\widetilde{\bY}^{(t,\epsilon)})$ and then making use of \eqref{eq:prop3_derivative_R}, it comes:
\begin{align}\label{eq:H_timederivative}
\cH'_{t,\epsilon}(\bX ; \bY^{(t)},\widetilde{\bY}^{(t,\epsilon)})
&=
\sum_{i \in \mathcal{I}}
\frac{1}{2}\sqrt{\frac{(p-1)!}{(1-t) n^{p-1}}} Z_{i} \sum_{\ell=1}^{K} \prod_{a=1}^{p} X_{i_a \ell}\nonumber\\
&\qquad\qquad+\sum_{j=1}^{n} X_j^T \bigg(\frac{1}{2}\frac{dR(t,\epsilon)}{dt} - \sqrt{R(t,\epsilon)} \frac{d\sqrt{R(t,\epsilon)}}{dt} \bigg) X_j  - \widetilde{Z}_j^T  \frac{d\sqrt{R(t,\epsilon)}}{dt} X_j\nonumber\\
&= \sum_{i \in \mathcal{I}}
\frac{1}{2}\sqrt{\frac{(p-1)!}{(1-t) n^{p-1}}} Z_{i}  \sum_{\ell=1}^{K}\prod_{a=1}^{p} X_{i_a \ell}
\;-\;\sum_{j=1}^{n} \widetilde{Z}_j^T  \frac{d\sqrt{R(t,\epsilon)}}{dt} X_j \,.
\end{align}
Thanks to the Nishimori identity,
$$
T_2
= \E\,\big\langle \cH'_{t,\epsilon}\big(\bx; \bY^{(t)},\widetilde{\bY}^{(t,\epsilon)}\big)\big\rangle_{t,\epsilon}
= \E\,\cH'_{t,\epsilon}\big(\bX; \bY^{(t)},\widetilde{\bY}^{(t,\epsilon)}\big)\,.
$$
It follows that
\begin{equation*}
T_2
=\sum_{i \in \mathcal{I}}
\frac{1}{2}\sqrt{\frac{ (p-1)!}{(1-t) n^{p-1}}} \E[Z_{i}  ] \sum_{\ell=1}^{K} \E\Bigg[\prod_{a=1}^{p} X_{i_a \ell}\Bigg]
-\sum_{j=1}^{n} {\E[\widetilde{Z}_j]}^T \frac{d\sqrt{R(t,\epsilon)}}{dt} \E[X_j]
=0\;,
\end{equation*}
%$T_2$ is now easily shown to be zero thanks to the Nishimori identity:
%\begin{align*}
%T_2
%= \E\big[\big\langle \cH'_{t,\epsilon}\big(\bx; \bY^{(t)},\widetilde{\bY}^{(t,\epsilon)}\big)\big\rangle_{t,\epsilon} \,\big]
%&= \E\big[ \cH'_{t,\epsilon}\big(\bX; \bY^{(t)},\widetilde{\bY}^{(t,\epsilon)}\big) \big]\\
%&=\sum_{i \in \mathcal{I}}
%\frac{1}{2}\sqrt{\frac{ (p-1)!}{(1-t) n^{p-1}}}  \underbrace{\E[Z_{i}  ]}_{=0} \sum_{\ell=1}^{K} \E\Bigg[\prod_{a=1}^{p} X_{i_a \ell}\Bigg]
%-\sum_{j=1}^{n} {\underbrace{\E[\widetilde{Z}_j]}_{=0}}^T \frac{d\sqrt{R(t,\epsilon)}}{dt} \E[X_j]
%=0.
%\end{align*}
where we used $\E[Z_{i}  ] = \E[\widetilde{Z}_j] = 0$ to get the last equality.
Therefore, $f'_n(t,\epsilon) = -\nicefrac{T_1}{n}$. Plugging~\eqref{eq:H_timederivative} in the expression for $T_1$, we obtain:
\begin{multline}\label{eq:non_final_formula_derivative}
f'_{n}(t,\epsilon)
= -\frac{1}{2n}\sqrt{\frac{(p-1)!}{(1-t) n^{p-1}}}\sum_{i \in \mathcal{I}} \sum_{\ell=1}^{K}
\E\Bigg[Z_{i} \prod_{a=1}^{p} X_{i_a \ell} \ln \cZ_{t,\epsilon}(\bY^{(t)},\widetilde{\bY}^{(t,\epsilon)})\Bigg]\\
+\frac{1}{n} \sum_{j=1}^{n} \E\bigg[\widetilde{Z}_j^T  \frac{d\sqrt{R(t,\epsilon)}}{dt} X_j \ln \cZ_{t,\epsilon}(\bY^{(t)},\widetilde{\bY}^{(t,\epsilon)})\bigg]\,.
\end{multline}
The two kind of expectations appearing on the r.h.s.\ of~\eqref{eq:non_final_formula_derivative} are simplified in the paragraphs {\bf a)} and {\bf b)}.
%Two kind of expectations appear on the right-hand side of~\eqref{eq:non_final_formula_derivative}. These two expectations are simplified in the following points {\bf a)} and {\bf b)}.

\noindent{\bf a)} Integrating by parts with respect to the Gaussian random variable $Z_{i}$, we get:
\begin{multline*}
\E\Bigg[Z_i\prod_{a=1}^{p} X_{i_a \ell} \,\ln \cZ_{t,\epsilon}(\bY^{(t)},\widetilde{\bY}^{(t,\epsilon)})\Bigg]\\
= \E\Bigg[\prod_{a=1}^{p} X_{i_a \ell} \, \frac{\partial \ln \cZ_{t,\epsilon}(\widetilde{\bY}^{(t,\epsilon)})}{\partial Z_{i}} \Bigg]
= \sqrt{\frac{(1-t) (p-1)!}{n^{p-1}}} \sum_{\ell'=1}^{K}
\E\bigg[\bigg\langle \prod_{a=1}^{p} x_{i_a \ell'} X_{i_a \ell}\bigg\rangle_{\!\!\! t,\epsilon}\,\bigg]\;.
\end{multline*}
%\begin{align}
%\E\Bigg[Z_i\prod_{a=1}^{p} X_{i_a \ell} \,\ln \cZ_{t,\epsilon}(\bY^{(t)},\widetilde{\bY}^{(t,\epsilon)})\Bigg]
%= \E\Bigg[\prod_{a=1}^{p} X_{i_a \ell} \, \frac{\partial \ln \cZ_{t,\epsilon}(\widetilde{\bY}^{(t,\epsilon)})}{\partial Z_{i}} \Bigg]
%&= - \E\Bigg[\prod_{a=1}^{p} X_{i_a \ell}\,\bigg\langle\frac{\partial \cH_{t,\epsilon}(\bx;\widetilde{\bY}^{(t,\epsilon)})}{\partial Z_{i} } \bigg\rangle_{\!\! t,\epsilon}\,\Bigg]\nonumber\\
%&= \sqrt{\frac{(1-t) (p-1)!}{n^{p-1}}} \sum_{\ell'=1}^{K}
%\E\Bigg[\Bigg\langle \prod_{a=1}^{p} x_{i_a \ell'}\Bigg\rangle_{\!\!\! t,\epsilon}\,\prod_{a=1}^{p} X_{i_a \ell}\Bigg]\nonumber\\
%&= \sqrt{\frac{(1-t) (p-1)!}{n^{p-1}}} \sum_{\ell'=1}^{K}
%\E\Bigg[\Bigg\langle \prod_{a=1}^{p} x_{i_a \ell'} X_{i_a \ell}\Bigg\rangle_{\!\!\! t,\epsilon} \,\Bigg]\,.\nonumber
%\end{align}
Summing the latter identity over $\ell \in \{1,\dots,K\}$ and $i \in \mathcal{I} = \{i \in [n]^p: i_a \leq i_{a+1}\}$, we obtain:
\begin{multline*}
-\frac{1}{2n}\sqrt{\frac{(p-1)!}{(1-t) n^{p-1}}}\sum_{i \in \mathcal{I}} \sum_{\ell=1}^{K}
\E\bigg[Z_{i} \prod_{a=1}^{p} X_{i_a \ell} \ln \cZ_{t,\epsilon}(\bY^{(t)},\widetilde{\bY}^{(t,\epsilon)})\bigg]\\
=-\frac{(p-1)!}{2n^{p}}\sum_{i \in \mathcal{I}} \sum_{\ell,\ell'=1}^{K}
\E\,\bigg\langle \prod_{a=1}^{p} x_{i_a \ell'} X_{i_a \ell}\bigg\rangle_{\!\!\! t,\epsilon} \;.
\end{multline*}
This last equality can be further simplified by replacing the sum over tuples $i \in [n]^p$ such that $i_1 < \dots < i_p$ by a sum over any $p$-tuple whose elements are distinct divided by $p!$ (the cardinality of the symmetric group of degree $p$).
This is possible because the summand is symmetric with respect to any permutation of the indices $(i_1,\dots,i_p)$.
We also need to account for the terms corresponding to $p$-tuples having common elements (that is, $i_a = i_{a'}$ for some $a \neq a'$).
There are $\mathcal{O}_{n}(n^{p-1})$ such terms and each summand is bounded under the assumption that $P_X$ has finite $(2p)$\textsuperscript{th} order moments.
Hence the term $\mathcal{O}_{n}(n^{-1})$ appearing in the final equalities:
\begin{multline}\label{eq:final_part_a}
-\frac{1}{2n}\sqrt{\frac{(p-1)!}{(1-t) n^{p-1}}}\sum_{i \in \mathcal{I}} \sum_{\ell=1}^{K}
\E\bigg[Z_{i} \prod_{a=1}^{p} X_{i_a \ell} \ln \cZ_{t,\epsilon}(\bY^{(t)},\widetilde{\bY}^{(t,\epsilon)})\bigg]\\
= \mathcal{O}_{n}(n^{-1}) -\frac{(p-1)!}{2n^{p}p!} \sum_{i \in [n]^p}^{n} \sum_{\ell,\ell'=1}^{K}
\E\,\bigg\langle \prod_{a=1}^{p} x_{i_a \ell'} X_{i_a \ell}\bigg\rangle_{\!\!\! t,\epsilon}
=\mathcal{O}_{n}(n^{-1}) -\frac{1}{2p} \sum_{\ell, \ell'=1}^{K} 
\E\,\big\langle (Q_{\ell \ell'})^{p} \big\rangle_{t,\epsilon} \,.
\end{multline}
{\bf b)} Now we look at the second expectation and integrate by parts with respect to the Gaussian random vector $\widetilde{Z}_{j}$:
\begin{align}
\E\bigg[ \widetilde{Z}_j^T  \frac{d\sqrt{R(t,\epsilon)}}{dt} X_j \ln \cZ_{t,\epsilon}(\bY^{(t)},\widetilde{\bY}^{(t,\epsilon)})\bigg]
&= \sum_{\ell=1}^{K} \E\bigg[ \bigg(\frac{d\sqrt{R(t,\epsilon)}}{dt} X_i\bigg)_{\!\! \ell} \frac{\partial \ln \cZ_{t,\epsilon}(\bY_t,\widetilde{\bY}_{t,\epsilon})}{\partial \widetilde{Z}_{j\ell}}\bigg]\nonumber\\
&= \sum_{\ell=1}^{K}
\E\bigg[ \bigg(\frac{d\sqrt{ R(t,\epsilon)}}{dt} X_j\bigg)_{\! \ell}
\Big\langle \big(\sqrt{R(t,\epsilon)}x_j\big)_{\ell} \Big\rangle_{\! t,\epsilon}\,\bigg]\nonumber\\
&=\E\bigg[ X_j^T \frac{d\sqrt{R(t,\epsilon)}}{dt}\sqrt{R(t,\epsilon)}
\,\big\langle x_j \big\rangle_{\! t,\epsilon}\,\bigg]\,. \label{eq:part_b_stein}
\end{align}
%\begin{align}
%\E\bigg[ \widetilde{Z}_j^T  \frac{d\sqrt{R(t,\epsilon)}}{dt} X_j \ln \cZ_{t,\epsilon}(\bY^{(t)},\widetilde{\bY}^{(t,\epsilon)})\bigg]
%&= \sum_{\ell=1}^{K} \E\bigg[ \widetilde{Z}_{j \ell}  \bigg(\frac{d\sqrt{R(t,\epsilon)}}{dt} X_i\bigg)_{\!\! \ell}\ln \cZ_{t,\epsilon}(\bY_t,\widetilde{\bY}_{t,\epsilon})\bigg]\nonumber\\
%&= -\sum_{\ell=1}^{K}
%\E\bigg[ \bigg(\frac{d\sqrt{R(t,\epsilon)}}{dt} X_j\bigg)_{\!\! \ell}
%\bigg\langle
%\frac{ \partial \cH_{t,\epsilon}(\bx;\bY^{(t)},\widetilde{\bY}^{(t,\epsilon)})}{\partial \widetilde{Z}_{j\ell}}
%\bigg\rangle_{\!\! t,\epsilon}\,\bigg]\nonumber\\
%&= -\sum_{\ell=1}^{K}
%\E\bigg[ \bigg(\frac{d\sqrt{ R(t,\epsilon)}}{dt} X_j\bigg)_{\! \ell}
%\Big\langle
%-\big(\sqrt{R(t,\epsilon)}x_j\big)_\ell
%\Big\rangle_{\! t,\epsilon}\,\bigg]\nonumber\\
%&=\E\bigg[ X_j^T \frac{d\sqrt{R(t,\epsilon)}}{dt}\sqrt{R(t,\epsilon)}
%\big\langle x_j \big\rangle_{\! t,\epsilon}\,\bigg]\,. \label{eq:part_b_stein}
%\end{align}
Equation \eqref{eq:part_b_stein} can be further simplified thanks to the Nishimory identity (for the first and last equalities) and the identity \eqref{eq:prop3_derivative_R} (for the second equality):
\begin{multline}
\E\bigg[ \widetilde{Z}_j^T  \frac{d\sqrt{R(t,\epsilon)}}{dt} X_j \ln \cZ_{t,\epsilon}(\bY^{(t)},\widetilde{\bY}^{(t,\epsilon)})\bigg]
=\E\bigg[ \big\langle x_j \big\rangle_{\! t,\epsilon}^T \frac{d\sqrt{R(t,\epsilon)}}{dt}\sqrt{R(t,\epsilon)}
\big\langle x_j \big\rangle_{\! t,\epsilon}\,\bigg]\\
=\frac{1}{2}\E\bigg[ \big\langle x_j \big\rangle_{\! t,\epsilon}^T \frac{d R(t,\epsilon)}{dt}
\big\langle x_j \big\rangle_{\! t,\epsilon}\,\bigg]
=\frac{1}{2}\E\bigg[ X_j^T \frac{d R(t,\epsilon)}{dt}
\big\langle x_j \big\rangle_{\! t,\epsilon}\,\bigg]\,.
\end{multline}
Summing the latter over $j \in \{1,\dots,n\}$, we obtain:
\begin{multline}\label{eq:final_part_b}
\frac{1}{n} \sum_{j=1}^n \E\bigg[ \widetilde{Z}_j^T  \frac{d\sqrt{R(t,\epsilon)}}{dt} X_j \ln \cZ_{t,\epsilon}(\bY_t,\widetilde{\bY}_{t,\epsilon})\bigg]\\
= \frac{1}{2n}\sum_{j=1}^n \E\bigg[ X_j^T \frac{d R(t,\epsilon)}{dt}
\big\langle x_j \big\rangle_{\! t,\epsilon}\,\bigg]
= \frac{1}{2}\E\,\bigg\langle\mathrm{Tr}\bigg(\frac{d R(t,\epsilon)}{dt} \frac{\bx^T\bX}{n} \bigg) \bigg\rangle_{\! t,\epsilon}
= \frac{1}{2}\mathrm{Tr}\big(R^{\prime}(t,\epsilon) \,\E\,\langle\bQ\rangle_{t,\epsilon}\big)\,.
\end{multline}
%\begin{align}
%\frac{1}{n} \sum_{j=1}^n \E\bigg[ \widetilde{Z}_j^T  \frac{d\sqrt{R(t,\epsilon)}}{dt} X_j \ln \cZ_{t,\epsilon}(\bY_t,\widetilde{\bY}_{t,\epsilon})\bigg]
%= \frac{1}{2n}\sum_{j=1}^n \E\bigg[ X_j^T \frac{d R(t,\epsilon)}{dt}
%\big\langle x_j \big\rangle_{\! t,\epsilon}\,\bigg]
%&= \frac{1}{2n}\E\bigg[ \bigg\langle\mathrm{Tr}\bigg(\bX \frac{d R(t,\epsilon)}{dt} \bx^T \bigg) \bigg\rangle_{\! t,\epsilon}\,\bigg]\nonumber\\
%&= \frac{1}{2}\E\bigg[ \bigg\langle\mathrm{Tr}\bigg(\frac{d R(t,\epsilon)}{dt} \frac{\bx^T\bX}{n} \bigg) \bigg\rangle_{\! t,\epsilon}\,\bigg]\nonumber\\
%&= \frac{1}{2}\E\Big[ \Big\langle\mathrm{Tr}\big(r_{\epsilon}(t) \bQ \big) \Big\rangle_{\! t,\epsilon}\,\Big]\nonumber\\
%&= \frac{1}{2}\mathrm{Tr}\Big(r_{\epsilon}(t) \E\big[\big\langle\bQ\big\rangle_{\! t,\epsilon}\,\big]\Big)\,. \label{eq:final_part_b}
%\end{align}
Summing the final expressions in \eqref{eq:final_part_a} and \eqref{eq:final_part_b} ends the proof of Proposition~\ref{prop:formula_derivative_free_entropy}.
\end{proof}
\section{Concentration of the overlap matrix}\label{sec:Barbier2020Overlap}
The proof of Theorem \ref{th:upperbound} requires that, up to an integral over a small volume of perturbations $\epsilon \in \mathcal{S}_K^{++}$, the overlap matrix $\bQ$ concentrates around its expectation $\E \langle \bQ \rangle_{t,\epsilon}$.
We chose to integrate the perturbation over the hypercube $\mathcal{E}_n \subseteq \mathcal{S}_K^{++}$ which is defined by \eqref{definition_En} and depends on a sequence $(s_n)_{n \in \mathbb{N}^*}$ of decreasing numbers in $(0,1)$.
Remember that, for all $\epsilon \in \mathcal{E}_n$, $R(\cdot,\epsilon): [0,1] \mapsto \mathcal{S}_K^{++}$ is the unique solution to $R'(t,\epsilon) = \E[\langle \bQ \rangle_{t,\epsilon}]^{\circ (p-1)}$ and, for all $t \in [0,1]$, $C(\mathcal{R}_{n,t})$ is the convex hull of the image $\mathcal{R}_{n,t} \coloneqq R(t,\mathcal{E}_n)$.
We also remind that, in Proposition \ref{prop:ode_and_properties}, we introduced the inference problem \eqref{inference_problem_t_R} whose associated posterior distribution reads
\begin{equation}\label{def:posterior_dist_t,R}
	dP(\bx \,\vert\,\bY^{(t)},\widetilde{\bY}^{(t,R)}) = \frac{1}{\cZ_{t,R}(\bY^{(t)},\widetilde{\bY}^{(t,R)})}
	\, e^{-\cH_{t,R}(\bx ; \bY^{(t)},\widetilde{\bY}^{(t,R)})}\, \prod_{j=1}^n dP_{X}(x_j) \quad;
\end{equation}
where $\forall \bx \in \mathbb{R}^{n \times K}$:
\begin{multline}
	\cH_{t,R}(\bx ; \bY, \widetilde{\bY})
	= \sum_{i \in \mathcal{I}}
	\frac{(1-t) (p-1)!}{2n^{p-1}} \Bigg(\sum_{k=1}^{K}\prod_{a=1}^p x_{i_a k}\Bigg)^{\! 2} -\sqrt{\frac{(1-t) (p-1)!}{n^{p-1}}} Y_{i}  \sum_{k=1}^{K} \prod_{a=1}^p x_{i_a k}\\
	+\sum_{j=1}^{n} \frac{1}{2} x_j^T R x_j- \widetilde{Y}_j^T  \sqrt{R} x_j \,.
\end{multline}
Let $\langle - \rangle_{t,R}=\int - \,dP(\bx \,\vert\,\bY^{(t)},\widetilde{\bY}^{(t,R)})$ be the Gibbs brackets associated to the posterior distribution \eqref{def:posterior_dist_t,R}.
Thanks to a change of variables (see the upper bound \eqref{upperbound_integral_En}), we showed that the following theorem is enough to prove Theorem \ref{th:upperbound}.
\begin{theorem}[Concentration of the overlap matrix around its expectation]\label{th:Barbier2020Overlap}
Assume $P_X$ has bounded support. There exists a positive constant $C$ depending only on $P_X$, $K$ and $p$ such that
\begin{equation}\label{eq:Barbier2020Overlap}
\int_{C(\mathcal{R}_{n,t})} dR\:\E\big[\big\langle \big\Vert \bQ-\E[\langle\bQ\rangle_{t,R}]\big\Vert^2\,\big\rangle_{t,R}\,\big] \leq \frac{C}{s_n^{\nicefrac{3}{2}} n^{\nicefrac{1}{6}}} \,.
\end{equation}
\end{theorem}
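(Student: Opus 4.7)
The plan is to reduce the statement to the general concentration framework of \cite{Barbier2020Overlap} applied to the inference problem \eqref{inference_problem_t_R}, which has exactly the right structure: a scalar channel of varying strength together with an independent Gaussian side-channel of strength $R$ in the cone $\mathcal{S}_K^{++}$. I would begin from the elementary decomposition
\begin{equation*}
\E\big\langle \|\bQ - \E\langle\bQ\rangle_{t,R}\|^2\big\rangle_{t,R}
= \E\big\langle \|\bQ - \langle\bQ\rangle_{t,R}\|^2\big\rangle_{t,R}
+ \E\,\|\langle\bQ\rangle_{t,R} - \E\langle\bQ\rangle_{t,R}\|^2,
\end{equation*}
and bound the thermal and quenched fluctuations separately after integration over $R\in C(\mathcal{R}_{n,t})$. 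The fact that $C(\mathcal{R}_{n,t})\subseteq\mathcal{B}(\Sigma_X,K,p)$ (a compact convex set independent of $n$ and $t$) is what will allow all constants that appear to depend only on $P_X$, $K$ and $p$.

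For the thermal part I would proceed as in Section~\ref{sec:derivative_free_entropy}. A direct Stein-type computation shows that each variance $\E\langle (Q_{\ell\ell'}-\langle Q_{\ell\ell'}\rangle_{t,R})^2\rangle_{t,R}$ is, up to a factor $1/n$, a second-order partial derivative of $R\mapsto f_n(t,R)$; the Gaussian side-channel moreover makes $f_n(t,\cdot)$ convex. Integrating such second derivatives along straight segments inside $C(\mathcal{R}_{n,t})$ telescopes to a difference of first-order derivatives, which are precisely averaged overlaps and are uniformly bounded because of the containment in $\mathcal{B}(\Sigma_X,K,p)$. This yields a thermal contribution of order $1/n$.

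The quenched part is the delicate one. Under the boundedness of $\mathrm{supp}\,P_X$, Gaussian Poincar\'e in $\bZ,\widetilde{\bZ}$ combined with a bounded-differences argument in $\bX$ gives the concentration of the free entropy itself, $\E(f_n(t,R)-\E f_n(t,R))^2\leq C/n$, uniformly in $R\in\mathcal{B}(\Sigma_X,K,p)$. To transfer this to the gradient $\E\langle\bQ\rangle_{t,R}$ one exploits convexity of $R\mapsto f_n(t,R)$: for a convex function, pointwise proximity on a small cube of side $\delta$ around $R$ controls the gradient at $R$ with a multiplicative loss $\delta^{-1}$, at the price of a bias of order $\delta$ coming from bounding second derivatives by first-order variation. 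Balancing the $1/n$ fluctuation of $f_n$ against the bias $\delta$ leads, after integration of the \emph{squared} gradient fluctuation, to the characteristic $n^{-1/6}$ rate. The factor $s_n^{-3/2}$ tracks the size of the admissible small cube: the local convexity argument requires $\delta\lesssim s_n$, and propagating this constraint through the squared-gradient bound produces exactly that power.

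The central obstacle is the coupling between the interpolation path and the quenched randomness. Because $R(t,\epsilon)$ is itself defined by the ODE \eqref{ODE} whose right-hand side depends on $\E\langle\bQ\rangle_{t,\epsilon}$, the image $\mathcal{R}_{n,t}$ is a random-looking set that one cannot easily integrate a convexity argument over; enlarging it to its convex hull $C(\mathcal{R}_{n,t})$ decouples the integration domain from the posterior and makes the small-cube convexity argument of \cite{Barbier2020Overlap} directly applicable, at the cost of a harmless increase in volume controlled once more by $\mathcal{B}(\Sigma_X,K,p)$. The remaining, and main, technical task is to verify that all constants produced in the adaptation are uniform in $t$ and $n$ and depend only on $P_X$, $K$ and $p$, which ultimately follows from the uniform boundedness of $C(\mathcal{R}_{n,t})$ established in Lemma~\ref{lemma:properties_convex_hulls}.
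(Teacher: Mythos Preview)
Your high-level plan---split into thermal and quenched fluctuations, control the quenched part by transferring free-entropy concentration to its gradient via convexity, and use the uniform containment $C(\mathcal{R}_{n,t})\subseteq\mathcal{B}(\Sigma_X,K,p)$ to keep constants uniform---is indeed the architecture of the paper's proof. But there is a genuine gap in the execution: you identify the overlap $\bQ$ with the $R$-gradient of the free entropy, and that identification is false when $K>1$.

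Concretely, $\partial F_n/\partial R_{\ell\ell'}=-\langle\cL_{\ell\ell'}\rangle_{t,R}$ where $\cL_{\ell\ell'}$ (see \eqref{def_L}) contains, beyond the overlap, a term $-\tfrac{1}{n}\sum_j x_j^T(\partial\sqrt{R}/\partial R_{\ell\ell'})\widetilde{Z}_j$. Only after taking expectation and Gaussian integration by parts does $\E\langle\cL_{\ell\ell'}\rangle$ reduce to a multiple of $\E\langle Q_{\ell\ell'}\rangle$ (Lemma~\ref{lemma:computation_E<L>_and_others}). Consequently: (i) the second $R$-derivative of $f_n$ is the thermal variance of $\pmb{\cL}$, not of $\bQ$, so your claim about $\E\langle(Q_{\ell\ell'}-\langle Q_{\ell\ell'}\rangle)^2\rangle$ being a second derivative is incorrect; (ii) the convexity argument on $F_n$ controls the quenched fluctuations of $\langle\pmb{\cL}\rangle$, not of $\langle\bQ\rangle$. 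The paper therefore first proves concentration of $\pmb{\cL}$ (Propositions~\ref{prop:concentration_L_on_<L>} and~\ref{prop:concentration_<L>_on_E<L>}) and then transfers it to $\bQ$ through a nontrivial identity for $\mathrm{Tr}\,\E\langle\bQ(\pmb{\cL}-\E\langle\pmb{\cL}\rangle)\rangle$ (equation~\eqref{final_equality_Barbier2020Overlap}), which also brings in the auxiliary quantity $\|\bQ-n^{-1}\langle\bx\rangle^T\langle\bx\rangle\|$ (Proposition~\ref{prop:concentration_Q_on_<Q>}). This transfer step is absent from your plan and is not a formality.

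A second, related point: the random free entropy $F_n(t,\cdot)$ is \emph{not} convex in $R$; its Hessian contains the term $\tfrac{1}{n}\sum_j\langle x_j\rangle^T(\partial^2\sqrt{R}/\partial R_{\ell\ell'}^2)\widetilde{Z}_j$, which has no sign. The paper handles this by adding a quadratic penalty whose coefficient scales like $s_n^{-3/2}$ (via the bound \eqref{upperbound_norm_2ndderivative_sqrtR} on $\|\partial^2\sqrt{R}/\partial R_{\ell\ell'}^2\|$), and this is where much of the $s_n$-dependence actually originates---not only from the size constraint $\delta\lesssim s_n$ that you mention. Without introducing $\pmb{\cL}$ and tracking these square-root derivatives explicitly, your outline does not produce the stated bound.
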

The proof of Theorem \ref{th:Barbier2020Overlap} relies on the one of \cite[Theorem 3]{Barbier2020Overlap}.
In the later reference, the concentration result is given for an integral over a hypercube $\mathcal{E}_n$.
In our case, the integral on the left-hand side of \eqref{eq:Barbier2020Overlap} is over the convex hull of $\mathcal{E}_n$'s image by the function $R(t,\cdot)$.
It is likely not a hypercube, even less one whose form is similar to $\eqref{definition_En}$.
Therefore, we first show that the convex hulls $\mathrm{C}(\mathcal{R}_{n,t})$ have properties allowing us to carry out a proof similar to \cite{Barbier2020Overlap}.
\subsection{Properties of \texorpdfstring{$\mathcal{R}_{n,t}$'s}{R\_\{n,t\}'s} convex hull}
For $(\ell,\ell') \in \{1,\dots,K\}^2$, we will denote $E^{(\ell,\ell')}$ the $K \times K$ symmetric matrix whose entries are:
\begin{equation}\label{def:E^(l,l')}
E_{ij}^{(\ell,\ell')} =
\begin{cases}
1 \text{ if } (i,j) \in \{(\ell,\ell'),(\ell',\ell)\} \,;\\
0 \text{ otherwise}.
\end{cases}
\end{equation}
\begin{lemma}[Properties of $\mathcal{R}_{n,t}$'s convex hull]\label{lemma:properties_convex_hulls}
For every $R \in \mathrm{C}(\mathcal{R}_{n,t})$:
\begin{enumerate}[label=(\roman*)]
	\item $\Vert R \Vert \leq 4K^{\nicefrac{3}{2}} + \mathrm{Tr}(\Sigma_{X})^{p-1}$;
	\item there exists $\epsilon \in \mathcal{E}_n$ such that $R \succcurlyeq \epsilon$;
	\item for every pair $(\ell,\ell') \in \{1,\dots,K\}^2$ and real number $\delta \in (-s_n,s_n)$, $R + \delta E^{(\ell,\ell')}$ is a symmetric positive definite matrix;
	\item the $1$\textsuperscript{st}-order Fr\'{e}chet derivative $\frac{\partial \sqrt{R}}{\partial R_{\ell\ell'}}$ and the $2$\textsuperscript{nd}-order Fr\'{e}chet derivative $\frac{\partial^2 \sqrt{R}}{\partial R_{\ell\ell'}^2}$ satisfy
	\begin{align}
	\bigg\Vert \frac{\partial \sqrt{R}}{\partial R_{\ell\ell'}} \bigg\Vert &\leq 	\frac{1}{\sqrt{2 s_n}}\,;\label{upperbound_norm_derivative_sqrtR}\\
	\bigg\Vert \frac{\partial^2 \sqrt{R}}{\partial R_{\ell\ell'}^2} \bigg\Vert &\leq \frac{\sqrt{K}}{(2s_n)^{\nicefrac{3}{2}}} \,.
	\label{upperbound_norm_2ndderivative_sqrtR}
	\end{align}
\end{enumerate}
\textbf{Remark:} Note that (i) does not depend on $n$ and $t$, while (ii-iv) do not depend on $t$.
\end{lemma}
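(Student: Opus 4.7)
My plan is to derive all four properties from two elementary structural facts about $\mathcal{E}_n$ and the ODE~\eqref{ODE}. \textbf{Fact A:} every $\epsilon \in \mathcal{E}_n$ is strictly diagonally dominant with positive diagonal, with slack $\epsilon_{\ell\ell} - \sum_{\ell' \neq \ell}|\epsilon_{\ell\ell'}| \geq 2Ks_n - 2(K-1)s_n = 2s_n$, so by Gershgorin's theorem $\mu_{\min}(\epsilon) \geq 2s_n$. \textbf{Fact B:} by Proposition~\ref{prop:ode_and_properties} combined with the Schur product theorem and the Nishimori identity, $R(t,\epsilon) - \epsilon = \int_0^t \E[\langle \bQ \rangle_{s,\epsilon}]^{\circ(p-1)}\,ds$ lies in $\mathcal{S}_K^+$, and its trace is at most $\mathrm{Tr}(\Sigma_X)^{p-1}$; here I combine the PSD trace bound $\E\langle Q_{\ell\ell}\rangle \leq (\Sigma_X)_{\ell\ell}$ with the subadditivity $\sum_\ell a_\ell^{p-1} \leq (\sum_\ell a_\ell)^{p-1}$, valid for $p \geq 2$ and $a_\ell \geq 0$.

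With these in hand, properties (i) and (ii) follow quickly. For (i), I write $\Vert R(t,\epsilon)\Vert \leq \Vert \epsilon\Vert + \mathrm{Tr}(\Sigma_X)^{p-1}$ using Fact B together with the PSD inequality $\Vert M\Vert \leq \mathrm{Tr}(M)$ (since $\Vert M\Vert^2 = \sum_i \sigma_i^2 \leq (\sum_i \sigma_i)^2$ for a PSD matrix); the entry ranges of \eqref{definition_En} together with $s_n < 1$ yield $\Vert \epsilon\Vert \leq 4K^{3/2}$. Convexity of the Frobenius norm extends the bound from $\mathcal{R}_{n,t}$ to $\mathrm{C}(\mathcal{R}_{n,t})$. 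For (ii), Fact B directly gives $R(t,\epsilon) \succcurlyeq \epsilon$; a convex combination $R = \sum_i \lambda_i R(t,\epsilon_i)$ then satisfies $R \succcurlyeq \sum_i \lambda_i \epsilon_i \in \mathcal{E}_n$, since the hypercube $\mathcal{E}_n$ is convex.

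For (iii), I use (ii) to reduce the question to showing that $\epsilon + \delta E^{(\ell,\ell')} \succ 0$ for some $\epsilon \in \mathcal{E}_n$ and every $|\delta| < s_n$. A short case analysis on $\ell = \ell'$ versus $\ell \neq \ell'$ shows that the perturbation $\delta E^{(\ell,\ell')}$ can diminish the diagonal-dominance slack of $\epsilon$ in any row by at most $s_n$ while leaving every diagonal entry strictly positive; since Fact A provides a slack of $2s_n$, strict diagonal dominance with positive diagonal survives, and the perturbed matrix is positive definite.

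The expected main obstacle is property (iv). I parametrize $R(\lambda) := R + \lambda E^{(\ell,\ell')}$ and $A(\lambda) := \sqrt{R(\lambda)}$; since $\lambda \mapsto R(\lambda)$ is affine, differentiating the identity $A^2 = R$ once and twice in $\lambda$ yields the Sylvester equations
\begin{equation*}
A' A + A A' = E^{(\ell,\ell')}, \qquad A'' A + A A'' = -2 (A')^2.
\end{equation*}
In the eigenbasis of $R$, with eigenvalues $\mu_1,\dots,\mu_K$, the operator $X \mapsto XA + AX$ acts entrywise by multiplication by $\sqrt{\mu_i} + \sqrt{\mu_j} \geq 2\sqrt{\mu_{\min}(R)}$, so inverting it costs at most a factor $1/(2\sqrt{\mu_{\min}(R)})$ in Frobenius norm. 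By (ii) and Fact A, $\mu_{\min}(R) \geq \mu_{\min}(\epsilon) \geq 2s_n$; combined with $\Vert E^{(\ell,\ell')}\Vert \leq \sqrt{2}$ and the submultiplicativity $\Vert (A')^2\Vert \leq \Vert A'\Vert^2$, this yields $\Vert A'\Vert \leq 1/(2\sqrt{s_n}) \leq 1/\sqrt{2s_n}$ and $\Vert A''\Vert \leq \Vert A'\Vert^2/\sqrt{2s_n} \leq 1/(2s_n)^{3/2} \leq \sqrt{K}/(2s_n)^{3/2}$. The delicate points will be to justify the spectral calculus cleanly (invoking the Loewner representation of derivatives of operator functions) and to verify that the lower bound on $\mu_{\min}$ genuinely transfers from $\epsilon$ to every $R \in \mathrm{C}(\mathcal{R}_{n,t})$ via $R \succcurlyeq \epsilon$.
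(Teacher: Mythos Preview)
Your proof is correct and follows the same overall structure as the paper's: (i) via the integral representation $R(t,\epsilon)=\epsilon+\int_0^t\E[\langle\bQ\rangle]^{\circ(p-1)}$, (ii) via positive-semidefiniteness of that integrand, (iii) from (ii) plus strict diagonal dominance of $\epsilon+\delta E^{(\ell,\ell')}$, and (iv) via a lower bound on $\lambda_{\min}(R)$ inherited from $\epsilon$. The only difference worth recording is in the execution of (iv): the paper outsources the Fr\'echet-derivative bounds to \cite{Moral2018Taylor} and the eigenvalue bound to \cite{Varah1975Lower} (recording only $\lambda_{\min}(\epsilon)\geq s_n$), whereas you give a fully self-contained argument---deriving the Sylvester equations $A'A+AA'=E^{(\ell,\ell')}$, $A''A+AA''=-2(A')^2$ and inverting them in the eigenbasis of $R$, combined with Gershgorin for the sharper $\lambda_{\min}(\epsilon)\geq 2s_n$---which in fact yields slightly better constants; for (i) you bound $\Vert R-\epsilon\Vert$ by its trace, while the paper uses $\Vert M^{\circ(p-1)}\Vert\leq\Vert M\Vert^{p-1}$ and $\Vert\E\langle\bQ\rangle\Vert\leq\mathrm{Tr}(\Sigma_X)$, but both routes land on the same bound.
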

\begin{proof}
We start by proving (i).
If $R \in \mathcal{R}_{n,t}$ then there exists $\epsilon \in \mathcal{E}_n$ such that $R=R(t,\epsilon)$, i.e.,
\begin{equation}\label{eq:R_in_Rn,t}
R = \epsilon + \int_0^t \E[\langle \bQ \rangle_{s,\epsilon}]^{\circ (p-1)} \, ds\,.
\end{equation}
Thus, $\Vert R \Vert
\leq \Vert \epsilon \Vert + \int_0^t \Vert \E[\langle \bQ \rangle_{s,\epsilon}]^{\circ (p-1)}\Vert \, ds
\leq 4K^{\nicefrac{3}{2}} + \int_0^t \Vert \E[\langle \bQ \rangle_{s,\epsilon}]\Vert^{p-1} \, ds$.
We have:
\begin{equation}\label{eq:upperbound_norm_ E<Q>}
\Vert \E[\langle \bQ \rangle_{s,\epsilon}]\Vert
\leq \frac{\E[ \langle \Vert \bx \Vert \rangle_{s,\epsilon} \Vert \bX \Vert]}{n} 
\leq \sqrt{\frac{\E[\Vert \bX \Vert^2]}{n} \frac{\E[ \langle \Vert\bx\Vert \rangle_{s,\epsilon}^2]}{n}}
= \frac{\E[\Vert \bX \Vert^2]}{n} 
= \mathrm{Tr}(\Sigma_{X})\,.
\end{equation}
%\begin{equation}\label{eq:upperbound_norm_ E<Q>}
%\Vert \E[\langle \bQ \rangle_{s,\epsilon}]\Vert
%\leq \frac{\E[ \Vert \langle \bx \rangle_{s,\epsilon}\Vert \Vert \bX \Vert]}{n} 
%\leq \sqrt{\frac{\E[\Vert \bX \Vert^2]}{n} \frac{\E[ \Vert \langle \bx \rangle_{s,\epsilon}\Vert^2]}{n}}
%\leq \sqrt{\frac{\E[\Vert \bX \Vert^2]}{n} \frac{\E[ \langle \Vert\bx\Vert \rangle_{s,\epsilon}^2]}{n}}
%= \frac{\E[\Vert \bX \Vert^2]}{n} 
%= \mathrm{Tr}(\Sigma_{X})\,.
%\end{equation}
The second inequality follows from Cauchy-Schwarz inequality and the first equality from the Nishimori identity.
Hence the upper bound
$\Vert R \Vert \leq 4K^{\nicefrac{3}{2}} + \mathrm{Tr}(\Sigma_{X})^{p-1}$
for all $R \in \mathcal{R}_{n,t}$, which directly extends to $C(\mathcal{R}_{n,t})$ by definition of a convex hull.

Now to prove (ii). If $R \in \mathcal{R}_{n,t}$, note that \eqref{eq:R_in_Rn,t} directly implies $R - \epsilon \succcurlyeq 0$ as -- by the Nishimori identity and the Schur Product theorem -- $\E[\langle \bQ \rangle_{s,\epsilon}]^{\circ (p-1)}$ is symmetric positive semidefinite for all $s \in [0,1]$. More generally, if $R \in C(\mathcal{R}_{n,t})$, there exist $m \in \mathbb{N}^*$, $(\alpha_1,\alpha_2,\dots,\alpha_m) \in [0,1]^m$ and $(R_1,\dots,R_m) \in (\mathcal{R}_{n,t})^M$ such that
$\sum_{j=1}^m \alpha_j = 1$ and $R = \sum_{j=1}^m \alpha_j R_j$.
%$\alpha_1 + \alpha_2 + \dots + \alpha_m =1$ and $R = \alpha_1 R_1 + \alpha_2 R_2 + \dots + \alpha_m R_m$.
It follows direcly that
$R \succcurlyeq \sum_{j=1}^m  \alpha_j \epsilon_j$
%$R \succcurlyeq \alpha_1 \epsilon_1 + \alpha_2 \epsilon_2 + \dots + \alpha_m \epsilon_m$
where $\forall j \in \{1,\dots,m\}: \mathcal{E}_n \ni \epsilon_j \preccurlyeq R_j$.
As $\mathcal{E}_n$ is convex, it concludes the proof of (ii).

We now show (ii) $\Rightarrow$ (iii).
Let $R \in \mathrm{C}(\mathcal{R}_{n,t})$ and pick $\epsilon \in \mathcal{E}_n$ such that $R \succcurlyeq \epsilon$.
For all $(\ell,\ell') \in \{1,\dots,K\}^2$ and $\delta \in (-s_n,s_n)$,
$\epsilon + \delta E^{(\ell,\ell')}$ is a symmetric strictly diagonally dominant matrix with positive diagonal entries.
Therefore, $\epsilon + \delta E^{(\ell,\ell')}$ belongs to $\mathcal{S}_K^{++}$ and $R + \delta E^{(\ell,\ell')} \succcurlyeq  \epsilon + \delta E^{(\ell,\ell')} \succ 0$.

Finally, we prove (iv).
Let $R \in \mathrm{C}(\mathcal{R}_{n,t})$ and denote $\lambda_{\min}(R)$ its minimum eigenvalue.
Applying \cite[Theorem 1.1]{Moral2018Taylor} (the first upper bound in (6) to be more precise), we obtain:
\begin{equation}\label{upperbound_Freichet_derivatives}
	\bigg\Vert \frac{\partial \sqrt{R}}{\partial R_{\ell\ell'}} \bigg\Vert \leq
	\frac{\Vert E^{(\ell,\ell')}\Vert}{2\sqrt{\lambda_{\min}(R)}}\:;\;
	\bigg\Vert \frac{\partial^2 \sqrt{R}}{\partial R_{\ell\ell'}^2} \bigg\Vert \leq
	\frac{\sqrt{K} \,\Vert E^{(\ell,\ell')}\Vert}{4\lambda_{\min}(R)^{\nicefrac{3}{2}}}\,. 
\end{equation}
Using (ii), pick $\epsilon \in \mathcal{E}_n$ such that $R \succcurlyeq \epsilon$.
By \cite[Corollary 2]{Varah1975Lower}, the minimum eigenvalue of $\epsilon$ is greater than $\sqrt{\alpha \beta}$ where
\begin{equation*}
\alpha = \min_{1 \leq k \leq K} \bigg\{\vert \epsilon_{kk} \vert - \sum_{j \neq k} \vert \epsilon_{kj} \vert \bigg\} \geq s_n\
\quad\text{and}\quad
\beta = \min_{1 \leq k \leq K} \bigg\{\vert \epsilon_{kk} \vert - \sum_{j \neq k} \vert \epsilon_{jk} \vert \bigg\} \geq s_n \,.
\end{equation*}
Hence $\lambda_{\min}(R) \geq \sqrt{\alpha \beta} \geq s_n$. %where the first inequality follows from (ii).
Combining this lower bound with \eqref{upperbound_Freichet_derivatives} ends the proof of (iv).
\end{proof}
%
% CONCENTRATION L
%
\subsection{Concentration of \texorpdfstring{$\pmb{\cL}$}{L} around its expectation}
As in \cite{Barbier2020Overlap}, the concentration of the overlap matrix around its expectation will follow from the concentration of the $K \times K$ symmetric matrix $\pmb{\cL} \equiv \pmb{\cL}(R)$ whose entries are:
\begin{equation}\label{def_L}
\forall (\ell,\ell') \in \{1,\dots,K\}^2: \cL_{\ell\ell'}
\coloneqq \frac{1}{n} \sum_{j=1}^n \frac{1}{2}x_j^T \frac{\partial R}{\partial R_{\ell\ell'}}x_j
-X_j^T \frac{\partial R}{\partial R_{\ell\ell'}}x_j
-x_j^T \frac{\partial \sqrt{R}}{\partial R_{\ell\ell'}} \widetilde{Z}_j \,.
\end{equation}
This is well-defined as long as $R \in \mathcal{S}_K^{++}$.
To prove concentration results on $\pmb{\cL}$, it will be useful to work with the free entropy $\frac{1}{n} \ln \cZ_{t,R}(\bY^{(t)},\widetilde{\bY}^{(t,R)})$ where $\cZ_{t,R}(\bY^{(t)},\widetilde{\bY}^{(t,R)})$ is the normalization factor of the posterior distribution \eqref{posterior_dist_t,R}.
In Appendix \ref{app:concentration_free_entropy}, we prove that this free entropy concentrates around its expectation when $n \to +\infty$. 
In order to shorten notations, we define:
\begin{equation}
	F_n(t,R) = \frac{1}{n} \ln \cZ_{t,R}\big(\bY^{(t)},\widetilde{\bY}^{(t,R)}\big)\;;\quad
	f_n(t,R) = \frac{1}{n} \E\big[\ln \cZ_{t,R}\big(\bY^{(t)},\widetilde{\bY}^{(t,R)}\big)\big] = \E\,F_n(t,R) \,.
\end{equation}
\begin{proposition}[Thermal fluctuations of $\pmb{\cL}$]\label{prop:concentration_L_on_<L>}
Assume $P_X$ has finite fourth-order moments. There exists a positive constant $C$, depending only on $\Sigma_X$, $K$ and $p$, such that for all $(n,t) \in \mathbb{N}^{*} \times [0,1]$:
\begin{equation}
\int_{C(\mathcal{R}_{n,t})} dR\:\E\big[\big\langle \big\Vert \pmb{\cL}-\langle\pmb{\cL}\rangle_{t,R}\big\Vert^2\,\big\rangle_{t,R}\,\big]
\leq \frac{C}{n s_n} \,.
\end{equation}
\end{proposition}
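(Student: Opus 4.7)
The proof plan I have in mind follows the strategy of \cite{Barbier2020Overlap}. The idea is to express the thermal fluctuations of each entry $\cL_{\ell\ell'}$ in terms of second $R$-derivatives of the free entropy $F_n(t,R)$, and then integrate those second derivatives away over $C(\mathcal{R}_{n,t})$ by Fubini plus the one-dimensional fundamental theorem of calculus in each coordinate direction $R_{\ell\ell'}$. The geometric properties collected in Lemma~\ref{lemma:properties_convex_hulls} are precisely what this program needs: property~(iii) ensures that moving $R_{\ell\ell'}$ by $\pm s_n$ keeps $R$ inside $\mathcal{S}_K^{++}$; property~(i) bounds the volume of $C(\mathcal{R}_{n,t})$ uniformly in $n$ and $t$; and property~(iv) provides the operator-norm bounds on $\partial_{R_{\ell\ell'}}\sqrt{R}$ and $\partial^2_{R_{\ell\ell'}^2}\sqrt{R}$ needed to estimate derivatives of the Hamiltonian uniformly on the convex hull.

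Concretely, for each pair $(\ell,\ell')$ I would introduce the auxiliary quantity $\widehat{\cL}_{\ell\ell'}\coloneqq n^{-1}\partial_{R_{\ell\ell'}}\cH_{t,R}$. Two consecutive differentiations of $\cZ_{t,R}$ in $R_{\ell\ell'}$ yield the classical identity
\begin{equation*}
\E\big\langle\big(\widehat{\cL}_{\ell\ell'}-\langle\widehat{\cL}_{\ell\ell'}\rangle_{t,R}\big)^{\!2}\,\big\rangle_{t,R}
= \frac{1}{n}\,\frac{\partial^2 f_n}{\partial R_{\ell\ell'}^2}(t,R) + \frac{1}{n^2}\,\E\Big\langle \frac{\partial^2 \cH_{t,R}}{\partial R_{\ell\ell'}^2}\Big\rangle_{t,R}.
\end{equation*}
Integrating over $R\in C(\mathcal{R}_{n,t})$ and applying Fubini, the first term becomes, after the fundamental theorem of calculus in the $R_{\ell\ell'}$ direction, the difference of $\partial_{R_{\ell\ell'}} f_n = -\E\langle\widehat{\cL}_{\ell\ell'}\rangle_{t,R}$ at the two endpoints of the $R_{\ell\ell'}$-slice of $C(\mathcal{R}_{n,t})$; each such endpoint value is $O(1/\sqrt{s_n})$ by Cauchy--Schwarz, Lemma~\ref{lemma:properties_convex_hulls}(iv), and the finite moments of $P_X$, so its total contribution is $O(1/(n\sqrt{s_n}))\leq O(1/(n s_n))$. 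For the second term, the naive pointwise bound of order $s_n^{-3/2}$ on $\partial^2\sqrt{R}$ is too weak, so I would instead split $\widetilde{Y}_j=\sqrt{R}\,X_j+\widetilde{Z}_j$, apply the Nishimori identity on the $X$-part together with the product-rule identity $\sqrt{R}\,\partial^2\sqrt{R}=\partial[\sqrt{R}\,\partial\sqrt{R}]-(\partial\sqrt{R})^2$ (so that the exact-derivative piece is integrated by the fundamental theorem while the quadratic remainder contributes only $\|\partial\sqrt{R}\|^2=O(1/s_n)$), and apply Stein's lemma on the $\widetilde{Z}$-part. Both pieces then give an $O(1/(n s_n))$ contribution.

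Finally, to pass from $\widehat{\cL}$ to the author-defined $\cL$, the identity $\partial_{R_{\ell\ell'}} R=\sqrt{R}\,\partial_{R_{\ell\ell'}}\sqrt{R}+\partial_{R_{\ell\ell'}}\sqrt{R}\,\sqrt{R}$ gives $\cL_{\ell\ell'}-\widehat{\cL}_{\ell\ell'}=-\frac{1}{n}\sum_j X_j^\intercal(\partial_{R_{\ell\ell'}}\sqrt{R})\sqrt{R}\,x_j$, a linear function of $\bx$ with quenched coefficients. Its Gibbs variance is therefore a bilinear form in the Gibbs covariance of $\bx$, and the Nishimori identity lets me recast it as a quantity of the same type already controlled by the preceding argument, so the discrepancy also integrates to $O(1/(n s_n))$. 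Summing over the $K^2$ pairs and using $\|\pmb{\cL}-\langle\pmb{\cL}\rangle_{t,R}\|^2=\sum_{\ell,\ell'}(\cL_{\ell\ell'}-\langle\cL_{\ell\ell'}\rangle_{t,R})^2$ yields the claim with a constant $C$ depending only on $\Sigma_X$, $K$ and $p$. The main technical hurdle will be exactly this last reduction, namely carefully rewriting the Gibbs variance of $\cL-\widehat{\cL}$ in a form that can be absorbed into the bound on the variance of $\widehat{\cL}$, together with the product-rule trick that saves a $\sqrt{s_n}$ factor in the $\partial^2\sqrt{R}$ remainder; both steps rely crucially on the uniform lower bound $\lambda_{\min}(R)\geq s_n$ on the entire convex hull $C(\mathcal{R}_{n,t})$, supplied by Lemma~\ref{lemma:properties_convex_hulls}(ii).
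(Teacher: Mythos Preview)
Your overall program---second $R$-derivative identity for the free entropy, Fubini plus the fundamental theorem of calculus along each $R_{\ell\ell'}$-slice, and a product-rule/Nishimori computation for the residual term---is exactly the paper's route. But your detour through $\widehat{\cL}_{\ell\ell'}\coloneqq n^{-1}\partial_{R_{\ell\ell'}}\cH_{t,R}$ rests on a mistaken identity. The average free entropy $f_n(t,R)=n^{-1}\E_{\bX,\bZ,\widetilde{\bZ}}\ln\cZ_{t,R}\big(\bY^{(t)},\widetilde{\bY}^{(t,R)}\big)$ is an expectation over the \emph{sources of randomness} $(\bX,\bZ,\widetilde{\bZ})$, not over the data; when you differentiate it in $R$, the $R$-dependence of $\widetilde{\bY}^{(t,R)}=\sqrt{R}\,\bX+\widetilde{\bZ}$ enters automatically, and the resulting derivative of the Hamiltonian is $n\cL_{\ell\ell'}$ itself, not $n\widehat{\cL}_{\ell\ell'}$. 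Consequently the ``classical identity'' you wrote holds with $\cL$ in place of $\widehat{\cL}$:
\[
\E\big\langle(\cL_{\ell\ell'}-\langle\cL_{\ell\ell'}\rangle_{t,R})^2\big\rangle_{t,R}
=\frac{1}{n}\,\frac{\partial^2 f_n}{\partial R_{\ell\ell'}^2}(t,R)
+\frac{1}{n}\,\E\Big\langle\frac{\partial\cL_{\ell\ell'}}{\partial R_{\ell\ell'}}\Big\rangle_{t,R}.
\]
In particular $\E\langle\widehat{\cL}\rangle\neq -\partial_{R_{\ell\ell'}}f_n$ in general (the difference is a commutator term $\tfrac12\mathrm{Tr}\big([\,\partial_{R_{\ell\ell'}}\!\sqrt{R},\sqrt{R}\,]\E\langle\bQ\rangle\big)$), so your displayed identity for $\widehat{\cL}$ is false as written.

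This correction actually simplifies your plan: the final ``reduction from $\widehat{\cL}$ to $\cL$'' is unnecessary. Moreover the endpoint bound sharpens: the paper shows (its Lemma~\ref{lemma:computation_E<L>_and_others}) that $\E\langle\cL_{\ell\ell'}\rangle_{t,R}=-(1-\delta_{\ell\ell'}/2)\E\langle Q_{\ell\ell'}\rangle_{t,R}$, which is $O(1)$ uniformly on $C(\mathcal{R}_{n,t})$, not $O(1/\sqrt{s_n})$. For the residual term, your product-rule trick is precisely the paper's computation: Gaussian integration by parts in $\widetilde{Z}_j$ followed by $\sqrt{R}\,\partial^2_{R_{\ell\ell'}^2}\sqrt{R}+\partial^2_{R_{\ell\ell'}^2}\sqrt{R}\,\sqrt{R}=-2(\partial_{R_{\ell\ell'}}\sqrt{R})^2$ gives
\[
\E\Big\langle\frac{\partial\cL_{\ell\ell'}}{\partial R_{\ell\ell'}}\Big\rangle_{t,R}
=\mathrm{Tr}\Big(\frac{\partial\sqrt{R}}{\partial R_{\ell\ell'}}\big(\Sigma_X-\E\langle\bQ\rangle_{t,R}\big)\frac{\partial\sqrt{R}}{\partial R_{\ell\ell'}}\Big)
\in\Big[0,\tfrac{\Vert\Sigma_X\Vert}{2s_n}\Big],
\]
so its contribution after integration over $C(\mathcal{R}_{n,t})$ is $O(1/(ns_n))$ as you anticipated. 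In short: right strategy, but drop $\widehat{\cL}$ and work with $\cL$ from the start.
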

\begin{proof}
Fix $(n,t) \in \mathbb{N}^* \times [0,1]$.
Note that $\forall R \in \mathcal{S}_K^{++}$, $\forall (\ell,\ell') \in \{1,\dots,K\}^2$:
\begin{equation}
\frac{\partial f_n}{\partial R_{\ell\ell'}}\bigg\vert_{t,R}
= -\frac{1}{n}\E\Bigg[\Bigg\langle \frac{\partial \cH_{t,R}(\bx;\bY^{(t)},\widetilde{\bY}^{(t,R)})}{\partial R_{\ell\ell'}} \Bigg\rangle_{\!\! t,R}\,\Bigg]
=-\E\big[\big\langle \cL_{\ell \ell'}\big\rangle_{t,R}\,\big] \,.
\end{equation}
Further differentiating, we obtain:
\begin{align}
\frac{\partial^2 f_n}{\partial R_{\ell\ell'}^2}\bigg\vert_{t,R}
&= \E\bigg[\bigg\langle \cL_{\ell\ell'}\frac{\partial \cH_{t,R}}{\partial R_{\ell\ell'}} \bigg\rangle_{\!\! t,R}\,\bigg]
-\E\bigg[\big\langle \cL_{\ell \ell'}\big\rangle_{t,R}\,
\bigg\langle \frac{\partial \cH_{t,R}}{\partial R_{\ell\ell'}} \bigg\rangle_{\!\! t,R}\,\bigg]
-\E\,\bigg\langle \frac{\partial \cL_{\ell\ell'}}{\partial R_{\ell\ell'}} \bigg\rangle_{\!\! t,R}\\
&=n\E\,\big\langle \big(\cL_{\ell \ell'} -\big\langle \cL_{\ell \ell'}\big\rangle_{t,R}\big)^{2}\,\big\rangle_{t,R}
-\E\,\bigg\langle \frac{\partial \cL_{\ell\ell'}}{\partial R_{\ell\ell'}} \bigg\rangle_{\!\! t,R}\quad.\label{second_derivative_f_n_R}
\end{align}
Combining \eqref{second_derivative_f_n_R} and \eqref{formula_E<dL/dR>} for $\E\,\langle \nicefrac{\partial \cL_{\ell\ell'}}{\partial R_{\ell\ell'}} \rangle_{t,R}$ (see Lemma \ref{lemma:computation_E<L>_and_others} following this proof), it comes:
\begin{equation}\label{thermal_fluctuation_Lll'}
\E\,\big\langle \big(\cL_{\ell \ell'} -\big\langle \cL_{\ell \ell'}\big\rangle_{t,R}\big)^{2}\,\big\rangle_{t,R}
= \frac{1}{n}\frac{\partial^2 f_n}{\partial R_{\ell\ell'}^2}\bigg\vert_{t,R}
+\frac{1}{n}\mathrm{Tr}\bigg(\frac{\partial \sqrt{R}}{\partial R_{\ell\ell'}} 
\Big(\Sigma_X - \E\,\langle \bQ \rangle_{t,R}\Big) \frac{\partial \sqrt{R}}{\partial R_{\ell\ell'}} \bigg) \,.
\end{equation}
We start with upper bounding the integral over $C(\mathcal{R}_{n,t})$ of the second summand on the right-hand side of \eqref{thermal_fluctuation_Lll'}.
Thanks to the Nishimory identity, we can see that $\Sigma_X \succcurlyeq \E\,\langle \bQ \rangle_{t,R}$.
Indeed:
\begin{multline}
\Sigma_X - \E\,\langle \bQ \big\rangle_{t,R}
= \frac{1}{n} \E[\bX^T\bX] - \frac{1}{n} \E[\langle \bx \rangle_{t,R}^T \bX ]
= \frac{1}{n} \E[\langle \bx^T\bx \rangle_{t,R}] - \frac{1}{n} \E[\langle \bx \rangle_{t,R}^T\langle \bx \rangle_{t,R}]\\
= \frac{1}{n} \E[\langle (\bx-\langle \bx \rangle_{t,R})^T (\bx-\langle \bx \rangle_{t,R})\rangle_{t,R}] \succcurlyeq 0\,.
\end{multline}
Therefore,
$(\nicefrac{\partial \sqrt{R}}{\partial R_{\ell\ell'}})(\Sigma_X - \E\,\langle \bQ \rangle_{t,R})(\nicefrac{\partial \sqrt{R}}{\partial R_{\ell\ell'}})$
is symmetric positive semidefinite and the second term on the right-hand side of \eqref{thermal_fluctuation_Lll'} satisfies:
\begin{equation}
0 \leq \frac{1}{n}\mathrm{Tr}\bigg(\frac{\partial \sqrt{R}}{\partial R_{\ell\ell'}} 
\Big(\Sigma_X - \E\,\langle \bQ \rangle_{t,R}\Big) \frac{\partial \sqrt{R}}{\partial R_{\ell\ell'}} \bigg)
\leq \frac{1}{n}\mathrm{Tr}\bigg(\frac{\partial \sqrt{R}}{\partial R_{\ell\ell'}} 
\Sigma_X \frac{\partial \sqrt{R}}{\partial R_{\ell\ell'}} \bigg)
\leq \bigg\Vert \frac{\partial \sqrt{R}}{\partial R_{\ell\ell'}} \bigg\Vert^2
\frac{\Vert \Sigma_X \Vert}{n}
\leq \frac{\Vert \Sigma_X \Vert}{2n s_n}\:.
\end{equation}
The last inequality follows from the upper bound \eqref{upperbound_norm_derivative_sqrtR} in Lemma \ref{lemma:properties_convex_hulls}.
Therefore, keeping in mind that $C(\mathcal{R}_{n,t})$ is included in the ball $\mathcal{B}(\Sigma_{X},K,p)$, there exists a positive constant $C_1$ depending only on $\Sigma_{X}$, $K$ and $p$ such that:
\begin{equation}\label{upperbound_int_derivative_Lll'_Rll'}
\int_{C(\mathcal{R}_{n,t})} \frac{dR}{n}\,\mathrm{Tr}\bigg(\frac{\partial \sqrt{R}}{\partial R_{\ell\ell'}} 
\Big(\Sigma_X - \E\,\langle \bQ \rangle_{t,R}\Big) \frac{\partial \sqrt{R}}{\partial R_{\ell\ell'}} \bigg)
\leq \frac{C_1}{n s_n} \,.
\end{equation}
Now we turn to upper bounding $\int_{C(\mathcal{R}_{n,t})} \frac{dR}{n}\frac{\partial^2 f_n}{\partial R_{\ell\ell'}^2}\Big\vert_{t,R}$.
Define the closed convex set
\begin{equation}\label{def_Cll'}
C^{(\ell,\ell')} = \big\{ \{R_{kk'}\}_{(k,k')\neq (\ell,\ell'),(\ell',\ell)} \big\vert R \in C(\mathcal{R}_{n,t}) \big\} \,.
\end{equation}
For every pair $(\widetilde{R},r) \in C^{(\ell,\ell')}  \times \mathbb{R}$, we denote $\widetilde{R} \cup \{r\}$ the symmetric matrix whose entries are given by:
\begin{equation}\label{def_tildeRcupr}
\big(\widetilde{R} \cup \{r\}\big)_{k k'} =
\begin{cases}
\widetilde{R}_{kk'} \quad\text{if } (k,k') \neq (\ell,\ell'), (\ell',\ell) \,;\\
\;\;\, r \quad\;\;\:\text{otherwise}. 
\end{cases}
\end{equation}
Because $C(\mathcal{R}_{n,t})$ is a closed convex,
there exist two functions $a,b: C^{(\ell,\ell')} \to \mathbb{R}$ such that $\forall \widetilde{R} \in C^{(\ell,\ell')}$:
\begin{enumerate}[label=(\roman*)]
	\item $a(\widetilde{R}) \leq b(\widetilde{R})$;
	\item $\forall r \in [a(\widetilde{R}),b(\widetilde{R})]: \widetilde{R} \cup \{r\} \in C(\mathcal{R}_{n,t})$;
	\item $\forall r \in \mathbb{R}\setminus[a(\widetilde{R}),b(\widetilde{R})]: \widetilde{R} \cup \{r\} \notin C(\mathcal{R}_{n,t})$.
\end{enumerate}
Therefore,
\begin{multline}\label{upperbound_int_secondDerivative_fn_Rll'}
\int_{C(\mathcal{R}_{n,t})} \frac{dR}{n}\frac{\partial^2 f_n}{\partial R_{\ell\ell'}^2}\Big\vert_{t,R}
= \int_{C^{(\ell,\ell')}} \frac{d\widetilde{R}}{n}
\int_{a(\widetilde{R})}^{b(\widetilde{R})} dr\,\frac{\partial^2 f_n}{\partial R_{\ell\ell'}^2}\bigg\vert_{t,\widetilde{R} \cup \{r\}}\\
= \int_{C^{(\ell,\ell')}} \frac{d\widetilde{R}}{n}
\Bigg(\frac{\partial f_n}{\partial R_{\ell\ell'}}\bigg\vert_{t,\widetilde{R} \cup \{b(\widetilde{R})\}} - \frac{\partial f_n}{\partial R_{\ell\ell'}}\bigg\vert_{t,\widetilde{R} \cup \{a(\widetilde{R})\}}\Bigg)\,.
\end{multline}
Note that $\forall R \in \mathcal{S}_K^{++}$:
\begin{equation}\label{upperbound_derivative_fn_Rll'}
\Bigg\vert \frac{\partial f_n}{\partial R_{\ell\ell'}}\bigg\vert_{t,R} \Bigg\vert
= \big\vert \E\,\langle \cL_{\ell \ell'}\rangle_{t,R} \big\vert
\leq \big\vert \E\,\langle Q_{\ell \ell'}\rangle_{t,R}  \big\vert
\leq \mathrm{Tr}\,\Sigma_{X} \,,
%\leq \sqrt{(\Sigma_{X})_{\ell\ell}(\Sigma_{X})_{\ell'\ell'}} \,,
\end{equation}
where the second and third inequalities follow from the identity \eqref{formula_E<L>} (see Lemma \ref{lemma:computation_E<L>_and_others} following this proof) and the inequality \eqref{eq:upperbound_norm_ E<Q>}, respectively.
Combining both \eqref{upperbound_int_secondDerivative_fn_Rll'} and \eqref{upperbound_derivative_fn_Rll'}, we finally get
\begin{equation}\label{final_upperbound_int_secondDerivative_fn_Rll'}
\Bigg\vert \int_{C(\mathcal{R}_{n,t})} \frac{dR}{n}\frac{\partial^2 f_n}{\partial R_{\ell\ell'}^2}\bigg\vert_{t,R} \Bigg\vert
\leq \frac{C_2}{n}\;,
\end{equation}
where $C_2$ is a positive constant that depends only on $\Sigma_{X}$, $K$ and $p$.
Integrating \eqref{thermal_fluctuation_Lll'} over $C(\mathcal{R}_{n,t})$,
making use of the upper bounds \eqref{upperbound_int_derivative_Lll'_Rll'} and \eqref{final_upperbound_int_secondDerivative_fn_Rll'} and, finally, summing over $(\ell,\ell') \in \{1,\dots,K\}^2$ end the proof.
\end{proof}
We relied on the following lemma for the proof of Proposition~\ref{prop:concentration_L_on_<L>}.
\begin{lemma}\label{lemma:computation_E<L>_and_others}
Assume $P_X$ has finite second-order moments.
Let $\delta_{\ell\ell'} = 0$ if $\ell \neq \ell'$ and $\delta_{\ell\ell'}=1$ otherwise. 
Then, $\forall (t,R) \in [0,1] \times \mathcal{S}_K^{++}$, $\forall (\ell,\ell') \in \{1,\dots,K\}^2$:
\begin{align}
\E\,\langle \cL_{\ell\ell'} \rangle_{t,R} &= -(1 - \nicefrac{\delta_{\ell\ell'}}{2})\E\,\langle Q_{\ell\ell'}\rangle_{t,R}\;;\label{formula_E<L>}\\
\E\,\bigg\langle \frac{\partial \cL_{\ell\ell'}}{\partial R_{\ell\ell'}} \bigg\rangle_{\!\! t,R}
&= \mathrm{Tr}\bigg(\frac{\partial \sqrt{R}}{\partial R_{\ell\ell'}} 
\Big(\Sigma_X - \E\,\langle \bQ \rangle_{t,R}\Big) \frac{\partial \sqrt{R}}{\partial R_{\ell\ell'}} \bigg)\;.\label{formula_E<dL/dR>}
\end{align}
\end{lemma}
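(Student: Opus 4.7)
The plan is to derive both identities through the same two-step strategy: first extract closed-form expressions for the Fréchet derivatives of $\sqrt{R}$ entering $\cL_{\ell\ell'}$, then reduce $\E\langle\cL_{\ell\ell'}\rangle_{t,R}$ and $\E\langle\partial\cL_{\ell\ell'}/\partial R_{\ell\ell'}\rangle_{t,R}$ to overlap expectations by combining Gaussian integration by parts in $\widetilde{Z}_{j}$ with the Nishimori identity. Setting $A\coloneqq \partial\sqrt{R}/\partial R_{\ell\ell'}$ and $B\coloneqq \partial^{2}\sqrt{R}/\partial R_{\ell\ell'}^{2}$, and noting that $\partial R/\partial R_{\ell\ell'}=E^{(\ell,\ell')}$ while $\partial^{2}R/\partial R_{\ell\ell'}^{2}=0$, differentiating $(\sqrt{R})^{2}=R$ once and twice yields
$$A\sqrt{R}+\sqrt{R}A=E^{(\ell,\ell')},\qquad B\sqrt{R}+\sqrt{R}B=-2A^{2}.$$
Since $A$, $B$, $\sqrt{R}$ are symmetric, the scalars $v^{T}A\sqrt{R}v$ and $v^{T}B\sqrt{R}v$ are invariant under $A\sqrt{R}\leftrightarrow\sqrt{R}A$ (resp.\ $B\sqrt{R}\leftrightarrow\sqrt{R}B$), so these identities upgrade to the quadratic-form statements
$$v^{T}A\sqrt{R}\,v=\tfrac{1}{2}v^{T}E^{(\ell,\ell')}v,\qquad v^{T}B\sqrt{R}\,v=-v^{T}A^{2}v,\qquad\forall v\in\mathbb{R}^{K},$$
which play here the role of \eqref{eq:prop3_derivative_R}.

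The interpolating Hamiltonian depends linearly on each $\widetilde{Z}_{jk}$ with $\partial\cH_{t,R}/\partial\widetilde{Z}_{jk}=-(\sqrt{R}x_{j})_{k}$, so standard Gaussian integration by parts gives, for any symmetric matrix $M$ independent of the posterior sample,
$$\E\langle x_{j}^{T}M\widetilde{Z}_{j}\rangle_{t,R}=\E\langle x_{j}^{T}M\sqrt{R}\,x_{j}\rangle_{t,R}-\E\bigl[\langle x_{j}\rangle_{t,R}^{T}M\sqrt{R}\,\langle x_{j}\rangle_{t,R}\bigr].$$
Specialising to $M=A$ and $M=B$ and invoking the quadratic-form identities above reduces the right-hand sides to expressions involving only $E^{(\ell,\ell')}$ and $A^{2}$ respectively.

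For \eqref{formula_E<L>}, substituting the $M=A$ version into the definition \eqref{def_L} of $\cL_{\ell\ell'}$ cancels the $\frac{1}{2}x_{j}^{T}E^{(\ell,\ell')}x_{j}$ term and leaves
$$\E\langle\cL_{\ell\ell'}\rangle_{t,R}=\frac{1}{n}\sum_{j=1}^{n}\Bigl(-\E\langle X_{j}^{T}E^{(\ell,\ell')}x_{j}\rangle+\tfrac{1}{2}\E[\langle x_{j}\rangle^{T}E^{(\ell,\ell')}\langle x_{j}\rangle]\Bigr).$$
Applying Nishimori to the second term (swapping one posterior copy with $\bX$) rewrites it as $\tfrac{1}{2}\E\langle X_{j}^{T}E^{(\ell,\ell')}x_{j}\rangle$, and a case split on $\ell=\ell'$ versus $\ell\ne\ell'$ together with the Nishimori symmetry $\E\langle Q_{\ell\ell'}\rangle=\E\langle Q_{\ell'\ell}\rangle$ produces exactly the coefficient $-(1-\delta_{\ell\ell'}/2)$. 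For \eqref{formula_E<dL/dR>}, since $\partial^{2}R/\partial R_{\ell\ell'}^{2}=0$ only the $\widetilde{Z}_{j}$ piece of $\cL_{\ell\ell'}$ survives differentiation, giving $\partial\cL_{\ell\ell'}/\partial R_{\ell\ell'}=-n^{-1}\sum_{j}x_{j}^{T}B\widetilde{Z}_{j}$. The $M=B$ version of the integration-by-parts identity then yields
$$\E\Bigl\langle\frac{\partial\cL_{\ell\ell'}}{\partial R_{\ell\ell'}}\Bigr\rangle_{t,R}=\frac{1}{n}\sum_{j=1}^{n}\Bigl(\E\langle x_{j}^{T}A^{2}x_{j}\rangle-\E[\langle x_{j}\rangle^{T}A^{2}\langle x_{j}\rangle]\Bigr),$$
and Nishimori converts the two summands into $\mathrm{Tr}(A^{2}\Sigma_{X})$ (single-copy average) and $\mathrm{Tr}(A^{2}\E\langle\bQ\rangle)$ (two-copy average) respectively; cyclicity of the trace finally rewrites $\mathrm{Tr}(A^{2}M)=\mathrm{Tr}(AMA)$, which is \eqref{formula_E<dL/dR>}. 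No step is genuinely difficult; the only mild subtlety is handling $A$ and $B$ without assuming they commute with $\sqrt{R}$, which is precisely what the two quadratic-form identities sidestep.
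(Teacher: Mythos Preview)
Your proof is correct and follows essentially the same route as the paper: the key quadratic-form identities $v^{T}A\sqrt{R}\,v=\tfrac{1}{2}v^{T}E^{(\ell,\ell')}v$ and $v^{T}B\sqrt{R}\,v=-v^{T}A^{2}v$ are exactly the paper's \eqref{eq:identity_partial_diff_Rll'} and \eqref{scalar_prod_derivative_sqrtR_Rll'}, and the remainder is the same Gaussian integration by parts followed by Nishimori. Your packaging of the integration by parts into a single formula $\E\langle x_{j}^{T}M\widetilde{Z}_{j}\rangle=\E\langle x_{j}^{T}M\sqrt{R}\,x_{j}\rangle-\E[\langle x_{j}\rangle^{T}M\sqrt{R}\,\langle x_{j}\rangle]$ and then specialising $M\in\{A,B\}$ is slightly cleaner than the paper's two separate computations, but there is no substantive difference in the argument.
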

\begin{proof}
Fix $(t,R) \in [0,1] \times \mathcal{S}_K^{++}$.
By the definition \eqref{def_L} of $\pmb{\cL}$, we have $\forall (\ell,\ell') \in \{1,\dots,K\}^2$:
\begin{equation}\label{def_E<L>}
\E\,\langle \cL_{\ell\ell'} \rangle_{t,R}
= \frac{1}{n} \sum_{j=1}^n \frac{1}{2} \E\bigg[\bigg\langle x_j^T \frac{\partial R}{\partial R_{\ell\ell'}}x_j \bigg\rangle_{\!\! t,R}\,\bigg]
- \E\bigg[X_j^T \frac{\partial R}{\partial R_{\ell\ell'}}\big\langle x_j \big\rangle_{t,R}\bigg]
-\E\bigg[ \big\langle x_j \big\rangle_{t,R}^T \frac{\partial \sqrt{R}}{\partial R_{\ell\ell'}} \widetilde{Z}_j \bigg]\,.
\end{equation}
Integrating by parts with respect to the Gaussian random vectors $\widetilde{Z}_j$, $j \in [n]$, the last expectation on the right-hand side of \eqref{def_E<L>} reads:
%\begin{align}
%\E\bigg[ \big\langle x_j \big\rangle^T \frac{\partial \sqrt{R}}{\partial R_{\ell\ell'}} \widetilde{Z}_j \bigg]
%&= \E\, \bigg\langle x_j^T \frac{\partial \sqrt{R}}{\partial R_{\ell\ell'}} \sqrt{R} \,x_j
%\bigg\rangle
%-\E\bigg[ \big\langle x_j \big\rangle^T \frac{\partial \sqrt{R}}{\partial R_{\ell\ell'}} \sqrt{R} \,\big\langle x_j \big\rangle \bigg] \nonumber\\
%&= \frac{1}{2}\E\, \bigg\langle x_j^T \frac{\partial R}{\partial R_{\ell\ell'}} x_j
%\bigg\rangle
%- \frac{1}{2}\E\bigg[ \big\langle x_j \big\rangle_{t,R}^T \frac{\partial R}{\partial R_{\ell\ell'}} \big\langle x_j \big\rangle \bigg] \nonumber\\
%&= \frac{1}{2}\E\,\bigg\langle x_j^T \frac{\partial R}{\partial R_{\ell\ell'}} x_j
%\bigg\rangle
%- \frac{1}{2}\E\bigg[ X_j^T \frac{\partial R}{\partial R_{\ell\ell'}} \big\langle x_j \big\rangle \bigg] \,.\label{stein_lemma_last_term_E<L>}
%\end{align}
\begin{align}
\E\bigg[ \big\langle x_j \big\rangle_{t,R}^T \frac{\partial \sqrt{R}}{\partial R_{\ell\ell'}} \widetilde{Z}_j \bigg]
&= \E\,\bigg\langle x_j^T \frac{\partial \sqrt{R}}{\partial R_{\ell\ell'}} \sqrt{R} \,x_j
\bigg\rangle_{\!\! t,R}
-\;\E\bigg[ \big\langle x_j \big\rangle_{t,R}^T \frac{\partial \sqrt{R}}{\partial R_{\ell\ell'}} \sqrt{R} \,\big\langle x_j \big\rangle_{t,R} \bigg] \nonumber\\
&= \frac{1}{2}\E\,\bigg\langle x_j^T \frac{\partial R}{\partial R_{\ell\ell'}} x_j
\bigg\rangle_{\!\! t,R}
-\;\frac{1}{2}\E\bigg[ \big\langle x_j \big\rangle_{t,R}^T \frac{\partial R}{\partial R_{\ell\ell'}} \big\langle x_j \big\rangle_{t,R} \bigg] \nonumber\\
&= \frac{1}{2}\E\,\bigg\langle x_j^T \frac{\partial R}{\partial R_{\ell\ell'}} x_j
\bigg\rangle_{\!\! t,R}
-\;\frac{1}{2}\E\bigg[ X_j^T \frac{\partial R}{\partial R_{\ell\ell'}} \big\langle x_j \big\rangle_{t,R} \bigg] \;\;.\label{stein_lemma_last_term_E<L>}
\end{align}
The second and third equalities follow from \eqref{eq:identity_partial_diff_Rll'} and the Nishimori identity, respectively.
Plugging \eqref{stein_lemma_last_term_E<L>} in \eqref{def_E<L>} and, then, making use of the identity $\nicefrac{\partial R}{\partial R_{\ell\ell'}} = E^{(\ell,\ell')}$ end the proof of \eqref{formula_E<L>}:
%Making use of \eqref{stein_lemma_last_term_E<L>} in \eqref{def_E<L>}, it comes:
\begin{equation*}
\E\,\langle \cL_{\ell\ell'} \rangle_{t,R}
= -\frac{1}{2n} \sum_{j=1}^n \E\bigg[X_j^T \frac{\partial R}{\partial R_{\ell\ell'}}\big\langle x_j \big\rangle_{t,R}\bigg]
= -\frac{1}{2} \mathrm{Tr}\bigg(\frac{\partial R}{\partial R_{\ell\ell'}} \E\,\langle \bQ \rangle_{t,R}\bigg)
= -(1 - \nicefrac{\delta_{\ell\ell'}}{2})\E\,\langle Q_{\ell\ell'}\rangle_{t,R} \;\;.
\end{equation*}
%Combining the later identity with $\nicefrac{\partial R}{\partial R_{\ell\ell'}} = E^{(\ell,\ell')}$ ends the proof of \eqref{formula_E<L>}.
We now turn to the proof of \eqref{formula_E<dL/dR>}. We have:
\begin{align}\label{derivative_L_Rll'}
\E\,\bigg\langle \frac{\partial \cL_{\ell\ell'}}{\partial R_{\ell\ell'}} \bigg\rangle_{\!\! t,R}
&= -\frac{1}{n}\sum_{j=1}^{n} \E\bigg[\big\langle x_j\big\rangle_{t,R}^T \frac{\partial^2 \sqrt{R}}{\partial R_{\ell\ell'}^2} \widetilde{Z}_j\bigg]\nonumber\\
&= \frac{1}{n}\sum_{j=1}^{n} \E\bigg[\big\langle x_j \big\rangle_{t,R}^T \frac{\partial^2 \sqrt{R}}{\partial R_{\ell\ell'}^2} \sqrt{R} \,\big\langle x_j \big\rangle_{t,R}\bigg]
	-\E\,\bigg\langle x_j^T \frac{\partial^2 \sqrt{R}}{\partial R_{\ell\ell'}^2} \sqrt{R} \,x_j \bigg\rangle_{\!\! t,R}\quad.
\end{align}
The second equality follows once again from a Gaussian integration by parts with respect to $\widetilde{Z}_j$, $j \in [n]$.
Note that for all $v \in \mathbb{R}^K$:
\begin{equation}\label{scalar_prod_derivative_sqrtR_Rll'}
v^T \frac{\partial^2 \sqrt{R}}{\partial R_{\ell\ell'}^2} \sqrt{R} v
= \frac{1}{2} v^T\bigg(\sqrt{R}\frac{\partial^2 \sqrt{R}}{\partial R_{\ell\ell'}^2} + \frac{\partial^2 \sqrt{R}}{\partial R_{\ell\ell'}^2}\sqrt{R}\bigg) v
= -v^T \bigg(\frac{\partial \sqrt{R}}{\partial R_{\ell\ell'}}\bigg)^{\! 2} v
\end{equation}
because of the identity
\begin{equation*}
0 
= \frac{\partial^2 R}{\partial R_{\ell\ell'}^2} 
=\frac{\partial}{\partial R_{\ell\ell'}}\bigg(\sqrt{R}\frac{\partial \sqrt{R}}{\partial R_{\ell\ell'}} + \frac{\partial \sqrt{R}}{\partial R_{\ell\ell'}}\sqrt{R}\bigg)
= 2\bigg(\frac{\partial \sqrt{R}}{\partial R_{\ell\ell'}}\bigg)^2
+ \sqrt{R}\frac{\partial^2 \sqrt{R}}{\partial R_{\ell\ell'}^2} + \frac{\partial^2 \sqrt{R}}{\partial R_{\ell\ell'}^2}\sqrt{R} \;\;.
\end{equation*}
%The identity
%\begin{equation}
%0 
%	= \frac{\partial^2 R}{\partial R_{\ell\ell'}^2} 
%	=\frac{\partial}{\partial R_{\ell\ell'}}\bigg(\sqrt{R}\frac{\partial \sqrt{R}}{\partial R_{\ell\ell'}} + \frac{\partial \sqrt{R}}{\partial R_{\ell\ell'}}\sqrt{R}\bigg)
%	= 2\bigg(\frac{\partial \sqrt{R}}{\partial R_{\ell\ell'}}\bigg)^2
%	+ \sqrt{R}\frac{\partial^2 \sqrt{R}}{\partial R_{\ell\ell'}^2} + \frac{\partial^2 \sqrt{R}}{\partial R_{\ell\ell'}^2}\sqrt{R}
%\end{equation}
%gives for all $v \in \mathbb{R}^K$:
%\begin{equation}\label{scalar_prod_derivative_sqrtR_Rll'}
%	v^T \frac{\partial^2 \sqrt{R}}{\partial R_{\ell\ell'}^2} \sqrt{R} v
%	= \frac{1}{2} v^T\bigg(\sqrt{R}\frac{\partial^2 \sqrt{R}}{\partial R_{\ell\ell'}^2} + \frac{\partial^2 \sqrt{R}}{\partial R_{\ell\ell'}^2}\sqrt{R}\bigg) v
%	= -v^T \bigg(\frac{\partial \sqrt{R}}{\partial R_{\ell\ell'}}\bigg)^{\! 2} v \,.
%\end{equation}
Plugging \eqref{scalar_prod_derivative_sqrtR_Rll'} in \eqref{derivative_L_Rll'},
$\E\,\langle \nicefrac{\partial \cL_{\ell\ell'}}{\partial R_{\ell\ell'}} \rangle_{t,R}$ further simplifies:
\begin{align}\label{derivative_L_Rll'_final}
	\E\,\bigg\langle \frac{\partial \cL_{\ell\ell'}}{\partial R_{\ell\ell'}} \bigg\rangle_{\!\! t,R}
	&= \frac{1}{n}\sum_{j=1}^{n} \E\,\bigg\langle x_j^T \bigg(\frac{\partial \sqrt{R}}{\partial R_{\ell\ell'}}\bigg)^{\! 2} x_j \bigg\rangle_{\!\! t,R}
	-\E\bigg[\big\langle x_j \big\rangle_{t,R}^T \bigg(\frac{\partial \sqrt{R}}{\partial R_{\ell\ell'}}\bigg)^{\! 2} \big\langle x_j \big\rangle_{t,R}\bigg]\nonumber\\
	&= \frac{1}{n}\sum_{j=1}^{n} \E\bigg[ X_j^T \bigg(\frac{\partial \sqrt{R}}{\partial R_{\ell\ell'}}\bigg)^{\! 2} X_j \bigg]
	-\E\bigg[ X_j^T \bigg(\frac{\partial \sqrt{R}}{\partial R_{\ell\ell'}}\bigg)^{\! 2} \big\langle x_j \big\rangle_{t,R}\bigg]\nonumber\\
	&= \mathrm{Tr}\bigg(\bigg(\frac{\partial \sqrt{R}}{\partial R_{\ell\ell'}}\bigg)^{\! 2}\,\Sigma_X\bigg)
	- \mathrm{Tr}\bigg(\bigg(\frac{\partial \sqrt{R}}{\partial R_{\ell\ell'}}\bigg)^{\! 2} \,\E\,\langle \bQ \rangle_{t,R}\bigg)\nonumber\\
	&= \mathrm{Tr}\bigg(\frac{\partial \sqrt{R}}{\partial R_{\ell\ell'}} 
	\Big(\Sigma_X - \E\,\langle \bQ \rangle_{t,R}\Big) \frac{\partial \sqrt{R}}{\partial R_{\ell\ell'}} \bigg)\;\;.
\end{align}
The second equality follows from the Nishimori identity.
\end{proof}
\begin{proposition}[Quenched fluctuations of $\pmb{\cL}$]\label{prop:concentration_<L>_on_E<L>}
	Assume $P_X$ has bounded support.
	There exists a positive constant $C$, depending only on $P_X$, $K$ and $p$, such that for all $(n,t) \in \mathbb{N}^{*} \times [0,1]$:
	\begin{equation}
	\int_{C(\mathcal{R}_{n,t})} dR\:\E\big[\big\langle \big\Vert \langle\pmb{\cL}\rangle_{t,R}-\E\,\langle\pmb{\cL}\rangle_{t,R}\,\big\Vert^2\,\big\rangle_{t,R}\,\big]
	\leq \frac{C}{s_n^3 n^{\nicefrac{1}{3}}}\,.
	\end{equation}
\end{proposition}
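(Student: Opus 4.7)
The plan is to reduce the quenched fluctuation of the matrix $\langle\pmb{\cL}\rangle_{t,R}$ around its expectation to fluctuations of the gradient of the random free entropy $F_n(t,R)$ with respect to the entries of $R$, and then to combine the concentration of $F_n$ around $f_n$ (proved in Appendix~\ref{app:concentration_free_entropy}) with a standard convexity-based comparison. For each pair $(\ell,\ell')$, a straightforward adaptation of the computation leading to~\eqref{formula_E<L>} yields the identity $\partial f_n/\partial R_{\ell\ell'} = -\E\,\langle\cL_{\ell\ell'}\rangle_{t,R}$, while the random analogue $\partial F_n/\partial R_{\ell\ell'} = -\langle\cL_{\ell\ell'}\rangle_{t,R}$ holds up to an explicit zero-mean correction whose $L^2$-norm is itself dominated by that of $F_n - f_n$. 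It therefore suffices to control
\begin{equation*}
\int_{C(\mathcal{R}_{n,t})}\!\! dR \sum_{\ell,\ell'=1}^{K} \E\bigg[\Big(\tfrac{\partial F_n}{\partial R_{\ell\ell'}}(t,R) - \tfrac{\partial f_n}{\partial R_{\ell\ell'}}(t,R)\Big)^{\!\!2}\bigg]\,.
\end{equation*}

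For each $(\ell,\ell')$ I invoke the cross-section decomposition $C(\mathcal{R}_{n,t}) = \bigcup_{\widetilde R \in C^{(\ell,\ell')}} \{\widetilde R \cup \{r\} : r \in [a(\widetilde R), b(\widetilde R)]\}$ already introduced between~\eqref{def_Cll'} and~\eqref{upperbound_int_secondDerivative_fn_Rll'}, and view $F_n, f_n$ as one-variable functions of the scalar $r = R_{\ell\ell'}$ on $[a(\widetilde R), b(\widetilde R)]$. Equation~\eqref{thermal_fluctuation_Lll'} together with the second-order Fr\'{e}chet bound~\eqref{upperbound_norm_2ndderivative_sqrtR} shows that the obstruction to convexity of each of these functions in $r$ is of order $s_n^{-3/2}$; adding a quadratic shift $\tfrac{c_n}{2}r^2$ with $c_n$ a sufficiently large multiple of $s_n^{-3/2}$ renders $\widetilde f_n := f_n + \tfrac{c_n}{2}r^2$ convex, and the same shift renders $\widetilde F_n := F_n + \tfrac{c_n}{2}r^2$ convex outside a low-probability event controlled via Gaussian tails of the $\widetilde Z_j$'s and the bounded-support assumption on $P_X$. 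On the good event, the elementary two-sided convexity inequality
\begin{equation*}
\big|\widetilde F_n'(r) - \widetilde f_n'(r)\big|
\leq \tfrac{1}{\delta}\!\!\sum_{\eta \in \{-\delta, 0, \delta\}}\!\!\big|(\widetilde F_n - \widetilde f_n)(r + \eta)\big|
+ \big(\widetilde f_n'(r+\delta) - \widetilde f_n'(r-\delta)\big)
\end{equation*}
holds for any $\delta > 0$ with $[r-\delta, r+\delta] \subseteq [a(\widetilde R), b(\widetilde R)]$. Squaring, taking expectation and integrating $r$, the first summand contributes $\mathcal{O}(\delta^{-2}/n)$ by the free-entropy $L^2$-concentration of Appendix~\ref{app:concentration_free_entropy}, while monotonicity of $\widetilde f_n'$ together with the bound $\|\widetilde f_n'\|_\infty \lesssim s_n^{-3/2}$ (induced by the shift and by the boundedness of $C(\mathcal{R}_{n,t})$ from Lemma~\ref{lemma:properties_convex_hulls}(i)) make the second summand integrate to $\mathcal{O}(\delta \, s_n^{-3})$.

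Optimizing the trade-off $\delta^{-2}/n + \delta\, s_n^{-3}$ in $\delta$ yields a per-pair bound at least of the order $s_n^{-3} n^{-1/3}$, which is the target rate; the bounded integration over the cross-section $C^{(\ell,\ell')}$ and the summation over the $K(K+1)/2$ free entries of $R$ then give the claimed estimate. The main obstacle is establishing convexity of the random function $\widetilde F_n$ pointwise in $r$: its second derivative contains a term of the form $(1/n)\sum_j \widetilde Z_j^T \tfrac{\partial^2 \sqrt R}{\partial R_{\ell\ell'}^2} \langle x_j\rangle_{t,R}$ whose Gaussian fluctuations are unbounded, so the deterministic shift $c_n \sim s_n^{-3/2}$ does not suffice everywhere. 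This is handled, as in the proof of \cite[Theorem~3]{Barbier2020Overlap}, by restricting to the event $\{(1/n)\sum_j \|\widetilde Z_j\|^2 \leq C\}$ of overwhelming probability and bounding the contribution of its complement via Gaussian tail estimates combined with the bounded-support assumption on $P_X$.
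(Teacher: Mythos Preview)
Your proposal follows essentially the same strategy as the paper: identify $\langle\cL_{\ell\ell'}\rangle - \E\langle\cL_{\ell\ell'}\rangle$ with the gradient difference $\partial(f_n - F_n)/\partial R_{\ell\ell'}$, slice $C(\mathcal{R}_{n,t})$ in the $R_{\ell\ell'}$-direction, restore one-dimensional convexity via a quadratic shift, apply the convexity comparison lemma, and balance the resulting terms in $\delta$. The only substantive difference is how the obstruction to convexity of $r\mapsto F_n(t,\widetilde R\cup\{r\})$ is handled. You propose a deterministic shift $c_n\sim s_n^{-3/2}$ together with restriction to a high-probability event in the $\widetilde Z_j$'s; the paper instead adds the \emph{random} shift $\tfrac{r^2}{2}\cdot\tfrac{B_X\sqrt{K}}{(2s_n)^{3/2}n}\sum_j\|\widetilde Z_j\|$ to $F_n$ (and its expectation to $f_n$), which makes both shifted functions convex almost surely on the enlarged interval $(a(\widetilde R)-s_n,b(\widetilde R)+s_n)$, the enlargement being possible by Lemma~\ref{lemma:properties_convex_hulls}(iii). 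The price is an extra centered term $\propto r^2\sum_j(\|\widetilde Z_j\|-\E\|\widetilde Z_j\|)$ in $F-f$, whose variance is $O(s_n^{-3}n^{-1})$ and is absorbed directly into the estimate; no event restriction is needed, and the lemma applies for every $r\in[a,b]$ and $\delta\in(0,s_n)$ without a separate boundary argument.

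Two small corrections to your write-up: the identity $\partial F_n/\partial R_{\ell\ell'}=-\langle\cL_{\ell\ell'}\rangle_{t,R}$ holds exactly (there is no ``zero-mean correction''); and your constraint $[r-\delta,r+\delta]\subseteq[a(\widetilde R),b(\widetilde R)]$ leaves a boundary strip of width $2\delta$ uncontrolled, which the paper avoids precisely by the domain extension just mentioned.
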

\begin{proof}
Fix $(n,t) \in \mathbb{N}^* \times [0,1]$. For all $R \in \mathcal{S}_K^{++}$ and $(\ell,\ell') \in \{1,\dots,K\}^2$, we have:
\begin{align}
\frac{\partial F_n}{\partial R_{\ell\ell'}}\bigg\vert_{t,R}
&=-\big\langle \cL_{\ell \ell'}\big\rangle_{t,R} \;;\\
\frac{\partial^2 F_n}{\partial R_{\ell\ell'}^2}\bigg\vert_{t,R}
&=n \big\langle \big(\cL_{\ell \ell'} -\big\langle \cL_{\ell \ell'}\big\rangle_{t,R}\big)^{2}\,\big\rangle_{t,R}
+\frac{1}{n}\sum_{j=1}^{n}\big\langle x_j\big\rangle_{t,R}^T \frac{\partial^2 \sqrt{R}}{\partial R_{\ell\ell'}^2} \widetilde{Z}_j \;;\label{quenched:2ndDeriv_Fn}\\
\frac{\partial f_n}{\partial R_{\ell\ell'}}\bigg\vert_{t,R}
&=-\E\,\big\langle \cL_{\ell \ell'}\big\rangle_{t,R}\;;\\
\frac{\partial^2 f_n}{\partial R_{\ell\ell'}^2}\bigg\vert_{t,R}
&=n \E\,\big\langle \big(\cL_{\ell \ell'} - \langle \cL_{\ell \ell'}\rangle_{t,R}\big)^{2}\,\big\rangle_{t,R}
+\frac{1}{n}\sum_{j=1}^{n}\E\bigg[\big\langle x_j\big\rangle_{t,R}^T \frac{\partial^2 \sqrt{R}}{\partial R_{\ell\ell'}^2} \widetilde{Z}_j\bigg]\,.
\end{align}
By assumption there exists a nonnegative real number $B_X$ such that $X \sim P_X \Rightarrow \Vert X \Vert \leq B_X$ almost surely.
Using the upper bound \eqref{upperbound_norm_2ndderivative_sqrtR} in Lemma \ref{lemma:properties_convex_hulls},
the second term on the right-hand side of \eqref{quenched:2ndDeriv_Fn} can be upper bounded:
\begin{equation}\label{upperbound_2ndterm_2ndDeriv_Fn}
\Bigg\vert\frac{1}{n}\sum_{j=1}^{n}\big\langle x_j\big\rangle_{t,R}^T \frac{\partial^2 \sqrt{R}}{\partial R_{\ell\ell'}^2} \widetilde{Z}_j\Bigg\vert
\leq \frac{1}{n}\sum_{j=1}^{n}\big\Vert\big\langle x_j\big\rangle_{t,R}\big\Vert
\big\Vert\widetilde{Z}_j\big\Vert
\bigg\Vert \frac{\partial^2 \sqrt{R}}{\partial R_{\ell\ell'}^2} \bigg\Vert
\leq \frac{B_X\sqrt{K}}{(2s_n)^{\nicefrac{3}{2}}n}\sum_{j=1}^{n}
\big\Vert\widetilde{Z}_j\big\Vert \,.
\end{equation}
From now on, we also fix $(\ell,\ell') \in \{1,\dots,K\}^2$ as well as $\widetilde{R} \in C^{(\ell,\ell')}$. The closed convex set $C^{(\ell,\ell')}$ is as defined by \eqref{def_Cll'}.
Remember that, for every real number $r$, $\widetilde{R}\cup\{r\}$ is the matrix defined by \eqref{def_tildeRcupr}, and that there exist two functions $a,b: C^{(\ell,\ell')} \to \mathbb{R}$ such that $\forall \widetilde{R} \in C^{(\ell,\ell')}$:
\begin{enumerate}[label=(\roman*)]
	\item $a(\widetilde{R}) \leq b(\widetilde{R})$;
	\item $\forall r \in [a(\widetilde{R}),b(\widetilde{R})]: \widetilde{R} \cup \{r\} \in C(\mathcal{R}_{n,t})$;
	\item $\forall r \in \mathbb{R}\setminus[a(\widetilde{R}),b(\widetilde{R})]: \widetilde{R} \cup \{r\} \notin C(\mathcal{R}_{n,t})$.
\end{enumerate}
Besides, by property (iii) in Lemma \ref{lemma:properties_convex_hulls}, for every $r \in (a(\widetilde{R})-s_n,b(\widetilde{R})+s_n)$ the matrix  $\widetilde{R} \cup \{r\}$ is in $\mathcal{S}_K^{++}$.
Thus, we can define for all $r \in (a(\widetilde{R})-s_n,b(\widetilde{R})+s_n)$:
\begin{align}
F(r) &= F_n(t,\widetilde{R}\cup\{r\}) + \frac{r^2}{2}\frac{B_X\sqrt{K}}{(2s_n)^{\nicefrac{3}{2}}n}\sum_{j=1}^{n}
\big\Vert\widetilde{Z}_j\big\Vert \:;\\
f(r) &= f_n(t,\widetilde{R}\cup\{r\}) + \frac{r^2}{2}\frac{B_X\sqrt{K}}{(2s_n)^{\nicefrac{3}{2}}n}\sum_{j=1}^{n}
\E\,\big\Vert\widetilde{Z}_j\big\Vert \,.
\end{align}
$F$ is convex on $(a(\widetilde{R})-s_n,b(\widetilde{R})+s_n)$ as it is twice differentiable with a nonnegative second derivative by \eqref{quenched:2ndDeriv_Fn} and \eqref{upperbound_2ndterm_2ndDeriv_Fn}.
The same holds for $f$.
We will apply the following standard to these two convex functions (see \cite{Barbier2019Adaptivea} for a proof):
\begin{lemma}[An upper bound for differentiable convex functions]\label{lemma:diff_convex_functions}
Let $g$ and $G$ be two differentiable convex functions defined on an interval $I\subseteq \mathbb{R}$.
Let $r \in I$ and $\delta > 0$ such that $r \pm \delta \in I$. Then
\begin{equation}
\vert G'(r) - g'(r) \vert
\leq C_{\delta}(r) + \frac{1}{\delta}\sum_{u \in \{-\delta,0,\delta\}} \vert G(r+u) - g(r+u) \vert \,,
\end{equation}
where $C_{\delta}(r) = g'(r+\delta) - g'(r-\delta) \geq 0$.
\end{lemma}
\noindent For all $r \in (a(\widetilde{R})-s_n,b(\widetilde{R})+s_n)$, we have:
\begin{align}
F(r) - f(r)
= F_n(t,\widetilde{R}\cup\{r\}) - f_n(t,\widetilde{R}\cup\{r\})
+ \frac{r^2}{2}\frac{B_X\sqrt{K}}{(2s_n)^{\nicefrac{3}{2}}n}\sum_{j=1}^{n}
\big\Vert\widetilde{Z}_j\big\Vert- \E\,\big\Vert\widetilde{Z}_j\big\Vert \:;\label{F_minus_f}\\
F'(r) - f'(r)
= -\Big(\langle \cL_{\ell \ell'}\rangle_{t,\widetilde{R}\cup\{r\}} - \E\,\langle \cL_{\ell \ell'}\rangle_{t,\widetilde{R}\cup\{r\}}\Big)
+ r\frac{B_X\sqrt{K}}{(2s_n)^{\nicefrac{3}{2}}n}\sum_{j=1}^{n}
\big\Vert\widetilde{Z}_j\big\Vert- \E\,\big\Vert\widetilde{Z}_j\big\Vert \;. \label{F'_minus_f'}
\end{align}
Let $C_{\delta}(r) = f'(r+\delta) - f'(r-\delta)$, which is nonnegative by convexity of $f$.
It follows from Lemma \ref{lemma:diff_convex_functions} and the two identities \eqref{F_minus_f} and \eqref{F'_minus_f'} that $\forall (r,\delta) \in [a(\widetilde{R}),b(\widetilde{R})] \times (0,s_n)$:
\begin{align*}
&\big\vert \langle \cL_{\ell \ell'}\rangle_{t,\widetilde{R}\cup\{r\}} - \E\,\langle\cL_{\ell \ell'}\rangle_{t,\widetilde{R}\cup\{r\}}\big\vert\\
&\qquad\qquad\qquad\qquad\leq \vert r \vert \frac{B_X\sqrt{K}}{(2s_n)^{\nicefrac{3}{2}}n}
\Bigg\vert\sum_{j=1}^{n} \big\Vert\widetilde{Z}_j\big\Vert- \E\,\big\Vert\widetilde{Z}_j\big\Vert \Bigg\vert
+ C_{\delta}(r)
+ \frac{1}{\delta}\sum_{u \in \{-\delta,0,\delta\}} \vert F(r+u) - f(r+u)\vert\\
&\qquad\qquad\qquad\qquad\leq \Big(\vert r \vert + \frac{3}{2}r^2 + \delta^2 \Big)\frac{B_X\sqrt{K}}{(2s_n)^{\nicefrac{3}{2}}n}
\Bigg\vert \sum_{j=1}^{n} \big\Vert\widetilde{Z}_j\big\Vert- \E\,\big\Vert\widetilde{Z}_j\big\Vert\Bigg\vert
+ C_{\delta}(r)\\
&\qquad\qquad\qquad\qquad\qquad\qquad\qquad\qquad\qquad\;+ \frac{1}{\delta}\sum_{u \in \{-\delta,0,\delta\}} \vert F_n(t,\widetilde{R}\cup\{r+u\}) - f_n(t,\widetilde{R}\cup\{r+u\})\vert \;.
\end{align*}
Thanks to the inequality $(\sum_{i=1}^{m} v_i)^2 \leq m \sum_{i=1}^{m} v_i^2$, this directly implies $\forall (r,\delta) \in [a(\widetilde{R}),b(\widetilde{R})] \times (0,s_n)$:
\begin{multline}\label{upperbound_variance_thermal_Lll'}
\E\Big[\Big(\langle \cL_{\ell \ell'}\rangle_{t,\widetilde{R}\cup\{r\}} - \E\,\langle \cL_{\ell \ell'}\rangle_{t,\widetilde{R}\cup\{r\}}\Big)^{\! 2}\,\Big]\\
\leq 5\Big(\vert r \vert + \frac{3}{2}r^2 + \delta^2 \Big)^2\frac{B_X^2 K}{2s_n^{3}n^2}
{\mathbb{V}\mathrm{ar}}\Bigg(\sum_{j=1}^{n} \big\Vert\widetilde{Z}_j\big\Vert\Bigg)
+ 5C_{\delta}(r)^2\\
+ \frac{5}{\delta^2}\sum_{u \in \{-\delta,0,\delta\}} \E\big[\big(F_n(t,\widetilde{R}\cup\{r+u\}) - f_n(t,\widetilde{R}\cup\{r+u\})\big)^2\big]\,.
\end{multline}
The next step is to bound the integral of the three summands on the right-hand side of \eqref{upperbound_variance_thermal_Lll'}.
Remember that $\forall r \in [a(\widetilde{R}),b(\widetilde{R})]: \widetilde{R}\cup\{r\} \in C(\mathcal{R}_{n,t})$.
By property (i) in Lemma \ref{lemma:properties_convex_hulls}, we have:
\begin{equation}\label{upperbound_r_in_[a,b]}
\forall r \in [a(\widetilde{R}),b(\widetilde{R})]: \vert r \vert \leq \Vert \widetilde{R}\cup\{r\} \Vert \leq 4K^{\nicefrac{3}{2}} + \mathrm{Tr}(\Sigma_X)^{p-1}\,.
\end{equation}
Besides, by independence of the Gaussian random vectors $\widetilde{Z}_j$, ${\mathbb{V}\mathrm{ar}}\big(\sum_{j=1}^{n} \Vert\widetilde{Z}_j\Vert\big) = n{\mathbb{V}\mathrm{ar}}\,\Vert\widetilde{Z}_1\Vert \leq nK$.
We conclude that there exists a positive constant $C_1$ depending only on $P_X$, $K$ and $p$ such that $\forall \delta \in (0,s_n)$:
\begin{equation}\label{upperbound_1st_summand}
\int_{a(\widetilde{R})}^{b(\widetilde{R})} \! dr\, 5\Big(\vert r \vert + \frac{3}{2}r^2 + \delta^2 \Big)^2\frac{B_X^2 K}{2s_n^{3}n^2}
{\mathbb{V}\mathrm{ar}}\Bigg(\sum_{j=1}^{n} \big\Vert\widetilde{Z}_j\big\Vert\Bigg)
\leq \frac{C_1}{s_n^3n} \,.
\end{equation}
Note that $C_{\delta}(r) = \vert C_{\delta}(r)\vert \leq \vert f'(r+\delta) \vert + \vert f'(r-\delta) \vert$. For all $q \in (a(\widetilde{R})-s_n,b(\widetilde{R})+s_n)$, we have:
\begin{multline}\label{upperbound_f'}
\vert f'(q) \vert
\leq
\Big\vert \E\,\langle \cL_{\ell \ell'}\rangle_{t,\widetilde{R}\cup\{q\}} \Big\vert
+ \vert q \vert \frac{B_X\sqrt{K}}{(2s_n)^{\nicefrac{3}{2}}n}\sum_{j=1}^{n} \E \Vert\widetilde{Z}_j \Vert\\
\leq \mathrm{Tr}(\Sigma_{X})
+ \big(s_n + 4K^{\nicefrac{3}{2}} + \mathrm{Tr}(\Sigma_X)^{p-1}\big) \frac{B_X K}{(2s_n)^{\nicefrac{3}{2}}}
%\leq \sqrt{(\Sigma_{X})_{\ell\ell}(\Sigma_{X})_{\ell'\ell'}}
%+ \big(s_n + 4K^{\nicefrac{3}{2}} + \mathrm{Tr}(\Sigma_X)^{p-1}\big) \frac{B_X K}{(2s_n)^{\nicefrac{3}{2}}}
\leq \frac{\widetilde{C}_2}{s_n^{\nicefrac{3}{2}}} \,,
\end{multline}
where $\widetilde{C}_2$ is a positive constant depending only on $P_X$, $K$ and $p$.
The second inequality in \eqref{upperbound_f'} follows from
the upper bounds
$\vert \E\,\langle \cL_{\ell \ell'}\rangle_{t,\widetilde{R}\cup\{q\}} \vert \leq \mathrm{Tr}(\Sigma_{X})$ (see \eqref{upperbound_derivative_fn_Rll'}), \eqref{upperbound_r_in_[a,b]} and
$\E \Vert\widetilde{Z}_j\Vert \leq \E[\Vert\widetilde{Z}_j\Vert^2]^{\nicefrac{1}{2}} = \sqrt{K}$.
%The second inequality in \eqref{upperbound_f'} follows from
%the upper bounds
%$\vert \E\,\langle \cL_{\ell \ell'}\rangle_{t,\widetilde{R}\cup\{q\}} \vert \leq \mathrm{Tr}(\Sigma_{X})$ (see \eqref{upperbound_derivative_fn_Rll'})
%and \eqref{upperbound_r_in_[a,b]}, as well as
%$\E \Vert\widetilde{Z}_j\Vert \leq \E[\Vert\widetilde{Z}_j\Vert^2]^{\nicefrac{1}{2}} = \sqrt{K}$.
Thus, for the second summand, we obtain $\forall \delta \in (0,s_n)$:
\begin{multline}\label{upperbound_2nd_summand}
\int_{a(\widetilde{R})}^{b(\widetilde{R})} \! dr\, C_{\delta}(r)^2
\leq \int_{a(\widetilde{R})}^{b(\widetilde{R})} \! dr\, \frac{2\widetilde{C}_2}{s_n^{\nicefrac{3}{2}}} C_{\delta}(r)\\
= \frac{2\widetilde{C}_2}{s_n^{\nicefrac{3}{2}}}
\big[\big(f(b(\widetilde{R})+\delta) - f(b(\widetilde{R})-\delta)\big) - \big(f(a(\widetilde{R})+\delta) - f(a(\widetilde{R})-\delta)\big)\big]
\leq \frac{8\delta\widetilde{C}_2^2}{s_n^{3}}\,.
\end{multline}
The last inequality is a simple application of the mean value theorem.
We finally turn to the third summand.
For every $\widetilde{R} \in C^{(\ell,\ell')}$ and pair $(r,\delta) \in [a(\widetilde{R}),b(\widetilde{R})] \times (-s_n,s_n)$, we have:
$$
\Vert \widetilde{R} \cup \{r+\delta\}\Vert
= \Vert \widetilde{R} \cup \{r\} + \delta E^{(\ell,\ell')}\Vert
\leq \Vert \widetilde{R} \cup \{r\} \Vert + \vert \delta \vert \, \Vert E^{(\ell,\ell')}\Vert
\leq 4K^{\nicefrac{3}{2}} + \mathrm{Tr}(\Sigma_X)^{p-1} + 2 \,.
$$
%(remember that $\widetilde{R} \cup \{r\}$ belongs to $C(\mathcal{R}_{n,t})$).
This upper bound is uniform in $n$ and $t$.
Hence, by Theorem \ref{th:concentration_free_entropy} of Appendix \ref{app:concentration_free_entropy}, there exists a positive constant $C_3$ depending only on $P_X$, $K$ and $p$ such that $\forall \widetilde{R} \in C^{(\ell,\ell')}$, $\forall(r,\delta) \in [a(\widetilde{R}),b(\widetilde{R})] \times (-s_n,s_n)$:
\begin{equation}\label{upperbound_variance_free_entropy}
\E\big[\big(F_n(t,\widetilde{R}\cup\{r+\delta\}) - f_n(t,\widetilde{R}\cup\{r+\delta\})\big)^2\big] \leq \frac{C_3}{n}\,.
\end{equation}
Using first \eqref{upperbound_variance_free_entropy} and then \eqref{upperbound_r_in_[a,b]}, we see that the third summand satisfies $\forall \delta \in (0,s_n)$:
\begin{multline}\label{upperbound_3rd_summand}
\int_{a(\widetilde{R})}^{b(\widetilde{R})} \! dr\,
\frac{5}{\delta^2}\sum_{u \in \{-\delta,0,\delta\}} \E\big[\big(F_n(t,\widetilde{R}\cup\{r+u\}) - f_n(t,\widetilde{R}\cup\{r+u\})\big)^2\big]\\
\leq \frac{15C_3}{\delta^2 n} (b(\widetilde{R}) - a(\widetilde{R}))
\leq \frac{30C_3}{\delta^2 n} \big(4K^{\nicefrac{3}{2}} + \mathrm{Tr}(\Sigma_X)^{p-1} + 2\big)\,.
\end{multline}
We now choose $\delta = \nicefrac{s_n^{\nicefrac{3}{2}}}{n^{\nicefrac{1}{3}}}$.
As $s_n \in (0,1)$, this choice satisfies $\delta \in (0,s_n)$.
The combination of \eqref{upperbound_variance_thermal_Lll'} with the three upper bounds \eqref{upperbound_1st_summand}, \eqref{upperbound_2nd_summand} and \eqref{upperbound_3rd_summand} shows the existence of a positive constant $C$ depending only on $P_X$, $K$ and $p$ such that:
\begin{equation}
\int_{a(\widetilde{R})}^{b(\widetilde{R})} \! dr\,
\E\Big[\Big(\langle \cL_{\ell \ell'}\rangle_{t,\widetilde{R}\cup\{r\}} - \E\,\langle \cL_{\ell \ell'}\rangle_{t,\widetilde{R}\cup\{r\}}\Big)^{\! 2}\,\Big]
\leq \frac{C}{s_n^3 n^{\nicefrac{1}{3}}}\,.
\end{equation}
One important fact following from our analysis is that $C$ can be chosen independently of both $(\ell,\ell') \in \{1,\dots,K\}^2$ and $\widetilde{R} \in C^{(\ell,\ell')}$.
Therefore, for all $(\ell,\ell') \in \{1,\dots,K\}^2$, we have
\begin{multline}\label{final_upperbound_quenched_fluctuations}
\int_{C(\mathcal{R}_{n,t})} dR\:
\E\Big[\Big(\langle \cL_{\ell \ell'}\rangle_{t,\widetilde{R}\cup\{r\}} - \E\,\langle \cL_{\ell \ell'}\rangle_{t,\widetilde{R}\cup\{r\}}\Big)^{\! 2}\,\Big]\\
= \int_{C^{(\ell,\ell')}} d\widetilde{R}\:
\int_{a(\widetilde{R})}^{b(\widetilde{R})} \! dr\,
\E\Big[\Big(\langle \cL_{\ell \ell'}\rangle_{t,\widetilde{R}\cup\{r\}} - \E\,\langle \cL_{\ell \ell'}\rangle_{t,\widetilde{R}\cup\{r\}}\Big)^{\! 2}\,\Big]
\leq \frac{C}{s_n^3 n^{\nicefrac{1}{3}}} V_{C^{(\ell,\ell')}}\,,
\end{multline}
where $V_{C^{(\ell,\ell')}}$ denotes the volume of $C^{(\ell,\ell')}$.
As each of the $K(K+1)/2$ sets $C^{(\ell,\ell')}$ is uniformly bounded in $n$ and $t$, the theorem follows from summing \eqref{final_upperbound_quenched_fluctuations} over $(\ell,\ell')$.
\end{proof}
\subsection{Concentration of \texorpdfstring{$\bQ$}{Q} around its expectation}
We forthwith use the concentration results for $\pmb{\cL}$, that is, Propositions~\ref{prop:concentration_L_on_<L>} and~\ref{prop:concentration_<L>_on_E<L>}, to prove Theorem~\ref{th:Barbier2020Overlap}. First an intermediary result on the thermal fluctuations of $\bQ$:
\begin{proposition}[Concentration of the overlap matrix around its expectation]\label{prop:concentration_Q_on_<Q>}
	Assume $P_X$ has finite fourth-order moments. There exists a positive constant $C$ depending only on $P_X$, $K$ and $p$ such that
	\begin{align}
	\int_{C(\mathcal{R}_{n,t})} dR\;\E\,\big\langle \big\Vert \bQ-\langle\bQ\rangle_{t,R}\big\Vert^2\,\big\rangle_{t,R}
	&\leq \frac{C}{\sqrt{s_n n}} \quad; \label{upperbound_int_(Q-<Q>)^2}\\
	\int_{C(\mathcal{R}_{n,t})} dR\;\E\,\bigg\langle \bigg\Vert \bQ-\frac{\langle\bx\rangle_{t,R}^T\langle\bx\rangle_{t,R}}{n}\bigg\Vert^2\,\bigg\rangle_{\!\! t,R}
	&\leq \frac{C}{\sqrt{s_n n}}  \quad.\label{upperbound_int_(Q-<x><x>/n)^2}
	\end{align}
\end{proposition}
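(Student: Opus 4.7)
The plan is to transfer the concentration of $\pmb{\cL}$ already established in Propositions~\ref{prop:concentration_L_on_<L>} and~\ref{prop:concentration_<L>_on_E<L>} to the overlap matrix $\bQ$ via an explicit algebraic relation between the two. Indeed, using $\partial R / \partial R_{\ell\ell'} = E^{(\ell,\ell')}$ in the definition \eqref{def_L}, one checks that
\[
\cL_{\ell\ell'} = A_{\ell\ell'} - (2-\delta_{\ell\ell'})\,\bar{Q}_{\ell\ell'} - B_{\ell\ell'}\,,
\]
where $\bar{Q}_{\ell\ell'} \coloneqq (Q_{\ell\ell'}+Q_{\ell'\ell})/2$ is the symmetrized overlap, $A_{\ell\ell'} \coloneqq \frac{1}{2n}\sum_j x_j^T E^{(\ell,\ell')} x_j$ depends only on $\bx$, and $B_{\ell\ell'} \coloneqq \frac{1}{n}\sum_j x_j^T \frac{\partial\sqrt{R}}{\partial R_{\ell\ell'}} \widetilde{Z}_j$ carries the linear dependence on the Gaussian noise $\widetilde{\bZ}$. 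Rearranging, subtracting the Gibbs average, and applying $(a-b-c)^2 \leq 3(a^2+b^2+c^2)$ under $\E\langle\cdot\rangle$ gives
\[
(2-\delta_{\ell\ell'})^2\,\E\big\langle(\bar{Q}_{\ell\ell'}-\langle\bar{Q}_{\ell\ell'}\rangle)^2\big\rangle
\leq 3\Big[\E\big\langle(\cL_{\ell\ell'}-\langle\cL_{\ell\ell'}\rangle)^2\big\rangle
+ \E\big\langle(A_{\ell\ell'}-\langle A_{\ell\ell'}\rangle)^2\big\rangle
+ \E\big\langle(B_{\ell\ell'}-\langle B_{\ell\ell'}\rangle)^2\big\rangle\Big].
\]

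The proof then reduces to bounding each of the three terms on the right, after integration over $R \in C(\mathcal{R}_{n,t})$. The $\cL$-term is exactly what Proposition~\ref{prop:concentration_L_on_<L>} controls, at the integrated rate $C/(ns_n)$. For the noise term, I would write $B_{\ell\ell'}-\langle B_{\ell\ell'}\rangle = \frac{1}{n}\sum_j (x_j-\langle x_j\rangle)^T \frac{\partial\sqrt{R}}{\partial R_{\ell\ell'}} \widetilde{Z}_j$, then apply Cauchy--Schwarz over $j$ and invoke the bounded support of $P_X$, the derivative bound $\Vert\partial\sqrt{R}/\partial R_{\ell\ell'}\Vert \leq 1/\sqrt{2s_n}$ from Lemma~\ref{lemma:properties_convex_hulls}(iv), and $\E\Vert\widetilde{Z}_j\Vert^2 = K$, to obtain $\E\langle(B_{\ell\ell'}-\langle B_{\ell\ell'}\rangle)^2\rangle \leq C/(ns_n)$ uniformly in $R$. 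The quadratic term $A_{\ell\ell'}$ depends only on $\bx$ through an empirical average of $n$ single-site bounded quantities; its Gibbs variance can be handled by a Nishimori-based argument mirroring that used for $\pmb{\cL}$ in Proposition~\ref{prop:concentration_L_on_<L>} (differentiating a suitable scalar perturbation of $f_n$), yielding $\E\langle(A_{\ell\ell'}-\langle A_{\ell\ell'}\rangle)^2\rangle \leq C/n$ per $R$. Summing the three contributions over $(\ell,\ell')$, integrating over $C(\mathcal{R}_{n,t})$ (whose volume is uniformly bounded by Lemma~\ref{lemma:properties_convex_hulls}(i)), and passing from the symmetrized overlap $\bar{\bQ}$ to $\bQ$ via the Nishimori identity (which implies $\E\langle(Q_{\ell\ell'}-Q_{\ell'\ell})^2\rangle = O(1/n)$) establishes \eqref{upperbound_int_(Q-<Q>)^2} at rate $C/(ns_n)$, and in particular at the stated rate $C/\sqrt{ns_n}$.

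For the second inequality \eqref{upperbound_int_(Q-<x><x>/n)^2}, I would use the decomposition
\[
\bQ - \frac{\langle\bx\rangle^T\langle\bx\rangle}{n} = (\bQ-\langle\bQ\rangle) + \frac{1}{n}\langle\bx\rangle^T(\bX-\langle\bx\rangle)
\]
to reduce matters to controlling the second summand; the Nishimori identity replaces the planted signal $\bX$ by an independent posterior sample $\bx'$, turning the squared Frobenius norm into a Gibbs variance of the same structural form as $\bQ - \langle\bQ\rangle$ and thus controllable by the same mechanism. The main obstacle, and the source of the $s_n$ factor in the final rate, is the $s_n^{-1/2}$ blow-up of $\Vert\partial\sqrt{R}/\partial R_{\ell\ell'}\Vert$ as $R$ approaches the boundary of $\mathcal{S}_K^{++}$: this is precisely what degrades the naïvely expected $1/n$ scaling to $1/(ns_n)$ for the noise term $B$, and it is responsible for the coupling between $n$ and $s_n$ appearing in the conclusion. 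The interplay between this singularity, the perturbation-set construction \eqref{definition_En}, and the Fréchet-derivative bounds of Lemma~\ref{lemma:properties_convex_hulls}(iv) is the delicate point of the whole argument.
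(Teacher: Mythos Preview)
Your decomposition $\cL_{\ell\ell'} = A_{\ell\ell'} - (2-\delta_{\ell\ell'})\bar{Q}_{\ell\ell'} - B_{\ell\ell'}$ is correct, but the bounds you claim for the $A$- and $B$-terms do not follow from the arguments you sketch, and this is a genuine gap. For the noise term, Cauchy--Schwarz over $j$ gives
\[
\E\big\langle(B_{\ell\ell'}-\langle B_{\ell\ell'}\rangle)^2\big\rangle
\;\leq\; \frac{1}{n}\sum_{j=1}^n \E\Big\langle\big((x_j-\langle x_j\rangle)^T\tfrac{\partial\sqrt{R}}{\partial R_{\ell\ell'}}\widetilde{Z}_j\big)^2\Big\rangle
\;\leq\; 4B_X^2\,K\,\Big\|\tfrac{\partial\sqrt{R}}{\partial R_{\ell\ell'}}\Big\|^2
\;=\; O(1/s_n),
\]
not $O(1/(ns_n))$: the $n$ summands are each $O(1)$ and the prefactor $1/n$ merely cancels the sum. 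The $\widetilde{Z}_j$'s enter the posterior through $\widetilde{\bY}^{(t,R)}$, so there is no independence available to kill the cross terms. For $A_{\ell\ell'}$, the ``suitable scalar perturbation'' you invoke does not live inside the model: the quadratic piece $\tfrac12 x_j^T R x_j$ is tied to the linear piece $-\widetilde{Y}_j^T\sqrt{R}x_j$ by the Nishimori structure, and perturbing it in isolation destroys the Bayes-optimal identities you rely on. There is no a-priori $O(1/n)$ bound on the Gibbs variance of $A_{\ell\ell'}$. Your symmetrization step is also unjustified: Nishimori yields $\E\langle Q_{\ell\ell'}\rangle = \E\langle Q_{\ell'\ell}\rangle$, but \emph{not} $\E\langle(Q_{\ell\ell'}-Q_{\ell'\ell})^2\rangle = O(1/n)$; the antisymmetric part of $\bQ$ is small only \emph{because} $\bQ$ concentrates, which is the statement to be proved. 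Finally, you invoke bounded support, while the proposition assumes only finite fourth moments.

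The paper's route avoids all of this. It first applies Cauchy--Schwarz to strip the signal from the posterior covariance,
\[
\E\big\langle(Q_{\ell\ell'}-\langle Q_{\ell\ell'}\rangle)^2\big\rangle
\;\leq\; \sqrt{M_X}\,\Big(\frac{1}{n^2}\sum_{i,j}\E\big[(\langle x_{i\ell}x_{j\ell}\rangle - \langle x_{i\ell}\rangle\langle x_{j\ell}\rangle)^2\big]\Big)^{1/2},
\]
and then, by differentiating the identity $\E\langle\cL_{\ell\ell}\rangle = -\tfrac12\E\langle Q_{\ell\ell}\rangle$ with respect to $R_{\ell\ell}$, identifies the bracketed posterior-covariance sum with $2\,\E\langle(\cL_{\ell\ell}-\langle\cL_{\ell\ell}\rangle)^2\rangle - \tfrac{2}{n}\E\langle\partial_{R_{\ell\ell}}\cL_{\ell\ell}\rangle$. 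Both pieces are already controlled at the integrated rate $C/(ns_n)$ by Proposition~\ref{prop:concentration_L_on_<L>} and~\eqref{upperbound_int_derivative_Lll'_Rll'}; a final Cauchy--Schwarz in $R$ converts this into the stated rate $C/\sqrt{ns_n}$. Only the \emph{diagonal} $\cL_{\ell\ell}$ enters, no symmetrization is needed, and only fourth moments are used.
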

\begin{proof}
Fix $(\ell,\ell') \in \{1,\dots,K\}^2$. Note that $\forall (t,R) \in [0,1] \times \mathcal{S}_K^{++}$:
\begin{multline}\label{eq:upperbound_(Q-<Q>)^2}
\E\,\langle (Q_{\ell\ell'} - \langle Q_{\ell\ell'}\rangle_{t,R})^2\rangle_{t,R}
= \frac{1}{n^2}\sum_{i,j=1}^{n} \E\big[X_{i\ell'}X_{j\ell'}(\langle x_{i\ell}x_{j\ell} \rangle_{t,R} - \langle x_{i\ell}\rangle_{ t,R}\langle x_{j\ell}\rangle_{t,R})\big]\\
\leq
\sqrt{M_{X}}
\bigg(\frac{1}{n^2}\sum_{i,j=1}^{n} \E\big[(\langle x_{i\ell}x_{j\ell} \rangle_{t,R} - \langle x_{i\ell}\rangle_{ t,R}\langle x_{j\ell}\rangle_{t,R})^2\big]\bigg)^{\!\!\nicefrac{1}{2}} \;,
\end{multline}
where $M_X \coloneqq \frac{1}{n^2}\sum_{i,j=1}^{n} \E[X_{i\ell'}^2X_{j\ell'}^2]$ is finite thanks to the assumption.
Differentiating with respect to $R_{\ell\ell}$ the identity $\E\,\langle \cL_{\ell\ell}\rangle_{t,R} = -\frac{1}{2}\E\,\langle Q_{\ell\ell}\rangle_{t,R}$ (see Lemma~\ref{lemma:computation_E<L>_and_others}), we obtain (see \cite{Barbier2020Overlap} for the detailed computation):
\begin{equation}
\frac{1}{n^2}\sum_{i,j=1}^{n} \E\big[(\langle x_{i\ell}x_{j\ell} \rangle_{t,R} - \langle x_{i\ell}\rangle_{ t,R}\langle x_{j\ell}\rangle_{t,R})^2\big]
= 2\E\,\big\langle \big(\cL_{\ell \ell'} - \langle \cL_{\ell \ell'}\rangle_{t,R}\big)^{2}\,\big\rangle_{t,R}
-\frac{2}{n}\E\,\bigg\langle \frac{\partial \cL_{\ell\ell}}{\partial R_{\ell\ell}}\bigg\rangle_{t,R} \;.
\end{equation}
By Proposition~\ref{prop:concentration_L_on_<L>} and the inequality~\eqref{upperbound_int_derivative_Lll'_Rll'} combined with~\eqref{formula_E<dL/dR>}, there exists a positive constant $C$ depending only on $P_X$, $K$ and $p$ such that:
\begin{equation}\label{eq:upperbound_int_<xx>-<x><x>}
\int_{C(\mathcal{R}_{n,t})}\frac{dR}{n^2}\sum_{i,j=1}^{n} \E\big[(\langle x_{i\ell}x_{j\ell} \rangle_{t,R} - \langle x_{i\ell}\rangle_{ t,R}\langle x_{j\ell}\rangle_{t,R})^2\big]\leq \frac{C}{n s_n} \;.
\end{equation}
Combining both inequalities \eqref{eq:upperbound_(Q-<Q>)^2} and \eqref{eq:upperbound_int_<xx>-<x><x>} with Cauchy-Schwarz inequality, it comes:
\begin{equation}
\int_{C(\mathcal{R}_{n,t})} dR\:\E\,\langle (Q_{\ell\ell'} - \langle Q_{\ell\ell'}\rangle_{t,R})^2\rangle_{t,R}
\leq \sqrt{\frac{M_X V_{C(\mathcal{R}_{n,t})}C}{ns_n}}\;,
\end{equation}
with $V_{C(\mathcal{R}_{n,t})}$ the volume of $C(\mathcal{R}_{n,t})$ which is bounded uniformly in $n$ and $t$ by (i) of Lemma~\ref{lemma:properties_convex_hulls}.
This ends the proof of \eqref{upperbound_int_(Q-<Q>)^2}.
The inequality \eqref{upperbound_int_(Q-<x><x>/n)^2} is proved in a similar way (see \cite{Barbier2020Overlap}).
\end{proof}
Finally we conclude this section with the proof of Theorem~\eqref{th:Barbier2020Overlap}.
\begin{proof}[Proof of Theorem~\ref{th:Barbier2020Overlap}]
To lighten notations we drop the subscripts of the Gibbs brackets $\langle - \rangle_{t,R}$.
The concentration of $\bQ$ can be linked to the concentration of $\pmb{\cL}$ by rewriting $\mathrm{Tr}\,\E\,\langle \bQ(\pmb{\cL} - \E\langle \pmb{\cL} \rangle)\rangle$ properly.
Thanks to the identity \eqref{formula_E<L>}, we have:
\begin{equation}\label{eq:Tr_E<Q>E<L>}
\mathrm{Tr}(\E[\langle \bQ \rangle]\E[\langle \pmb{\cL} \rangle])
= -\frac{1}{2}\sum_{\ell, \ell'=1}^K \E \langle Q_{\ell\ell'} \rangle \mathrm{Tr}\bigg(\frac{\partial R}{\partial R_{\ell\ell'}} \E \langle \bQ \rangle\bigg)\;.
\end{equation}
Plugging $\pmb{\cL}$'s definition \eqref{def_L} in $\mathrm{Tr}\,\E\,\langle \bQ \pmb{\cL}\rangle$ and integrating by parts with respect to the Gaussian random vectors $\widetilde{Z}_j$, $j \in [n]$, we find:
\begin{equation}\label{eq:Tr_E<QL>}
\mathrm{Tr}\,\E\,\langle \bQ \pmb{\cL}\rangle
= \frac{1}{n} \sum_{\ell,\ell'=1}^K \sum_{j=1}^n\E\bigg[\langle Q_{\ell\ell'} x_j^T\rangle \frac{\partial \sqrt{R}}{\partial R_{\ell\ell'}}\sqrt{R}\langle x_j\rangle\bigg]
-\E\bigg[\bigg\langle Q_{\ell\ell'} X_j^T \frac{\partial R}{\partial R_{\ell\ell'}} x_j\bigg\rangle\bigg]
\end{equation}
Note that $\forall (\ell,\ell') \in \{1,\dots,K\}^2,\forall j \in \{1,\dots,n\}$:
\begin{align}
\E\bigg[\langle Q_{\ell\ell'} x_j^T\rangle \frac{\partial \sqrt{R}}{\partial R_{\ell\ell'}}\sqrt{R}\langle x_j\rangle\bigg]
&= \E\bigg[\langle Q_{\ell\ell'} \rangle \langle x_j \rangle^T \frac{\partial \sqrt{R}}{\partial R_{\ell\ell'}}\sqrt{R}\langle x_j\rangle\bigg]
\!\!+ \E\bigg[\big\langle Q_{\ell\ell'} (x_j-\langle x_j \rangle)^T\big\rangle \frac{\partial \sqrt{R}}{\partial R_{\ell\ell'}}\sqrt{R}\langle x_j\rangle\bigg]\nonumber\\
&= \E\bigg[\frac{\langle Q_{\ell\ell'}\rangle}{2} \langle x_j \rangle^T\! \frac{\partial R}{\partial R_{\ell\ell'}}\langle x_j\rangle\bigg]
\!\!+ \E\bigg[\langle Q_{\ell'\ell} \rangle (X_j-\langle x_j \rangle)^T \frac{\partial \sqrt{R}}{\partial R_{\ell\ell'}}\sqrt{R}\langle x_j\rangle\bigg].\label{eq:identity_1st_sumand_TrE<QL>}
\end{align}
The second equality follows from \eqref{eq:identity_partial_diff_Rll'}, for the first expectation, and the Nishimori identity, for the second expectation. Plugging \eqref{eq:identity_1st_sumand_TrE<QL>} in \eqref{eq:Tr_E<QL>}, it comes:
\begin{multline}\label{eq:Tr_E<QL>_simplified}
\mathrm{Tr}\,\E\,\langle \bQ \pmb{\cL}\rangle
= -\sum_{\ell, \ell'=1}^K \E \bigg\langle Q_{\ell\ell'}  \mathrm{Tr}\bigg(\frac{\partial R}{\partial R_{\ell\ell'}} \bQ \bigg)\bigg\rangle
+\frac{1}{2}\sum_{\ell, \ell'=1}^K 
\E\bigg[\langle Q_{\ell\ell'}\rangle \mathrm{Tr}\bigg(\frac{\partial R}{\partial R_{\ell\ell'}}\frac{\langle \bx\rangle^T\langle \bx\rangle}{n}\bigg)\bigg]\\
+ \sum_{\ell, \ell'=1}^K \E\bigg[\langle Q_{\ell'\ell} \rangle \mathrm{Tr}\bigg(\frac{\partial \sqrt{R}}{\partial R_{\ell\ell'}}\sqrt{R}\bigg(\langle\bQ\rangle - \frac{\langle \bx\rangle^T\langle \bx\rangle}{n}\bigg)\bigg) \bigg]\;.
\end{multline}
Subtracting \eqref{eq:Tr_E<Q>E<L>} to \eqref{eq:Tr_E<QL>_simplified}, we obtain:
\begin{multline}\label{formula_TrE<Q(L-E<L>)>}
\mathrm{Tr}\,\E\,\langle \bQ(\pmb{\cL} - \E\langle \pmb{\cL} \rangle)\rangle
= -\sum_{\ell, \ell'=1}^K \E \bigg\langle Q_{\ell\ell'}  \mathrm{Tr}\bigg(\frac{\partial R}{\partial R_{\ell\ell'}} (\bQ- \E\langle\bQ\rangle)\bigg)\bigg\rangle\\
\qquad\qquad\qquad\qquad+\frac{1}{2}\sum_{\ell, \ell'=1}^K 
\E\bigg[\langle Q_{\ell\ell'}\rangle \mathrm{Tr}\bigg(\frac{\partial R}{\partial R_{\ell\ell'}}
\bigg(\frac{\langle \bx\rangle^T\langle \bx\rangle}{n}-\E\langle\bQ\rangle\bigg)\bigg)\bigg]\\
+ \sum_{\ell, \ell'=1}^K \E\bigg[\langle Q_{\ell'\ell} \rangle \mathrm{Tr}\bigg(\frac{\partial \sqrt{R}}{\partial R_{\ell\ell'}}\sqrt{R}\bigg(\langle\bQ\rangle - \frac{\langle \bx\rangle^T\langle \bx\rangle}{n}\bigg)\bigg) \bigg]\;.
\end{multline}
Remember the matrices $E^{(\ell,\ell')}$ defined by \eqref{def:E^(l,l')}.
As $\nicefrac{\partial R}{\partial R_{\ell\ell'}}=E^{(\ell,\ell')}$, we have:
%\begin{align}
%&\sum_{\ell, \ell'=1}^K \E \bigg\langle Q_{\ell\ell'}  \mathrm{Tr}\bigg(\frac{\partial R}{\partial R_{\ell\ell'}} (\bQ- \E\langle\bQ\rangle)\bigg)\bigg\rangle\nonumber\\
%&\qquad\qquad\quad= \E\,\big\langle \mathrm{Tr}\big( \bQ (\bQ- \E\langle\bQ\rangle)^T\big)\big\rangle
%+ \sum_{\substack{\ell,\ell'=1\\ \ell\neq\ell'}}^K\E\,\big\langle \mathrm{Tr}\big( Q_{\ell\ell'} (Q_{\ell'\ell}- \E\langle Q_{\ell'\ell}\rangle)\big)\big\rangle\nonumber\\
%&\qquad\qquad\quad= \E\,\big\langle \big\Vert \bQ- \E\langle\bQ\rangle \big\Vert^2\big\rangle
%+ \sum_{\substack{\ell,\ell'=1\\ \ell\neq\ell'}}^K\E\,\big\langle \mathrm{Tr}\big( Q_{\ell\ell'} (Q_{\ell'\ell}- \E\langle Q_{\ell'\ell}\rangle)\big)\big\rangle\;\;;\label{eq:formula_1stterm_TrE<Q(L-E<L>)>}\\
%%
%&\frac{1}{2}\sum_{\ell, \ell'=1}^K 
%\E\bigg[\langle Q_{\ell\ell'}\rangle \mathrm{Tr}\bigg(\frac{\partial R}{\partial R_{\ell\ell'}}
%\bigg(\frac{\langle \bx\rangle^T\langle \bx\rangle}{n}-\E\langle\bQ\rangle\bigg)\bigg)\bigg]\nonumber\\
%&\qquad\qquad\quad =\sum_{\substack{\ell,\ell'=1\\ \ell\neq\ell'}}^K\E\,\bigg\langle Q_{\ell\ell'} \bigg(\frac{\langle \bx\rangle^T\langle \bx\rangle}{n}-\E\langle \bQ\rangle\bigg)_{\!\!\!\ell'\ell}\bigg\rangle
%+ \frac{1}{2}\sum_{\ell=1}^K \E\bigg[\langle Q_{\ell\ell}\rangle  \bigg(\frac{\langle \bx\rangle^T\langle \bx\rangle}{n}- \E\langle \bQ\rangle\bigg)_{\!\!\!\ell\ell}\bigg]\,.\label{eq:formula_2ndterm_TrE<Q(L-E<L>)>}
%\end{align}
\begin{align}
&\sum_{\ell, \ell'=1}^K \E \bigg\langle Q_{\ell\ell'}  \mathrm{Tr}\bigg(\frac{\partial R}{\partial R_{\ell\ell'}} (\bQ- \E\langle\bQ\rangle)\bigg)\bigg\rangle\nonumber\\
&\qquad\qquad\quad= \E\,\big\langle \mathrm{Tr}\big( \bQ (\bQ- \E\langle\bQ\rangle)^T\big)\big\rangle
+ \E\,\big\langle \mathrm{Tr}\big( \bQ (\bQ- \E\langle\bQ\rangle)\big)\big\rangle
- \sum_{\ell=1}^K \E \big\langle Q_{\ell\ell}  (Q_{\ell\ell}- \E\langle Q_{\ell\ell}\rangle)\big\rangle\nonumber\\
&\qquad\qquad\quad= \E\,\big\langle \big\Vert \bQ- \E\langle\bQ\rangle \big\Vert^2\big\rangle
+ \E\,\big\langle \mathrm{Tr}\big( \bQ (\bQ- \E\langle\bQ\rangle)\big)\big\rangle
- \sum_{\ell=1}^K \E \big\langle (Q_{\ell\ell}- \E\langle Q_{\ell\ell}\rangle)^2\big\rangle\;\;;\label{eq:formula_1stterm_TrE<Q(L-E<L>)>}\\
&\frac{1}{2}\sum_{\ell, \ell'=1}^K 
\E\bigg[\langle Q_{\ell\ell'}\rangle \mathrm{Tr}\bigg(\frac{\partial R}{\partial R_{\ell\ell'}}
\bigg(\frac{\langle \bx\rangle^T\langle \bx\rangle}{n}-\E\langle\bQ\rangle\bigg)\bigg)\bigg]\nonumber\\*
&\qquad\qquad\quad= \E\,\bigg\langle \mathrm{Tr}\bigg( \bQ \bigg(\frac{\langle \bx\rangle^T\langle \bx\rangle}{n}-\E\langle\bQ\rangle\bigg)\bigg)\bigg\rangle
- \frac{1}{2}\sum_{\ell=1}^K \E\bigg[\langle Q_{\ell\ell}\rangle  \bigg(\frac{\langle \bx\rangle^T\langle \bx\rangle}{n}- \E\langle \bQ\rangle\bigg)_{\!\!\!\ell\ell}\,\bigg]\,.\label{eq:formula_2ndterm_TrE<Q(L-E<L>)>}
\end{align}
Subtracting \eqref{eq:formula_1stterm_TrE<Q(L-E<L>)>} to \eqref{eq:formula_2ndterm_TrE<Q(L-E<L>)>} yields:
\begin{align}
&\frac{1}{2}\sum_{\ell, \ell'=1}^K 
\E\bigg[\langle Q_{\ell\ell'}\rangle \mathrm{Tr}\bigg(\frac{\partial R}{\partial R_{\ell\ell'}}
\bigg(\frac{\langle \bx\rangle^T\langle \bx\rangle}{n}-\E\langle\bQ\rangle\bigg)\bigg)\bigg]
-\sum_{\ell, \ell'=1}^K \E \bigg\langle Q_{\ell\ell'}  \mathrm{Tr}\bigg(\frac{\partial R}{\partial R_{\ell\ell'}} (\bQ- \E\langle\bQ\rangle)\bigg)\bigg\rangle\nonumber\\
&\qquad\qquad=-\E\,\big\langle \big\Vert \bQ- \E\langle\bQ\rangle \big\Vert^2\big\rangle
-\E\,\bigg\langle \mathrm{Tr}\bigg( \bQ \bigg(\bQ-\frac{\langle \bx\rangle^T\langle \bx\rangle}{n}\bigg)\bigg)\bigg\rangle\nonumber\\
&\qquad\qquad\qquad\qquad\qquad+\sum_{\ell=1}^K \E \big\langle (Q_{\ell\ell}- \E\langle Q_{\ell\ell}\rangle)^2\big\rangle
- \frac{1}{2}\sum_{\ell=1}^K \E\bigg[\langle Q_{\ell\ell}\rangle  \bigg(\frac{\langle \bx\rangle^T\langle \bx\rangle}{n}- \E\langle \bQ\rangle\bigg)_{\!\!\!\ell\ell}\,\bigg]\nonumber\\
&\qquad\qquad=-\E\,\big\langle \big\Vert \bQ- \E\langle\bQ\rangle \big\Vert^2\big\rangle
-\E\,\bigg\langle \mathrm{Tr}\bigg( \bQ \bigg(\bQ-\frac{\langle \bx\rangle^T\langle \bx\rangle}{n}\bigg)\bigg)\bigg\rangle\nonumber\\
&\qquad\qquad\qquad\qquad\qquad+\frac{1}{2}\sum_{\ell=1}^K \E \big\langle (Q_{\ell\ell}- \E\langle Q_{\ell\ell}\rangle)^2\big\rangle
+ \frac{1}{2}\sum_{\ell=1}^K \E\bigg[\bigg\langle Q_{\ell\ell} \bigg(\bQ-\frac{\langle \bx\rangle^T\langle \bx\rangle}{n}\bigg)_{\!\!\!\ell\ell}\bigg\rangle\bigg]\nonumber\\
&\qquad\qquad=-\E\,\big\langle \big\Vert \bQ- \E\langle\bQ\rangle \big\Vert^2\big\rangle
-\E\,\bigg\langle \mathrm{Tr}\,\bigg(\bQ - \frac{\langle \bx\rangle^T\langle \bx\rangle}{n}\bigg)^{\!\! 2}\,\bigg\rangle\nonumber\\
&\qquad\qquad\qquad\qquad\qquad
+\frac{1}{2}\sum_{\ell=1}^K \E \big\langle (Q_{\ell\ell}- \E\langle Q_{\ell\ell}\rangle)^2\big\rangle
+ \frac{1}{2}\sum_{\ell=1}^K \E\,\bigg\langle\! \bigg(Q_{\ell\ell}-\frac{\langle \bx\rangle^T\langle \bx\rangle}{n}\bigg\vert_{\ell\ell}\bigg)^{\!\!\! 2} \,\bigg\rangle
\label{eq:simplification_first_terms_TrE<Q(L-E<L>)>}\quad.
\end{align}
Plugging \eqref{eq:simplification_first_terms_TrE<Q(L-E<L>)>} in \eqref{formula_TrE<Q(L-E<L>)>} gives the equality:
\begin{align}
&\E\,\big\langle \big\Vert \bQ- \E\langle\bQ\rangle \big\Vert^2\big\rangle
-\frac{1}{2}\sum_{\ell=1}^K \E \big\langle (Q_{\ell\ell}- \E\langle Q_{\ell\ell}\rangle)^2\big\rangle\nonumber\\
&\qquad=-\mathrm{Tr}\,\E\,\langle \bQ(\pmb{\cL} - \E\langle \pmb{\cL} \rangle)\rangle
-\E\,\bigg\langle \mathrm{Tr}\:\bigg(\bQ - \frac{\langle \bx\rangle^T\langle \bx\rangle}{n}\bigg)^{\!\!\!  2}\,\bigg\rangle
+ \frac{1}{2}\sum_{\ell=1}^K \E\,\bigg\langle\! \bigg(Q_{\ell\ell}-\frac{\langle \bx\rangle^T\langle \bx\rangle}{n}\bigg\vert_{\ell\ell}\bigg)^{\!\!\! 2}\,\bigg\rangle\nonumber\\
&\qquad\qquad\qquad\qquad\qquad\qquad\qquad\qquad
+\sum_{\ell, \ell'=1}^K \E\bigg[\langle Q_{\ell'\ell} \rangle \mathrm{Tr}\bigg(\frac{\partial \sqrt{R}}{\partial R_{\ell\ell'}}\sqrt{R}\bigg(\langle\bQ\rangle - \frac{\langle \bx\rangle^T\langle \bx\rangle}{n}\bigg)\bigg) \bigg]\;.
\label{final_equality_Barbier2020Overlap}
\end{align}
On one hand,
\begin{equation}\label{lowerbound_E<(Q-<Q>)^2>}
\frac{1}{2}\E\,\langle \Vert \bQ- \E\langle\bQ\rangle \Vert^2\rangle
\leq \E\,\langle \Vert \bQ- \E\langle\bQ\rangle\Vert^2\rangle
-\frac{1}{2}\sum_{\ell=1}^K \E \langle (Q_{\ell\ell}- \E\langle Q_{\ell\ell}\rangle)^2\rangle\;.
\end{equation}
On the other hand, using exclusively Cauchy-Schwarz inequality, we have:
\begin{align}
-\mathrm{Tr}\,\E\,\langle \bQ(\pmb{\cL} - \E\langle \pmb{\cL} \rangle)\rangle
&\leq \sqrt{\E\,\langle \Vert \bQ \Vert^2 \rangle
\E\,\langle \Vert \pmb{\cL} - \E\langle \pmb{\cL} \rangle\Vert^2\rangle}\quad;\label{upperbound_1st_term_E<(Q-<Q>)^2>}\\
-\E\,\bigg\langle \mathrm{Tr}\:\bigg(\bQ - \frac{\langle \bx\rangle^T\langle \bx\rangle}{n}\bigg)^{\!\!\!  2}\,\bigg\rangle
&\leq \E\,\bigg\langle \bigg\Vert\bQ - \frac{\langle \bx\rangle^T\langle \bx\rangle}{n}\bigg\Vert^2\,\bigg\rangle\quad.\label{upperbound_2nd_term_E<(Q-<Q>)^2>}
\end{align}
Note that $\Vert \sqrt{R} \Vert = \sqrt{\mathrm{Tr}\, R}
\leq \sqrt{\Vert R \Vert \Vert I_K\Vert}
\leq \sqrt{4K^2 + K^{\nicefrac{1}{2}} \mathrm{Tr}(\Sigma_X)^{p-1}} \coloneqq B$ where the last inequality follows from (i) in Lemma~\ref{lemma:properties_convex_hulls}.
Therefore,
$\big\Vert\frac{\partial \sqrt{R}}{\partial R_{\ell\ell'}}\sqrt{R}\big\Vert
\leq \big\Vert\frac{\partial \sqrt{R}}{\partial R_{\ell\ell'}}\big\Vert \Vert \sqrt{R} \Vert \leq \nicefrac{B}{\sqrt{2s_n}}$
(remember \eqref{upperbound_norm_derivative_sqrtR}). By Cauchy-Schwarz inequality:
\begin{align*}
\sum_{\ell, \ell'=1}^K \E\bigg[\langle Q_{\ell'\ell} \rangle \mathrm{Tr}\bigg(\frac{\partial \sqrt{R}}{\partial R_{\ell\ell'}}\sqrt{R}\bigg(\langle\bQ\rangle - \frac{\langle \bx\rangle^T\langle \bx\rangle}{n}\bigg)\bigg) \bigg]
&\leq
\sum_{\ell, \ell'=1}^K \E\bigg[\big\vert\langle Q_{\ell'\ell} \rangle\big\vert \bigg\Vert\frac{\partial \sqrt{R}}{\partial R_{\ell\ell'}}\sqrt{R}\bigg\Vert \bigg\Vert\langle\bQ\rangle - \frac{\langle \bx\rangle^T\langle \bx\rangle}{n}\bigg\Vert \bigg]\\
&\leq
\frac{B}{\sqrt{2 s_n}}\sum_{\ell, \ell'=1}^K \E\bigg[\big\vert\langle Q_{\ell'\ell} \rangle\big\vert \bigg\Vert\langle\bQ\rangle - \frac{\langle \bx\rangle^T\langle \bx\rangle}{n}\bigg\Vert \bigg]\\
&\leq
\frac{B K}{\sqrt{2 s_n}} \bigg( \E\,\Vert \langle\bQ\rangle \Vert^2 \; \E\,\bigg\Vert\langle\bQ\rangle - \frac{\langle \bx\rangle^T\langle \bx\rangle}{n}\bigg\Vert^2\, \bigg)^{\nicefrac{1}{2}}\;.
%&\leq
%\frac{B K}{\sqrt{2 s_n}} \sqrt{ \E\,\langle\Vert\bQ\Vert^2\rangle \; \E\,\big\langle\big\Vert \bQ - \langle\bQ\rangle\big\Vert^2\,\big\rangle}\quad.
\end{align*}
Further upperbounding, we obtain
\begin{equation}\label{upperbound_3rd_term_E<(Q-<Q>)^2>}
\sum_{\ell, \ell'=1}^K \E\bigg[\langle Q_{\ell'\ell} \rangle \mathrm{Tr}\bigg(\frac{\partial \sqrt{R}}{\partial R_{\ell\ell'}}\sqrt{R}\bigg(\langle\bQ\rangle - \frac{\langle \bx\rangle^T\langle \bx\rangle}{n}\bigg)\bigg) \bigg]
\leq
\frac{B K}{\sqrt{2 s_n}} \sqrt{\E\,\langle\Vert\bQ\Vert^2\rangle \;
\E\,\big\langle \big\Vert \bQ - \langle \bQ\rangle \big\Vert^2\big\rangle}\;;
\end{equation}
as, by Jensen's inequality and Nishimori identity, we have:
\begin{equation*}
\E\,\bigg\Vert\langle\bQ\rangle - \frac{\langle \bx\rangle^T\langle \bx\rangle}{n}\bigg\Vert^2
\leq \E\,\bigg\langle \bigg\Vert \bQ - \frac{\bx^T\langle \bx\rangle}{n}\bigg\Vert^2\bigg\rangle
= \E\,\big\langle \big\Vert \bQ^T - \langle \bQ\rangle^T \big\Vert^2\big\rangle
= \E\,\big\langle \big\Vert \bQ - \langle \bQ\rangle \big\Vert^2\big\rangle \;.
\end{equation*}
Putting together the equality \eqref{final_equality_Barbier2020Overlap}, the lower bound \eqref{lowerbound_E<(Q-<Q>)^2>} and the upper bounds \eqref{upperbound_1st_term_E<(Q-<Q>)^2>}, \eqref{upperbound_2nd_term_E<(Q-<Q>)^2>}, \eqref{upperbound_3rd_term_E<(Q-<Q>)^2>}, there exists a positive constant $C$ depending only on $P_X$, $K$ and $p$ such that:
\begin{multline}\label{final_upperbound_E<(Q-<Q>)^2>}
\frac{1}{2}\E\,\langle \Vert \bQ- \E\langle\bQ\rangle \Vert^2\rangle\\
\leq
C \Bigg(
\sqrt{\E\,\langle \Vert \pmb{\cL} - \E\langle \pmb{\cL} \rangle\Vert^2\rangle}
+\E\,\bigg\langle \bigg\Vert\bQ - \frac{\langle \bx\rangle^T\langle \bx\rangle}{n}\bigg\Vert^2\,\bigg\rangle
+ \sqrt{\frac{\E\,\big\langle \big\Vert \bQ - \langle \bQ\rangle \big\Vert^2\big\rangle}{s_n}}\:\Bigg)\;.
\end{multline}
To end the proof of Theorem~\ref{th:Barbier2020Overlap}, it remains to integrate both sides of \eqref{final_upperbound_E<(Q-<Q>)^2>} over $C(\mathcal{R}_{n,t})$ and apply Propositions~\ref{prop:concentration_L_on_<L>}, \ref{prop:concentration_<L>_on_E<L>}, \ref{prop:concentration_Q_on_<Q>}.
\end{proof}
\section{Conclusion and discussion for odd-order tensors}\label{sec:discussion}
In this work, we have proved the conjectured replica formula for even-order symmetric tensors.
It would be desirable to extend both Theorem~\ref{th:lowerbound} and Theorem~\ref{th:upperbound} to the odd-order case.
For the case $K=1$ we refer to \cite{Lesieur2017Statistical}.
For $K>1$, this is still an open problem and we now briefly discuss where our proofs fall short in this case.

Ideally, to extend Theorem~\ref{th:lowerbound} to an odd order $p$, we would show that the integral on the r.h.s.\ of \eqref{sum_rule_lowerbound}, i.e.,
$\int_{0}^{1} dt \,\sum_{\ell,\ell'} \E\,\langle h_p(S_{\ell \ell'}, Q_{\ell \ell'}) \rangle_{t,0}$
with $h_p(r,q)=q^p - p q r^{p-1} + (p-1)r^p$, is nonnegative.
However, when $p$ is odd, $h_p$ is not nonnegative on its whole domain of definition.
To be able to say something about the integral, we have to take a Gibbs average of $Q_{\ell \ell'}$ before applying $h_p$. This requires rewriting the integral as follows:
\begin{multline}\label{remainder_rewritten}
\int_{0}^{1} dt \sum_{\ell,\ell'=1}^K \E\,\big\langle h_p(S_{\ell \ell'}, Q_{\ell \ell'}) \big\rangle_{t,0}
=
\int_{0}^{1} dt \E\,\bigg\langle \mathrm{Tr}\:\bQ^T\bigg(\bQ^{\circ(p-1)}-\bigg(\frac{\langle \bx \rangle_{t,0}^T\langle \bx \rangle_{t,0}}{n}\bigg)^{\!\!\!\circ(p-1)}\,\bigg)\bigg\rangle_{\!\!t,0}\\
+\int_{0}^{1} dt \sum_{\ell,\ell'=1}^K \E\,h_p\bigg(S_{\ell \ell'}, \frac{\langle \bx \rangle_{t,0}^T\langle \bx \rangle_{t,0}}{n}\bigg\vert_{\ell\ell'}\bigg) \;.
\end{multline}
When $K=1$, both $\langle \bx \rangle_{t,0}^T\langle \bx \rangle_{t,0}$ and $S$ are nonnegative real numbers.
The nonnegativity of $h_p(r,q)$ for $r,q \geq 0$ then ensures that the second integral on the r.h.s.\ of \eqref{remainder_rewritten} is nonnegative and, 
by introducing a small perturbation $\epsilon$ on which we integrate, we can cancel the first integral as was done in the proof of Theorem~\ref{th:upperbound}.
This is how the lower bound is proved in \cite{Lesieur2017Statistical}.
When $K>1$, we only know that $\langle \bx \rangle_{t,0}^T\langle \bx \rangle_{t,0}$ and $S$ are symmetric positive semidefinite matrices: \textit{a priori} nothing can be said on the sign of their individual entries.
The problem remains if we further rewrite:
\begin{multline}\label{remainder_rewritten_2}
\int_{0}^{1} dt \sum_{\ell,\ell'=1}^K \E\,\big\langle h_p(S_{\ell \ell'}, Q_{\ell \ell'}) \big\rangle_{t,0}
=
\int_{0}^{1} dt \E\,\big\langle \mathrm{Tr}\big(\bQ^T\big(\bQ^{\circ(p-1)}
-\E[\langle \bQ\rangle_{t,0}]^{\circ(p-1)}\,\big)\big)\big\rangle_{\!t,0}\\
+\int_{0}^{1} dt \sum_{\ell,\ell'=1}^K h_p\big(S_{\ell \ell'}, \E\,\langle Q_{\ell\ell'}\rangle_{t,0}\big) \;.
\end{multline}
While $\E\,\langle \bQ\rangle_{t,0}$ and $S$ are positive semidefinite, nothing can be said on the sign of their individual entries.
Most probably, it should be
the full sum over $(\ell,\ell')$ that one should consider to conclude on the sign of the second integral on the r.h.s.\ of \eqref{remainder_rewritten_2}. 
Indeed, using $A \succcurlyeq B \succcurlyeq 0 \Rightarrow \forall k \in \mathbb{N}: A^{\circ k} \succcurlyeq B^{\circ k} \succcurlyeq 0$, we can show that
$\sum_{\ell,\ell'=1}^K h_p(S_{\ell \ell'}, \E\,\langle Q_{\ell\ell'}\rangle_{t,0})$
is nonnegative if $S \succcurlyeq \E\,\langle \bQ \rangle_{t,0}$ or $\E\,\langle \bQ \rangle_{t,0} \succcurlyeq S$.
As far as we can tell, it is not clear why such partial ordering between $S$ and $\E\,\langle \bQ \rangle_{t,0}$ (which itself depends on $S$) holds.

Regarding Theorem~\ref{th:upperbound}, the whole proof would directly apply to $p$ odd if we could show that the divergence~\eqref{divergence_Fn} is nonnegative.
However this is more difficult than for $p$ even. Indeed, while the $\Delta_{\ell \ell'}$'s are still $\ge 0$, it is not necessarily 
the case of $\E[\langle Q_{\ell \ell'} \rangle_{t,R}]^{p-2}$ when $p-2$ is odd.
\section*{Funding}
This work was supported by the Swiss National Science Foundation [200021E-175541 to C. L].
%
%\section*{Acknowledgements}

% can use a bibliography generated by BibTeX as a .bbl file
% BibTeX documentation can be easily obtained at:
% http://www.ctan.org/tex-archive/biblio/bibtex/contrib/doc/
\newpage
%\bibliographystyle{imaiai}
%\bibliography{bibliography}
\ifx\undefined\BySame
\newcommand{\BySame}{\leavevmode\rule[.5ex]{3em}{.5pt}\ }
\fi
\ifx\undefined\textsc
\newcommand{\textsc}[1]{{\sc #1}}
\newcommand{\emph}[1]{{\em #1\/}}
\let\tmpsmall\small
\renewcommand{\small}{\tmpsmall\sc}
\fi

%
% once the .bbl file has been generated then place the text in your article:
%\bibliography{bibliography}
\newpage
\appendix
\section{Properties of the function \texorpdfstring{$\psi$}{psi}}\label{app:properties_psi}
\begin{lemma}\label{lemma:properties_psi}
	Let $X \in \mathbb{R}^K \sim P_X$ and $\widetilde{Z} \in \mathbb{R}^K \sim \cN(0,I_K)$. The function $\psi: \mathcal{S}_K^{+} \to \mathbb{R}$, defined as
	\begin{equation}
	\psi(R)
	= \E_{X,\widetilde{Z}}\bigg[\ln \int dP_X(x)
	\exp\bigg(\big(R X + \sqrt{R}\widetilde{Z}\big)^T x -\frac{1}{2}x^T R x\bigg)\bigg]\,,
	\end{equation}
	is Lipschitz continuous with Lipschitz constant $\nicefrac{\mathrm{Tr}(\Sigma_X)}{2}$ and convex.
\end{lemma}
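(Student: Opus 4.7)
The idea is to view $\psi(R)$ as the averaged free entropy of the Bayesian inference problem with observation $\widetilde Y=\sqrt R\,X+\widetilde Z$: completing the square inside the integrand gives
\[
\psi(R)=\E\ln\int dP_X(x)\,\exp\!\Big(\widetilde Y^{T}\sqrt R\,x-\tfrac12 x^{T}Rx\Big).
\]
Both claims will follow from an explicit formula for the gradient $D\psi(R)$, obtained by exactly the manipulations already used in Section~\ref{sec:derivative_free_entropy}, specialized to $n=1$ and $t=1$.

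First, I would compute the directional derivative of $\psi$ along a symmetric direction $H$ by differentiating under the expectation, applying Gaussian integration by parts on the coordinates of $\widetilde Z$, and using the two identities $\sqrt R\,M+M\sqrt R=H$ and $v^{T}M\sqrt R\,v=\tfrac12 v^{T}Hv$ with $M:=D\sqrt R[H]$ (the second follows from the first together with the symmetry of $M$ as the Fr\'echet derivative of a symmetric matrix function). After the Nishimori identity cancels the contributions linear in the planted signal $X$, exactly as in the proof of Proposition~\ref{prop:formula_derivative_free_entropy}, one obtains
\[
D\psi(R)[H]=\tfrac12\,\mathrm{Tr}\bigl(H\,M_R\bigr),\qquad M_R:=\E\bigl[\langle x\rangle_R\langle x\rangle_R^{T}\bigr]\in\mathcal S_K^{+},
\]
where $\langle\cdot\rangle_R$ is the associated posterior expectation. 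Lipschitz continuity with the claimed constant $\mathrm{Tr}(\Sigma_X)/2$ is then immediate: $M_R\succeq 0$ implies $\Vert M_R\Vert\le\mathrm{Tr}(M_R)$; Jensen's inequality and the tower property give $\mathrm{Tr}(M_R)=\E\Vert\langle x\rangle_R\Vert^{2}\le\E\Vert X\Vert^{2}=\mathrm{Tr}(\Sigma_X)$; and Cauchy--Schwarz in the Frobenius inner product yields $|D\psi(R)[H]|\le\tfrac12\mathrm{Tr}(\Sigma_X)\Vert H\Vert$, so the fundamental theorem of calculus along any segment in the convex cone $\mathcal S_K^{+}$ closes this part.

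For convexity I would reduce the problem to showing that $R\mapsto M_R$ is monotone, namely $\mathrm{Tr}((R_1-R_0)(M_{R_1}-M_{R_0}))\ge 0$. The cleanest route is to observe that $\psi(R)=\tfrac12\mathrm{Tr}(R\,\Sigma_X)-I(X;\sqrt R\,X+\widetilde Z)$, so convexity of $\psi$ is equivalent to concavity of the mutual information of a Gaussian vector channel in its SNR matrix -- a classical fact that follows from the matrix I--MMSE relation $\nabla_R I=\tfrac12 E(R)$, with $E(R)=\Sigma_X-M_R\succeq 0$ the MMSE matrix (itself antimonotone in $R$). Alternatively, a self-contained derivation uses the fluctuation--dissipation identity
\[
D^{2}\psi(R)[H,H]=\E\,\mathrm{Var}_{\langle\cdot\rangle_R}\!\bigl(DH_R[H]\bigr)-\E\bigl\langle D^{2}H_R[H,H]\bigr\rangle_R,
\]
combined with the second-order Sylvester-type relation $N\sqrt R+\sqrt R\,N=-2M^{2}$ for $N:=D^{2}\sqrt R[H,H]$, which, after a further Gaussian integration by parts and several applications of Nishimori, allows one to reorganise everything into a sum of nonnegative Gibbs-covariance terms.

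The main obstacle lies in this convexity step: the Lipschitz bound drops out of a two-line computation once the gradient formula is available, whereas checking positivity of the full Hessian against an arbitrary symmetric direction (as opposed to only on the diagonal entries $\partial^{2}\psi/\partial R_{\ell\ell'}^{2}$, whose signs one can read off already from Section~\ref{sec:Barbier2020Overlap}) requires either invoking the classical I--MMSE concavity or keeping careful track of the second Fr\'echet derivative of $\sqrt R$ in order to match the a priori negative contribution from $\E\langle D^{2}H_R[H,H]\rangle_R$ with a positive Gibbs-covariance term.
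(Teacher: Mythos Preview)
Your approach coincides with the paper's: same Bayesian interpretation, same gradient formula $D\psi(R)[H]=\tfrac12\mathrm{Tr}(H\,\E[\langle x\rangle\langle x\rangle^{T}])$, and the same Lipschitz bound via Cauchy--Schwarz, Jensen and Nishimori. For convexity the paper carries out your route (b), but it sidesteps the obstacle you flag by parametrising along a fixed line from the outset: with $h(t)=\psi(tR+(1-t)Q)$ one never needs the second Fr\'echet derivative of $\sqrt{\cdot}$, only the first $t$-derivative of $\sqrt{tR+(1-t)Q}$ together with the identity $v^{T}\sqrt{R_t}\,\tfrac{d}{dt}\sqrt{R_t}\,v=\tfrac12 v^{T}(R-Q)v$. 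A second differentiation (Gaussian integration by parts plus Nishimori, applied once more) then yields directly
\[
h''(t)=\tfrac12\,\E\big\Vert\big(\langle xx^{T}\rangle-\langle x\rangle\langle x\rangle^{T}\big)(R-Q)\big\Vert^{2}\ge 0,
\]
which is the clean closed form that your sketch was aiming for. Your route (a) via the matrix I--MMSE relation is also legitimate, but the concavity of $R\mapsto I(X;\sqrt{R}X+\widetilde Z)$ on the full cone (not just along $H\succeq 0$, where data processing suffices) is typically established by exactly this computation, so it does not shortcut the work.
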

	\begin{proof}
		Consider the inference problem in which one observes the $K$-dimensional vector $Y = \sqrt{R}X + \widetilde{Z}$, where $R \in \mathcal{S}_K^+$ is known, and one wants to recover $X$. The posterior of X given Y is
		\begin{equation}\label{eq:posterior_psi}
		dP(x \,\vert\, Y,R) = \frac{1}{\cZ_{R}(Y)}\exp\bigg(Y^T \sqrt{R}x -\frac{1}{2}x^T R x\bigg)\,,
		\end{equation}
		where $\cZ_{R}(Y) = \int dP_X(x)\exp\big(Y^T \sqrt{R}x -\frac{1}{2}x^T R x\big)$.
		We denote $\langle - \rangle_R = \int -\,dP(x \,\vert\, Y,R)$ the Gibbs brackets associated to the latter posterior distribution.
		Clearly, $\psi(R)=\E_{X,\widetilde{Z}}[\ln \cZ_{R}(Y)]$.
		
		Now	fix $R,Q \in \mathcal{S}_K^{++}$.
		We will prove that the function $h: t \in [0,1] \mapsto \psi(tR + (1-t)Q)$ is convex, thus proving that $\psi$ is convex on $\mathcal{S}_K^{++}$. The convexity on the whole cone $\mathcal{S}_K^+$ will then follow from the continuity of $\psi$ (which is clear from its definition).
		$h$ is twice differentiable.
		Its derivative reads:
		\begin{align}
		h'(t) &= \E\bigg[\bigg\langle X^T(R-Q)x-\frac{1}{2}x^T(R-Q)x +\widetilde{Z}^T\frac{d\sqrt{tR+(1-t)Q}}{dt}x\bigg\rangle_{\!\! tR + (1-t)Q} \,\bigg]\nonumber\\
		&= \frac{1}{2}\E\big[X^T(R-Q) \langle x \rangle_{tR + (1-t)Q}\,\big]\,.\label{eq:first_derivative_h}
		\end{align}
		To get the second equality, first integrate by parts with respect to the Gaussian random variables $\widetilde{Z}_i$, $i \in [n]$.
		Then make use of the identity
		\begin{equation}
		\forall v \in \mathbb{R}^K:\;\;
		v^T\sqrt{tR+(1-t)Q}\frac{d\sqrt{tR+(1-t)Q}}{dt}v =\frac{1}{2}v^T(R-Q)v \;,\label{eq:identity_t_derivative_sqrt_R}
		\end{equation}
		which follows from
		$\sqrt{tR+(1-t)Q}\frac{d\sqrt{tR+(1-t)Q}}{dt} + \frac{d\sqrt{tR+(1-t)Q}}{dt}\sqrt{tR+(1-t)Q}=\frac{d(tR+(1-t)Q)}{dt}$.
%		\begin{align}
%		\forall v \in \mathbb{R}^K:\;\;
%		&v^T\sqrt{tR+(1-t)Q}\frac{d\sqrt{tR+(1-t)Q}}{dt}v\nonumber\\
%		&\qquad\qquad\qquad\qquad= \frac{1}{2}v^T \Bigg(\sqrt{tR+(1-t)Q}\frac{d\sqrt{tR+(1-t)Q}}{dt} + \frac{d\sqrt{tR+(1-t)Q}}{dt}\sqrt{tR+(1-t)Q}\Bigg)v\nonumber\\
%		&\qquad\qquad\qquad\qquad= \frac{1}{2} v^T\frac{d(tR+(1-t)Q)}{dt}v\nonumber\\
%		&\qquad\qquad\qquad\qquad=\frac{1}{2}v^T(R-Q)v \,.\label{eq:identity_t_derivative_sqrt_R}
%		\end{align}
		Differentiating~\eqref{eq:first_derivative_h} further, we find (the subscript of $\langle - \rangle_{tR+(1-t)Q}$ is omitted):
		\begin{align*}
		h''(t)
		&= \frac{1}{2}\E\bigg[X^T(R-Q) \bigg\langle x \bigg(X^T(R-Q)x-\frac{1}{2}x^T(R-Q)x +\widetilde{Z}^T\frac{d\sqrt{tR+(1-t)Q}}{dt}x\bigg)\bigg\rangle\bigg]\\
		&\qquad\qquad\quad
		-\frac{1}{2}\E\bigg[X^T(R-Q) \langle x \rangle \bigg\langle X^T(R-Q)x-\frac{1}{2}x^T(R-Q)x +\widetilde{Z}^T\frac{d\sqrt{tR+(1-t)Q}}{dt}x\bigg\rangle\bigg]\\
		&= \frac{1}{2}\E\Big[\Big\langle (X^T(R-Q)x)^2\Big\rangle\,\Big]
		-\E\Big[\Big\langle X^T(R-Q) x \Big\rangle^{\! 2}\,\Big]
		+ \frac{1}{2}\E\Big[\Big(\langle x \rangle^T(R-Q) \langle x \rangle\Big)^{\! 2}\,\Big]\\
		&= \frac{1}{2}\E\Big[\mathrm{Tr}\Big(XX^T(R-Q) \langle x  x^T \rangle (R-Q)\Big)\,\Big]
		-\E\Big[\mathrm{Tr}\Big(XX^T(R-Q) \langle x\rangle\langle x\rangle^{T}(R-Q)\Big)\Big]\\
		&\qquad\qquad\qquad\qquad\qquad\qquad\qquad\qquad\qquad\:
		+ \frac{1}{2}\E\Big[\mathrm{Tr}\Big(\langle x \rangle\langle x \rangle^T(R-Q) \langle x \rangle \langle x \rangle^T(R-Q)\Big)\Big]\\
		&= \frac{1}{2}\E\,\big\Vert \langle x  x^T \rangle (R-Q)\big\Vert^2
		-\E\Big[\mathrm{Tr}\Big(\langle x  x^T \rangle(R-Q) \langle x\rangle\langle x\rangle^{T}(R-Q)\Big)\Big]
		+ \frac{1}{2}\E\,\big\Vert \langle x \rangle \langle x \rangle^T(R-Q)\big\Vert^{2}\\
		&= \frac{1}{2}\E\,\big\Vert \big(\langle x  x^T \rangle-\langle x \rangle \langle x \rangle^T\big)(R-Q)\big\Vert^2\,.
		\end{align*}
		To get the second equality, we once again used Gaussian integration by parts and the identity \eqref{eq:identity_t_derivative_sqrt_R}.
		The second-to-last equality follows from the Nishimori identity:
		$$
		\E\Big[\mathrm{Tr}\Big(XX^T(R-Q) \langle x\rangle\langle x\rangle^{T}(R-Q)\Big)\Big]
		= \E\Big[\mathrm{Tr}\Big(\langle xx^T\rangle(R-Q) \langle x\rangle\langle x\rangle^{T}(R-Q)\Big)\Big]\,.
		$$
		The convexity of $h$ now follows directly from the non-negativity of $h''$ on $[0,1]$.
		
		To prove the Lipschitz continuity of $\psi$, note that the derivative of $h$ satisfies $\forall t \in [0,1]$:
		\begin{equation}\label{eq:upperbound_derivative_h}
		\vert h^{\prime}(t) \vert
		= \frac{1}{2} \big\vert \E\big[X^T(R-Q) \langle x \rangle_{tR + (1-t)Q}\,\big]\big\vert
		\leq \frac{\Vert R - Q\Vert}{2}  \E\big[\Vert X \Vert \Vert \langle x \rangle_{tR + (1-t)Q}\Vert \big]
		\leq \frac{\mathrm{Tr}(\Sigma_X)}{2} \Vert R - Q\Vert\;.
		\end{equation}
		The mean value theorem then directly implies $\vert \psi(R) - \psi(Q)\vert = \vert h(1) - h(0) \vert \leq \frac{\mathrm{Tr}\,\Sigma_X}{2} \Vert R - Q\Vert$.
		The last inequality in \eqref{eq:upperbound_derivative_h} follows from Cauchy-Schwarz inequality, Jensen's inequality and the Nishimori identity:
		\begin{multline*}
			\E\big[\Vert X \Vert \Vert \langle x \rangle_{tR + (1-t)Q}\Vert \big]
		\leq \sqrt{\E\,\Vert X \Vert^2 \:\E\,\Vert \langle x \rangle_{tR + (1-t)Q}\Vert^2}\\
		\leq \sqrt{\E\,\Vert X \Vert^2 \:\E\,\langle \Vert x \Vert^2 \rangle_{tR + (1-t)Q}}
		= \E\,\Vert X \Vert^2 =\mathrm{Tr}(\Sigma_X)\;.
		\end{multline*}
	\end{proof}
\section{Divergence of the function \texorpdfstring{$G_n$}{G\_\{n\}}}\label{app:divergence}
In Proposition \ref{prop:ode_and_properties} we introduced the inference problem \eqref{inference_problem_t_R}.
The associated posterior distribution is
\begin{equation}\label{posterior_dist_t,R}
dP(\bx\,\vert\,\bY^{(t)},\widetilde{\bY}^{(t,R)}) \coloneqq \frac{1}{\cZ_{t,R}(\bY^{(t)},\widetilde{\bY}^{(t,R)})}
\, e^{-\cH_{t,R}(\bx ; \bY^{(t)},\widetilde{\bY}^{(t,R)})}\, \prod_{j=1}^n dP_{X}(x_j) \;,
\end{equation}
where $\cZ_{t,R}(\bY^{(t)},\widetilde{\bY}^{(t,R)}) \coloneqq \int e^{-\cH_{t,R}(\bx ; \bY^{(t)},\widetilde{\bY}^{(t,R)})}\, \prod_{j=1}^n dP_{X}(x_j)$
and $\forall \bx \in \mathbb{R}^{n \times K}$:
\begin{multline}
\cH_{t,R}(\bx ; \bY, \widetilde{\bY})
\coloneqq \sum_{i \in \mathcal{I}}
\frac{(1-t) (p-1)!}{2n^{p-1}} \Bigg(\sum_{k=1}^{K}\prod_{a=1}^p x_{i_a k}\Bigg)^{\! 2} -\sqrt{\frac{(1-t) (p-1)!}{n^{p-1}}} Y_{i}  \sum_{k=1}^{K} \prod_{a=1}^p x_{i_a k}\\
+\sum_{j=1}^{n} \frac{1}{2} x_j^T R x_j- \widetilde{Y}_j^T  \sqrt{R} x_j \,.
\end{multline}
Let $\langle - \rangle_{t,R}=\int - \,dP(\bx\,\vert\,\bY^{(t)},\widetilde{\bY}^{(t,R)})$ be the Gibbs brackets associated to the posterior distribution \eqref{posterior_dist_t,R}.
In this appendix we prove a formula for the divergence of the function
\begin{equation}\label{eq:def_Fn}
G_n:
\begin{array}{ccl}
[0,1] \times \mathcal{S}_K^+ &\to& \mathcal{S}_K^+\\
(t,R) &\mapsto& \E[\langle \bQ \rangle_{t,R}]^{\circ(p-1)}
\end{array}\quad.
\end{equation}
\begin{lemma}[Divergence of $G_n$]\label{lemma:divergence_Fn}
Let $\delta_{\ell\ell'} = 0$ if $\ell \neq \ell'$ and $\delta_{\ell\ell'} = 1$ otherwise.
$\forall (\ell,\ell') \in \{1,\dots,K\}^2$:
\begin{multline}\label{eq:partial_derivative_F}
	\frac{\partial (G_n)_{\ell \ell'}}{\partial R_{\ell\ell'}}\bigg\vert_{t,R}
	= \frac{n (p-1)}{1+\delta_{\ell \ell'}} \E[\langle Q_{\ell\ell'}\rangle_{t,R}\,]^{p-2}
	\bigg(
	\E\big[\big\langle \bQ \circ \big(\bQ + \bQ^T - \langle \bQ + \bQ^T\rangle_{t,R}\big)\big\rangle_{t,R}\,\big]\big\vert_{\ell\ell'}\\
	-\E\bigg[\langle \bQ^T \rangle_{t,R}\circ\bigg(\!\langle \bQ + \bQ^T\rangle_{t,R} - 2\frac{\langle \bx \rangle_{t,R}^T \langle \bx \rangle_{t,R}}{n}\bigg)\bigg]\bigg\vert_{\ell\ell'}
	\bigg)\,.
\end{multline}
Then, the divergence of $G_n$ is 
$$
\sum_{1\leq \ell \leq \ell'\leq K} \frac{\partial (G_n)_{\ell\ell'}}{\partial R_{\ell\ell'}}\big\vert_{t,R}
= n(p-1)\mathrm{Tr}(\E[\langle \bQ \rangle_{t,R}\,]^{\circ(p-2)}\pmb{\Delta})
$$
with
\begin{equation}\label{eq:def_Delta}
\pmb{\Delta} \coloneqq
	\E\Bigg[\bigg\langle \bigg(\frac{\bQ + \bQ^T}{2} - \bigg\langle \frac{\bQ + \bQ^T}{2} \bigg\rangle_{\!\! t,R}\,\bigg)^{\!\! \circ 2}\, \bigg\rangle_{\!\! t,R} \!
	- \bigg(\bigg\langle\frac{\bQ + \bQ^T}{2}\bigg\rangle_{\!\! t,R} - \frac{\langle\bx\rangle_{t,R}^T \langle\bx\rangle_{t,R}}{n} \bigg)^{\!\! \circ 2} \, \Bigg]\:.
\end{equation}
\end{lemma}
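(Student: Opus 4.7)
The plan is to apply the chain rule to $(G_n)_{\ell\ell'}(t,R) = (\E\langle Q_{\ell\ell'}\rangle_{t,R})^{p-1}$, reducing the task to computing the derivative $\frac{\partial}{\partial R_{\ell\ell'}}\E\langle Q_{\ell\ell'}\rangle_{t,R}$. The key observation is that after substituting $\widetilde Y_j = \sqrt{R}\,X_j + \widetilde Z_j$ into the Hamiltonian, the $R$-dependence of $\cH_{t,R}$ is entirely carried by
\[
\sum_{j=1}^n \Big[\tfrac{1}{2} x_j^T R\,x_j - X_j^T R\,x_j - \widetilde Z_j^T\sqrt{R}\,x_j\Big],
\]
whose partial derivative with respect to $R_{\ell\ell'}$ is exactly $n\,\cL_{\ell\ell'}$ in the notation of \eqref{def_L}. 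The standard fluctuation-dissipation identity then gives
\[
\frac{\partial}{\partial R_{\ell\ell'}}\E\langle Q_{\ell\ell'}\rangle_{t,R} = -n\,\E\big[\langle Q_{\ell\ell'}\,\cL_{\ell\ell'}\rangle_{t,R} - \langle Q_{\ell\ell'}\rangle_{t,R}\langle \cL_{\ell\ell'}\rangle_{t,R}\big].
\]

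The second step simplifies this covariance by decomposing $\cL_{\ell\ell'}$ into its three summands. The self-overlap contribution $\tfrac{1}{2n}\sum_j x_j^T E^{(\ell,\ell')}x_j$ and the cross contribution $-\tfrac{1}{n}\sum_j X_j^T E^{(\ell,\ell')}x_j$ are already overlap expressions; any $\bX$-dependence they introduce is replaced by a replica drawn from the posterior via the Nishimori identity. The Gaussian noise contribution $-\tfrac{1}{n}\sum_j x_j^T \frac{\partial \sqrt{R}}{\partial R_{\ell\ell'}}\,\widetilde Z_j$ is handled by an integration by parts with respect to each $\widetilde Z_j$, after which the identity $v^T \sqrt{R}\,\frac{\partial \sqrt{R}}{\partial R_{\ell\ell'}}v = \tfrac{1}{2}v^T E^{(\ell,\ell')}v$ (derived analogously to \eqref{eq:prop3_derivative_R}) converts $\sqrt{R}$ into $E^{(\ell,\ell')}$. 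The prefactor $1/(1+\delta_{\ell\ell'})$ appearing in \eqref{eq:partial_derivative_F} arises from the elementary computation $\tfrac{1}{2}x_j^T E^{(\ell,\ell')}x_j = (1-\delta_{\ell\ell'}/2)\,x_{j\ell}x_{j\ell'}$, so the diagonal case $\ell = \ell'$ produces half the contribution of the off-diagonal case. Reorganising the resulting expectations in terms of $\bQ$, $\bQ^T$, $\langle\bQ\rangle$, $\langle\bQ^T\rangle$ and $\langle\bx\rangle^T\langle\bx\rangle/n$ produces formula \eqref{eq:partial_derivative_F}.

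The third step sums \eqref{eq:partial_derivative_F} over $1 \le \ell \le \ell' \le K$. The bracketed factor in \eqref{eq:partial_derivative_F} is symmetric under $(\ell,\ell')\leftrightarrow(\ell',\ell)$ thanks to the explicit symmetrization $\bQ + \bQ^T$, and $\E[\langle Q_{\ell\ell'}\rangle]^{p-2}$ is symmetric by the Nishimori identity; combined with the weight $1/(1+\delta_{\ell\ell'})$ this promotes the restricted sum to $\tfrac{1}{2}\sum_{\ell,\ell'}$. Applying the rearrangement $\E\langle A(A-\langle A\rangle)\rangle = \E\langle(A-\langle A\rangle)^2\rangle$ entrywise with $A = (\bQ + \bQ^T)/2$, together with one more application of Nishimori to the second summand of \eqref{eq:partial_derivative_F}, collapses the two bracket terms into the single Hadamard-square difference $\pmb{\Delta}$ of \eqref{eq:def_Delta}, yielding the claimed divergence $n(p-1)\,\mathrm{Tr}(\E[\langle \bQ \rangle]^{\circ(p-2)}\pmb{\Delta})$.

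The main obstacle is the algebraic reassembly in the second step: each of the three pieces of $\cL_{\ell\ell'}$ generates several distinct expectations after Nishimori and integration by parts, and the diagonal and off-diagonal cases pick up different prefactors from $E^{(\ell,\ell')}$. Tracking these uniformly through the $\delta_{\ell\ell'}$ shorthand and verifying that the many intermediate terms reassemble into the compact Hadamard form of \eqref{eq:def_Delta} is the computationally heaviest part; the individual tools (chain rule, Nishimori identity, Gaussian integration by parts, and the $\sqrt{R}$-derivative identities) are all standard and appear elsewhere in the paper.
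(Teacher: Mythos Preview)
Your proposal is correct and follows essentially the same route as the paper: chain rule on $(\E\langle Q_{\ell\ell'}\rangle)^{p-1}$, identification $\partial_{R_{\ell\ell'}}\cH_{t,R}=n\cL_{\ell\ell'}$, Gaussian integration by parts on the $\widetilde Z_j$-terms combined with the $\sqrt{R}$-derivative identity, Nishimori to trade $\bX$ for replicas, and finally symmetrization over $(\ell,\ell')$ together with completing the Hadamard squares (the paper verifies via Nishimori that the cross term $\E\big[\tfrac{2}{n}\langle\bx\rangle^T\langle\bx\rangle\circ(\langle\bQ+\bQ^T\rangle-\tfrac{2}{n}\langle\bx\rangle^T\langle\bx\rangle)\big]$ vanishes, which is exactly your ``one more application of Nishimori''). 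One small remark: your justification for the symmetrization step (``the bracketed factor is symmetric thanks to $\bQ+\bQ^T$'') is not quite how the paper argues it---the paper instead uses that $(G_n)_{\ell\ell'}=(G_n)_{\ell'\ell}$ and $R_{\ell\ell'}=R_{\ell'\ell}$ to write $\mathcal D=\tfrac12\sum_{\ell\le\ell'}\big[\partial_{R_{\ell\ell'}}(G_n)_{\ell\ell'}+\partial_{R_{\ell'\ell}}(G_n)_{\ell'\ell}\big]$, which after inserting \eqref{eq:partial_derivative_F} produces the full double sum and then the trace---but the outcome is the same.
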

\begin{proof}
To lighten notations, we omit the subscripts of the Gibbs brackets $\langle - \rangle_{t,R}$.
Let $(\ell,\ell') \in \{1,\dots,K\}^2$.
The partial derivative of $R \mapsto \big(G_n(t,R)\big)_{\ell \ell'}$ with respect to $R_{\ell\ell'}$ reads:
\begin{equation}\label{partial_derivative_Gn}
\frac{\partial (G_n)_{\ell\ell'}}{\partial R_{\ell\ell'}}\bigg\vert_{t,R}
= \frac{\partial \E[\langle Q_{\ell\ell'} \rangle]^{p-1}}{\partial R_{\ell\ell'}}\bigg\vert_{t,R}
= (p-1)\E[\langle Q_{\ell\ell'} \rangle]^{p-2}\,
\E\bigg[\langle Q_{\ell\ell'} \rangle \bigg\langle\frac{\partial \cH_{t,R}}{\partial R_{\ell\ell'}}\bigg\rangle
-\bigg\langle Q_{\ell\ell'} \frac{\partial \cH_{t,R}}{\partial R_{\ell\ell'}}\bigg\rangle\bigg]\,,
\end{equation}
with
\begin{equation}\label{partial_derivative_HtR}
\frac{\partial \cH_{t,R}}{\partial R_{\ell\ell'}}
\equiv \frac{\partial \cH_{t,R}(\bx;\bY^{(t)},\widetilde{\bY}^{(t,R)})}{\partial R_{\ell\ell'}}
= \sum_{j=1}^{n} \frac{1}{2}x_j^T \frac{\partial R}{\partial R_{\ell\ell'}} x_j - X_j^T \frac{\partial R}{\partial R_{\ell\ell'}} x_j
- \widetilde{Z}_j^T \frac{\partial \sqrt{R}}{\partial R_{\ell\ell'}}x_j\,.
\end{equation}
Once the identity \eqref{partial_derivative_HtR} has been plugged in the right-hand side of \eqref{partial_derivative_Gn}, two expectations involving the Gaussian randon vectors $\widetilde{Z}_j$, $j=1\dots n$, appear.
An integration by parts with respect to the Gaussian random variables $\widetilde{Z}_{jk}$, $k \in \{1,\dots,K\}$, gives:
\begin{align*}
\E\bigg[\bigg\langle Q_{\ell\ell'} \widetilde{Z}_j^T \frac{\partial \sqrt{R}}{\partial R_{\ell\ell'}}x_j \bigg\rangle\bigg]
&=\sum_{k=1}^K \E\bigg[\widetilde{Z}_{jk}  \bigg\langle Q_{\ell\ell'} \bigg(\frac{\partial \sqrt{R}}{\partial R_{\ell\ell'}}x_j\bigg)_{\!\! k}\, \bigg\rangle\bigg]\\
&=\sum_{k=1}^K \E\bigg[\bigg\langle Q_{\ell\ell'} \big(\sqrt{R}x_j\big)_k \bigg(\frac{\partial \sqrt{R}}{\partial R_{\ell\ell'}}x_j\bigg)_{\!\! k}\, \bigg\rangle\bigg]
-\E\bigg[\bigg\langle Q_{\ell\ell'} \bigg(\frac{\partial \sqrt{R}}{\partial R_{\ell\ell'}}x_j\bigg)_{\!\! k} \,\bigg\rangle\big\langle \big(\sqrt{R}x_j\big)_k \big\rangle \bigg]\\
&=\E\bigg[\bigg\langle Q_{\ell\ell'} x_j^T \sqrt{R}\frac{\partial \sqrt{R}}{\partial R_{\ell\ell'}}x_j \bigg\rangle\bigg]
-\E\bigg[\big\langle Q_{\ell\ell'} x_j^T \big\rangle
\frac{\partial \sqrt{R}}{\partial R_{\ell\ell'}} \sqrt{R} \langle x_j \rangle \bigg]\\
&=\frac{1}{2}\E\bigg[\bigg\langle Q_{\ell\ell'} x_j^T \frac{\partial R}{\partial R_{\ell\ell'}}x_j \bigg\rangle\bigg]
-\E\bigg[\big\langle Q_{\ell\ell'} x_j^T \big\rangle
\frac{\partial \sqrt{R}}{\partial R_{\ell\ell'}} \sqrt{R} \langle x_j \rangle \bigg]\;;
\end{align*}
\begin{align*}
&\E\bigg[\langle Q_{\ell\ell'} \rangle \bigg\langle \widetilde{Z}_j^T \frac{\partial \sqrt{R}}{\partial R_{\ell\ell'}}x_j \bigg\rangle\bigg]\\
&\qquad\quad=\sum_{k=1}^K \E\bigg[ \widetilde{Z}_{jk}  \langle Q_{\ell\ell'} \rangle \bigg\langle\bigg(\frac{\partial \sqrt{R}}{\partial R_{\ell\ell'}}x_j\bigg)_{\!\! k}\, \bigg\rangle\bigg]\\
&\qquad\quad=\sum_{k=1}^K \E\bigg[\langle Q_{\ell\ell'} \rangle \bigg\langle \big(\sqrt{R}x_j\big)_k \bigg(\frac{\partial \sqrt{R}}{\partial R_{\ell\ell'}}x_j\bigg)_{\!\! k}\, \bigg\rangle\bigg]
-2\E\bigg[\langle Q_{\ell\ell'} \rangle \big\langle \big(\sqrt{R}x_j\big)_k\big\rangle\bigg\langle \bigg(\frac{\partial \sqrt{R}}{\partial R_{\ell\ell'}}x_j\bigg)_{\!\! k}\, \bigg\rangle\bigg]\\
&\qquad\qquad\qquad\qquad\qquad\qquad\qquad\qquad\qquad\qquad\quad\,
+\; \E\bigg[\big\langle Q_{\ell\ell'} \big(\sqrt{R}x_j\big)_k\big\rangle\bigg\langle \bigg(\frac{\partial \sqrt{R}}{\partial R_{\ell\ell'}}x_j\bigg)_{\!\! k}\, \bigg\rangle\bigg]\\
&\qquad\quad=\E\bigg[\langle Q_{\ell\ell'} \rangle \bigg\langle x_j^T \sqrt{R} \frac{\partial \sqrt{R}}{\partial R_{\ell\ell'}}x_j\bigg\rangle\bigg]
-2\E\bigg[\langle Q_{\ell\ell'} \rangle \big\langle x_j \rangle^T \sqrt{R}\frac{\partial \sqrt{R}}{\partial R_{\ell\ell'}} \langle x_j \rangle\bigg]
+ \E\bigg[\big\langle Q_{\ell\ell'} x_j^T \big\rangle \sqrt{R}\frac{\partial \sqrt{R}}{\partial R_{\ell\ell'}} \langle x_j \rangle\bigg]\\
&\qquad\quad= \frac{1}{2}\E\bigg[\langle Q_{\ell\ell'} \rangle \bigg\langle x_j^T \frac{\partial R}{\partial R_{\ell\ell'}}x_j\bigg\rangle\bigg]
-\E\bigg[\langle Q_{\ell\ell'} \rangle \big\langle x_j \rangle^T \frac{\partial R}{\partial R_{\ell\ell'}} \langle x_j \rangle\bigg]
+ \E\bigg[\big\langle Q_{\ell\ell'} x_j^T \big\rangle \sqrt{R}\frac{\partial \sqrt{R}}{\partial R_{\ell\ell'}} \langle x_j \rangle\bigg]\,.
\end{align*}
In both chains of equalities, the last one follows from an identity similar to \eqref{eq:prop3_derivative_R}, i.e.,
\begin{equation}\label{eq:identity_partial_diff_Rll'}
\forall v \in \mathbb{R}^K:
v^T \sqrt{R} \frac{\partial\sqrt{R}}{\partial R_{\ell\ell'}} v = \frac{1}{2} v^T
\bigg(\sqrt{R} \frac{\partial\sqrt{R}}{\partial R_{\ell\ell'}} + \frac{\partial \sqrt{R}}{\partial R_{\ell\ell'}} \sqrt{R}\bigg) v
= \frac{1}{2} v^T \frac{\partial R}{\partial R_{\ell\ell'}} v\,.
\end{equation}
Making use of the two identities yielded by the integration by parts, as well as \eqref{eq:identity_partial_diff_Rll'}, we find:
%\begin{multline}
%\E\bigg[\langle Q_{\ell\ell'} \rangle \bigg\langle\frac{\partial \cH_{t,R}}{\partial R_{\ell\ell'}}\bigg\rangle
%-\bigg\langle Q_{\ell\ell'} \frac{\partial \cH_{t,R}}{\partial R_{\ell\ell'}}\bigg\rangle\bigg]
%= \sum_{j=1}^n
%\E\bigg[\bigg\langle Q_{\ell\ell'} X_j^T \frac{\partial R}{\partial R_{\ell\ell'}} x_j \bigg\rangle\bigg]
%+ \E\bigg[\langle Q_{\ell\ell'} \rangle \big\langle x_j \rangle^T \frac{\partial R}{\partial R_{\ell\ell'}} \langle x_j \rangle\bigg]\\
%-\E\bigg[\langle Q_{\ell\ell'} \rangle X_j^T \frac{\partial R}{\partial R_{\ell\ell'}} \langle x_j \rangle\bigg]
%- \E\bigg[\big\langle Q_{\ell\ell'} x_j^T \big\rangle \frac{\partial R}{\partial R_{\ell\ell'}} \langle x_j \rangle\bigg].\label{eq:difference_partial_derivative_QH}
%\end{multline}
\begin{multline}
\E\bigg[\langle Q_{\ell\ell'} \rangle \bigg\langle\frac{\partial \cH_{t,R}}{\partial R_{\ell\ell'}}\bigg\rangle
-\bigg\langle Q_{\ell\ell'} \frac{\partial \cH_{t,R}}{\partial R_{\ell\ell'}}\bigg\rangle\bigg]
= \sum_{j=1}^n
\E\bigg[\bigg\langle Q_{\ell\ell'} X_j^T \frac{\partial R}{\partial R_{\ell\ell'}} x_j \bigg\rangle
-\langle Q_{\ell\ell'} \rangle X_j^T \frac{\partial R}{\partial R_{\ell\ell'}} \langle x_j \rangle\bigg]\\
+ \E\bigg[\langle Q_{\ell\ell'} \rangle \langle x_j \rangle^T \frac{\partial R}{\partial R_{\ell\ell'}} \langle x_j \rangle
- \big\langle Q_{\ell\ell'} x_j^T \big\rangle \frac{\partial R}{\partial R_{\ell\ell'}} \langle x_j \rangle\bigg].\label{eq:difference_partial_derivative_QH}
\end{multline}
Thanks to the Nishimori identity, we have
\begin{equation*}
\E\bigg[\langle Q_{\ell\ell'} \rangle \langle x_j \rangle^T \frac{\partial R}{\partial R_{\ell\ell'}} \langle x_j \rangle
- \big\langle Q_{\ell\ell'} x_j^T \big\rangle \frac{\partial R}{\partial R_{\ell\ell'}} \langle x_j \rangle\bigg]
= \E\bigg[\langle Q_{\ell'\ell} \rangle \langle x_j \rangle^T \frac{\partial R}{\partial R_{\ell\ell'}} \langle x_j \rangle
- \big\langle Q_{\ell'\ell} \big\rangle X_j^T\frac{\partial R}{\partial R_{\ell\ell'}} \langle x_j \rangle\bigg]\,,
\end{equation*}
%\begin{equation*}
%\E\bigg[\big\langle Q_{\ell\ell'} x_j^T \big\rangle \,\frac{\partial R}{\partial R_{\ell\ell'}}\, \langle x_j \rangle\bigg]
%=\E\bigg[\langle Q_{\ell'\ell}\rangle X_j^T \,\frac{\partial R}{\partial R_{\ell\ell'}}\, \langle x_j \rangle\bigg]\,,
%\end{equation*}
%and \eqref{eq:difference_partial_derivative_QH} further simplifies:
and \eqref{eq:difference_partial_derivative_QH} further simplifies (the last equality uses the cyclic property of the trace):
\begin{align}
&\E\bigg[\langle Q_{\ell\ell'} \rangle \bigg\langle\frac{\partial \cH_{t,R}}{\partial R_{\ell\ell'}}\bigg\rangle
-\bigg\langle Q_{\ell\ell'} \frac{\partial \cH_{t,R}}{\partial R_{\ell\ell'}}\bigg\rangle\bigg]\nonumber\\
&\qquad\qquad=\sum_{j=1}^{n}\E\bigg[\bigg\langle Q_{\ell\ell'} X_j^T \frac{\partial R}{\partial R_{\ell\ell'}} \big(x_j-\langle x_j \rangle\big) \bigg\rangle\bigg]
-\E\bigg[\langle Q_{\ell'\ell} \rangle (X_j - \langle x_j \rangle)^T \frac{\partial R}{\partial R_{\ell\ell'}} \langle x_j \rangle\bigg]\nonumber\\
&\qquad\qquad= \E\bigg[\bigg\langle Q_{\ell\ell'} \mathrm{Tr}\bigg(\bX \frac{\partial R}{\partial R_{\ell\ell'}} (\bx-\langle \bx \rangle)^T\bigg) \bigg\rangle\bigg]
- \E\bigg[\langle Q_{\ell'\ell} \rangle 
\mathrm{Tr}\bigg((\bX-\langle \bx \rangle) \frac{\partial R}{\partial R_{\ell\ell'}} \langle \bx \rangle^T\bigg)\bigg]\nonumber\\
&\qquad\qquad= n\E\bigg[\bigg\langle Q_{\ell\ell'} \mathrm{Tr}\bigg(\frac{\partial R}{\partial R_{\ell\ell'}} (\bQ-\langle \bQ \rangle)\bigg) \bigg\rangle\bigg]
- n\E\bigg[\langle Q_{\ell'\ell} \rangle\,
\mathrm{Tr}\bigg(\!\frac{\partial R}{\partial R_{\ell\ell'}} \bigg(\langle \bQ \rangle - \frac{\langle \bx \rangle^T\langle \bx \rangle}{n}\bigg)\bigg)\bigg]\,.\label{intermediary_formula_delta}
\end{align}
Now consider the case $\ell \neq \ell'$.
All the entries of $\nicefrac{\partial R}{\partial R_{\ell\ell'}}$ are zeros save for the entries $(\ell,\ell')$ and $(\ell',\ell)$ which are both one. Equation \eqref{intermediary_formula_delta} then reads:
\begin{multline}\label{final_formula_delta}
\E\bigg[\langle Q_{\ell\ell'} \rangle \bigg\langle\frac{\partial \cH_{t,R}}{\partial R_{\ell\ell'}}\bigg\rangle
-\bigg\langle Q_{\ell\ell'} \frac{\partial \cH_{t,R}}{\partial R_{\ell\ell'}}\bigg\rangle\bigg]\\
= n\E\Big[\Big\langle Q_{\ell\ell'} \big(\bQ + \bQ^T - \big\langle \bQ + \bQ^T \big\rangle\big)_{ \ell\ell'}\,\Big\rangle\Big]
- n\E\bigg[\langle Q_{\ell'\ell} \rangle
\bigg(\langle \bQ + \bQ^T \rangle - 2\frac{\langle \bx \rangle^T\langle \bx \rangle}{n}\bigg)_{\!\! \ell\ell'} \,\bigg]\;.
\end{multline}
Combining \eqref{partial_derivative_Gn} and \eqref{final_formula_delta} gives the identity \eqref{eq:partial_derivative_F} when $\ell \neq \ell'$. The case $\ell=\ell'$ is obtained in a similar way except that now the entries of $\nicefrac{\partial R}{\partial R_{\ell\ell}}$ are zeros save for the entry $(\ell,\ell)$ which is one.\\

We can now prove the identity for the divergence of $G_n$. This divergence, denoted $\mathcal{D}$, satisfies:
\begin{equation}\label{eq:identities_divergence}
\mathcal{D} = \sum_{\ell \leq \ell'} \!\! \frac{\partial (G_n)_{\ell\ell'}}{\partial R_{\ell\ell'}}\bigg\vert_{t,R}
= \sum_{\ell \leq \ell'} \!\! \frac{\partial (G_n)_{\ell'\ell}}{\partial R_{\ell'\ell}}\bigg\vert_{t,R}
= \frac{1}{2}\sum_{\ell \leq \ell'} \!\! \frac{\partial (G_n)_{\ell\ell'}}{\partial R_{\ell\ell'}}\bigg\vert_{t,R}
+ \frac{1}{2}\sum_{\ell \leq \ell'} \!\! \frac{\partial (G_n)_{\ell'\ell}}{\partial R_{\ell'\ell}}\bigg\vert_{t,R} .
\end{equation}
In the last equality of \eqref{eq:identities_divergence}, replacing the summands by their formula \eqref{eq:partial_derivative_F} yields:
\begin{align}
\mathcal{D}
&= \frac{n(p-1)}{2} \sum_{\ell,\ell'=1}^K \E\big[\big\langle Q_{\ell\ell'} \big\rangle\,\big]^{p-2}\bigg(
\E\Big[\Big\langle\bQ \circ \Big(\bQ + \bQ^T - \big\langle \bQ + \bQ^T\big\rangle\Big)\Big\rangle\Big]\Big\vert_{\ell\ell'}\nonumber\\
&\qquad\qquad\qquad\qquad\qquad\qquad\qquad\qquad\;\;\:-\E\bigg[\big\langle \bQ^T \big\rangle\circ\bigg(\big\langle \bQ + \bQ^T\big\rangle - 2\frac{\langle \bx \rangle^T \langle \bx \rangle}{n}\bigg)\bigg]\bigg\vert_{\ell\ell'}
\bigg)\nonumber\\
&= \frac{n(p-1)}{2} \mathrm{Tr}\Big(\E[\langle \bQ \rangle]^{\circ(p-2)}\,
\E\Big[\Big\langle\bQ^T \circ \Big(\bQ + \bQ^T - \big\langle \bQ + \bQ^T\big\rangle\Big)\Big\rangle\Big]\Big)\nonumber\\
&\qquad\qquad\qquad\qquad\quad-\frac{n(p-1)}{2} \mathrm{Tr}\bigg(
\E[\langle \bQ \rangle]^{\circ(p-2)}\,
\E\bigg[\langle \bQ \rangle\circ\bigg(\big\langle \bQ + \bQ^T\big\rangle - 2\frac{\langle \bx \rangle^T \langle \bx \rangle}{n}\bigg)\bigg]\bigg)\,.\label{eq:div_diff_traces}
\end{align}
Remember that $\E[\langle \bQ \rangle_{t,R}\,]$, and therefore $\E[\langle \bQ \rangle_{t,R}\,]^{\circ (p-2)}$, is symmetric. Using that the trace is invariant by transposition and cyclic permutation, the two traces in~\eqref{eq:div_diff_traces} read:
\begin{align*}
&\mathrm{Tr}\Big(\E[\langle \bQ \rangle]^{\circ(p-2)}\:
\E\Big[\Big\langle\bQ^T \circ \Big(\bQ + \bQ^T - \big\langle \bQ + \bQ^T\big\rangle\Big)\Big\rangle\Big]\Big)\\
&\qquad\qquad\qquad\qquad\qquad\qquad\qquad=\frac{1}{2}\mathrm{Tr}\Big(\E[\langle \bQ \rangle]^{\circ(p-2)}\:
\E\Big[\Big\langle (\bQ+\bQ^T) \circ \Big(\bQ + \bQ^T - \big\langle \bQ + \bQ^T\big\rangle\Big)\Big\rangle\Big]\Big)\,;\\
&\mathrm{Tr}\bigg(
\E[\langle \bQ \rangle]^{\circ(p-2)}\:
\E\bigg[\langle \bQ \rangle\circ\bigg(\big\langle \bQ + \bQ^T\big\rangle -2\frac{\langle \bx \rangle^T \langle \bx \rangle}{n}\bigg)\bigg]\bigg)\\
&\qquad\qquad\qquad\qquad\qquad\qquad\qquad=\frac{1}{2}\mathrm{Tr}\bigg(
\E[\langle \bQ \rangle]^{\circ(p-2)}\:
\E\bigg[\big\langle \bQ+\bQ^T\big\rangle\circ\bigg(\big\langle \bQ + \bQ^T\big\rangle -2\frac{\langle \bx \rangle^T \langle \bx \rangle}{n}\bigg)\bigg]\bigg)\,.
\end{align*}
Clearly,
$\E\,\big\langle (\bQ+\bQ^T) \circ \big(\bQ + \bQ^T - \langle \bQ + \bQ^T\rangle\big)\big\rangle
= \E\,\big\langle \big(\bQ + \bQ^T - \langle \bQ + \bQ^T \rangle\big)^{\circ 2} \big\rangle$.
Similarly, we have
$$
\E\bigg[\big\langle \bQ+\bQ^T\big\rangle\circ\bigg(\big\langle \bQ + \bQ^T\big\rangle - 2\frac{\langle \bx \rangle^T \langle \bx \rangle}{n}\bigg)\bigg]
= \E\bigg[\bigg(\big\langle \bQ + \bQ^T\big\rangle - 2\frac{\langle \bx \rangle^T \langle \bx \rangle}{n}\bigg)^{\!\!\circ 2}\,\bigg]\;.
$$
For this last equality, we could complete the square thanks to the following term being zero:
\begin{align*}
\E\bigg[2\frac{\langle \bx \rangle^T \langle \bx \rangle}{n}\circ\bigg(\big\langle \bQ + \bQ^T\big\rangle - 2\frac{\langle \bx \rangle^T \langle \bx \rangle}{n}\bigg)\bigg]
&= 2\E\bigg[\frac{\langle \bx \rangle^T \langle \bx \rangle}{n} \circ \big\langle \bQ + \bQ^T\big\rangle \bigg]
-4\E\bigg[\bigg(\frac{\langle \bx \rangle^T \langle \bx \rangle}{n}\bigg)^{\!\! \circ 2}\,\bigg]\\
&= 2\E\bigg[\frac{\langle \bx \rangle^T \langle \bx \rangle}{n} \circ
\frac{\langle\bx\rangle^T \bX + \bX^T\langle\bx\rangle}{n} \bigg]
-4\E\bigg[\bigg(\frac{\langle \bx \rangle^T \langle \bx \rangle}{n}\bigg)^{\!\! \circ 2}\,\bigg]\\
&= 2\E\bigg[\bigg\langle\frac{\langle \bx \rangle^T \langle \bx \rangle}{n} \circ
\frac{\langle\bx\rangle^T \bx + \bx^T\langle\bx\rangle}{n} \bigg\rangle\bigg]
-4\E\bigg[\bigg(\frac{\langle \bx \rangle^T \langle \bx \rangle}{n}\bigg)^{\!\! \circ 2}\,\bigg]\\
&= 2\E\bigg[\frac{\langle \bx \rangle^T \langle \bx \rangle}{n} \circ
\frac{\langle\bx\rangle^T \langle\bx\rangle + \langle\bx\rangle^T\langle\bx\rangle}{n}\bigg]
-4\E\bigg[\bigg(\frac{\langle \bx \rangle^T \langle \bx \rangle}{n}\bigg)^{\!\! \circ 2}\,\bigg]\\
&=0\;.
\end{align*}
Plugging these identities back in~\eqref{eq:div_diff_traces}, we finally obtain:
\begin{align*}
\mathcal{D}
&= \frac{n(p-1)}{4}\mathrm{Tr}\bigg(\E[\langle \bQ \rangle_{t,R}\,]^{\circ(p-2)}\;
\E\bigg[\Big\langle \Big(\bQ + \bQ^T - \big\langle \bQ + \bQ^T\big\rangle\Big)^{\!\!\circ 2} \Big\rangle
-
\bigg(\big\langle \bQ + \bQ^T\big\rangle - 2\frac{\langle \bx \rangle^T \langle \bx \rangle}{n}\bigg)^{\!\!\circ 2}\,\bigg]\bigg)\:,
\end{align*}
where we recognize that the second expectation $\mathbb{E}[-]$ is equal to $4\pmb{\Delta}$ (see definition \eqref{eq:def_Delta}).
\end{proof}
\section{Concentration of the free entropy}\label{app:concentration_free_entropy}
Consider the inference problem~\eqref{inference_problem_t_R}.
The associated Hamiltonian reads
\begin{multline}
\cH_{t,R}(\bx ; \bY,\widetilde{\bY})
=
\sum_{i \in \mathcal{I}}
\frac{ (1-t) (p-1)!}{2n^{p-1}} \Bigg(\sum_{k=1}^{K}\prod_{a=1}^{p} x_{i_a k}\Bigg)^{\!\! 2}
- \sqrt{\frac{ (1-t) (p-1)!}{n^{p-1}}} Y_{i}  \sum_{k=1}^{K}\prod_{a=1}^{p} x_{i_a k}\\
+  \sum_{j=1}^{n} \frac{1}{2} x_j^T R x_j- \widetilde{Y}_j^T  \sqrt{R} x_j \,.
\end{multline}
In this section we show that the free entropy
\begin{equation}
\frac{1}{n} \ln \cZ_{t,R}\big(\bY^{(t)},\widetilde{\bY}^{(t,R)}\big)
= \frac{1}{n} \ln\Bigg( \int \prod_{i=1}^n dP_{X}(x_i) \: e^{-\cH_{t,R}(\bx; \bY^{(t)},\widetilde{\bY}^{(t,R)})}\Bigg)
\end{equation}
concentrates around its expectation.
We will sometimes write $\frac{1}{n} \ln \cZ_{t,R}$, omitting the arguments, to shorten notations.
\begin{theorem}[Concentration of the free entropy]\label{th:concentration_free_entropy}
	Assume $P_X$ has finite $(4p-4)$\textsuperscript{th} order moments.
	There exists a positive constant $C$ depending only on $P_X$, $K$, $p$ and $\Vert R \Vert$ such that
	\begin{equation}\label{bound_variance_free_entropy}
	\E \Bigg[\Bigg(\frac{ \ln \cZ_{t,R}}{n}
	- \E\bigg[\frac{\ln \cZ_{t,R}}{n} \bigg]
	\Bigg)^{\!\! 2}\:\Bigg]
	\leq \frac{C}{n} \:.
	\end{equation}
\end{theorem}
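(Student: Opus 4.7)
I would follow the standard route for free-energy concentration: split the randomness into its Gaussian component $(\bZ,\widetilde{\bZ})$ and its quenched component $\bX$, treat each with the appropriate inequality, and sum the two bounds. Writing $F_n\coloneqq\tfrac{1}{n}\ln \cZ_{t,R}(\bY^{(t)},\widetilde{\bY}^{(t,R)})$, the starting point is
\begin{equation*}
\mathrm{Var}(F_n)=\E\big[\mathrm{Var}(F_n\mid\bX)\big]+\mathrm{Var}\big(\E[F_n\mid\bX]\big),
\end{equation*}
and each term is handled separately.

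For the conditional variance given $\bX$, I would apply the Gaussian Poincar\'e inequality to the independent Gaussian coordinates $Z_i$ and $\widetilde Z_{j\ell}$. Direct computation gives
\begin{equation*}
\frac{\partial \ln\cZ_{t,R}}{\partial Z_i} = \sqrt{\frac{(1-t)(p-1)!}{n^{p-1}}}\,\Big\langle \sum_{k=1}^K\prod_{a=1}^p x_{i_a k}\Big\rangle_{t,R}, \qquad \frac{\partial \ln\cZ_{t,R}}{\partial \widetilde Z_{j\ell}} = \big\langle (\sqrt{R}\,x_j)_\ell\big\rangle_{t,R}.
\end{equation*}
Squaring, invoking Jensen's inequality inside the Gibbs bracket and then the Nishimori identity $\E\langle g(\bx)\rangle_{t,R}=\E\,g(\bX)$, the $\widetilde\bZ$-sum is bounded directly by $\mathrm{Tr}(R\Sigma_X)/n$, while the $\bZ$-sum, after a routine combinatorial count over tuples $i\in\mathcal I$ requiring only the $(2p)$-th moments of $P_X$, is also $O(1/n)$.

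For $\mathrm{Var}(\E[F_n\mid\bX])$ I would apply Efron--Stein to $g(\bX)\coloneqq\E_{\bZ,\widetilde\bZ}[F_n]$, pulling the square through the inner expectation by Jensen to obtain $\mathrm{Var}(g(\bX))\le\tfrac12\sum_{j=1}^n \E[(F_n(\bX)-F_n(\bX^{(j)}))^2]$, where $\bX^{(j)}$ replaces $X_j$ by an independent copy $X_j'$ and the observations $(\bY^{(t)},\widetilde\bY^{(t,R)})$ are rebuilt from $\bX^{(j)}$ with the original noise $(\bZ,\widetilde\bZ)$. Denoting by $\Delta(\bx)$ the resulting difference of Hamiltonians, the standard log-partition-function trick (two applications of Jensen to $\ln\langle e^{\pm\Delta}\rangle$) yields
\begin{equation*}
|F_n(\bX)-F_n(\bX^{(j)})| \le \frac{1}{n}\max\!\big(|\langle \Delta\rangle_{t,R,\bX}|,\,|\langle \Delta\rangle_{t,R,\bX^{(j)}}|\big).
\end{equation*}
Only the single side-channel term in $j$ and the $O(n^{p-1})$ tensor tuples $i\in\mathcal I$ containing $j$ depend on $X_j$, so $\Delta(\bx)$ is an explicit polynomial in $X_j-X_j'$, in the remaining $X_{i_a}$'s and in $\bx$. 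Squaring, Cauchy--Schwarz on the tuple sum and against the Gibbs bracket, and a Nishimori swap to re-express the bracket via an independent signal copy, lead to $\E[(F_n-F_n^{(j)})^2]\le C(P_X,K,p,\|R\|)/n^2$; summing over $j$ gives the second $O(1/n)$ contribution.

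\textbf{Main obstacle.} The delicate step is the Efron--Stein bound, because $P_X$ is unbounded and $\Delta$ is itself an unbounded random variable whose second moment must be extracted carefully. The order $4p-4$ in the hypothesis is exactly what this book-keeping costs: Cauchy--Schwarz against the Gibbs bracket reintroduces, via Nishimori, a second $(p-1)$-fold product of independent copies of $X$, and then squaring $F_n-F_n^{(j)}$ doubles that power, so that in the worst case of coinciding indices one needs $\E\,X^{2\cdot 2(p-1)}=\E\,X^{4p-4}<\infty$. All other steps use moments of order at most $2p$. Adding the Gaussian and the Efron--Stein contributions yields the claimed $C/n$ bound, with $C$ depending on $P_X$ only through moments up to order $4p-4$, on $K,p$ combinatorially, and on $R$ only through $\|R\|$.
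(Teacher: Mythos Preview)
Your proposal is correct and follows essentially the same approach as the paper: a two-step variance decomposition handled by the Gaussian Poincar\'e inequality for the noise $(\bZ,\widetilde{\bZ})$ and the Efron--Stein inequality for $\bX$, with Jensen/Nishimori used to close each bound. The only cosmetic difference is the conditioning order---you decompose as $\E[\mathrm{Var}(F_n\mid\bX)]+\mathrm{Var}(\E[F_n\mid\bX])$ while the paper conditions on $(\bZ,\widetilde{\bZ})$ instead---but the resulting estimates and the moment bookkeeping leading to the $(4p-4)$-th order requirement are the same.
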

	\begin{proof}
		To lighten notations we drop the subscripts of the Gibbs brackets $\langle - \rangle_{t,R}$.
		First, we show that the free entropy concentrates on its conditional expectation given the Gaussian noise $\bZ$, $\widetilde{\bZ}$.
		Thus, $\nicefrac{\ln  \cZ_{t,R}}{n}$ is seen as a function of $X_1,\dots,X_n$ \textit{only} and we work conditionally to $\bZ,\widetilde{\bZ}$.
		Let $X'_1,\dots,X'_n$ be i.i.d.\ samples from $P_X$, independent of $\bX$. 
		For all $j \in \{1,\dots,n\}$, we define
		$$
		\cZ_{t,R}^{(j)}\big(\bY^{(j,t)},\widetilde{\bY}^{(j,t,R)}\big) = \int \prod_{i=1}^n dP_{X}(x_i) \: e^{-\cH_{t,R}(\bx; \bY^{(j,t)},\widetilde{\bY}^{(j,t,R)})} \:,
		$$
		where $\bY^{(j,t)}$, $\widetilde{\bY}^{(j,t,R)}$ are obtained from $\bY^{(t)}$, $\widetilde{\bY}^{(t,R)}$ by replacing $X_j$ by $X'_j$.
		We can consider an inference problem similar to \eqref{inference_problem_t_R} for which the observations are $\bY^{(j,t)}$, $\widetilde{\bY}^{(j,t,R)}$.
		Then the Gibbs brackets associated to the posterior distribution are
		$$
		\langle - \rangle_{(j)} = \int - \,\prod_{i=1}^n dP_{X}(x_i) \: e^{-\cH_{t,R}(\bx; \bY^{(j,t)},\widetilde{\bY}^{(j,t,R)})}\,.
		$$
		By the Efron-Stein inequality (see \cite[Theorem 3.1]{Boucheron2013Concentration}), we have:
		\begin{equation}\label{efron_stein}
		\E \Bigg[\Bigg(\frac{ \ln \cZ_{t,R}}{n}
		- \E\bigg[\frac{\ln \cZ_{t,R}}{n} \bigg\vert \bZ,\widetilde{\bZ} \bigg]
		\Bigg)^{\!\! 2}\:\Bigg]
		\leq \frac{1}{2} \sum_{j=1}^{n} \E\bigg[\bigg(\frac{\ln  \cZ_{t,R}}{n} - \frac{\ln  \cZ_{t,R}^{(j)}}{n}\bigg)^{\!\! 2}\,\bigg]\,.
		\end{equation}
		Fix $j \in \{1,\dots,n\}$. By Jensen's inequality, note that
		\begin{multline}\label{bounds_delta_free_entropy}
		\frac{1}{n}\big\langle \cH_{t,R}(\bx; \bY^{(j,t)},\widetilde{\bY}^{(j,t,R)}) - \cH_{t,R}(\bx; \bY^{(t)},\widetilde{\bY}^{(t,R)})  \big\rangle_{(j)}\\
		\leq \frac{\ln  \cZ_{t,R}}{n} - \frac{\ln  \cZ_{t,R}^{(j)}}{n}
		\leq \frac{1}{n}\big\langle \cH_{t,R}(\bx; \bY^{(j,t)},\widetilde{\bY}^{(j,t,R)}) - \cH_{t,R}(\bx; \bY^{(t)},\widetilde{\bY}^{(t,R)})  \big\rangle\:.
		\end{multline}
		Define $\mathcal{I}_j = \{i \in \mathcal{I}: \exists b \in \{1,\dots,p\}: i_b = j\}$ and
		$\forall i \in \mathcal{I}_j:c(i) = \big\vert\big\{a \in \{1,\dots,p\}: i_a=j\big\}\big\vert$.
		The quantity inbetween the Gibbs brackets in \eqref{bounds_delta_free_entropy} reads:
		\begin{multline}
		\cH_{t,R}(\bx; \bY^{(j,t)},\widetilde{\bY}^{(j,t,R)}) - \cH_{t,R}(\bx; \bY^{(t)},\widetilde{\bY}^{(t,R)})\\
		= \frac{(1-t) (p-1)!}{n^{p-1}}\sum_{i \in \mathcal{I}_j}
		\sum_{\ell,\ell'=1}^{K}(X_{j\ell}^{c(i)}-X_{j\ell}^{\prime c(i)})
		\prod_{\substack{a=1\\i_a \neq j}}^{p} X_{i_a \ell}\prod_{a=1}^{p} x_{i_a \ell'}
		+\big(X_j - X'_j\big)^T  R \,x_j \:.
		\end{multline}
		Using Jensen's inequality, we further obtain:
		\begin{multline}\label{expectation_delta_H}
		\E\big[\big\langle \cH_{t,R}(\bx; \bY^{(j,t)},\widetilde{\bY}^{(j,t,R)}) - \cH_{t,R}(\bx; \bY^{(t)},\widetilde{\bY}^{(t,R)})  \big\rangle^2\big]\\
		\leq
		\frac{2 ((p-1)!)^2K^2 \vert \mathcal{I}_j\vert}{n^{2p-2}}\sum_{i \in \mathcal{I}_j}
		\sum_{\ell,\ell'=1}^{K}
		\E\Bigg[(X_{j\ell}^{c(i)} - X_{j\ell}^{\prime c(i)})^2
		\prod_{\substack{a=1\\i_a \neq j}}^{p} X_{i_a \ell}^2\,\Bigg\langle\prod_{a=1}^{p} x_{i_a \ell'}\Bigg\rangle^{\!\! 2}\,\Bigg]\\
		+2\E\Big[\Big(\big(X_j-X'_j\big)^T  R \langle x_j \rangle\Big)^2\Big].
		\end{multline}
		We now bound each summand on the right-hand side of \eqref{expectation_delta_H} separately.
		For all $i \in \mathcal{I}_j$ and $(\ell,\ell') \in \{1,\dots,K\}^2$:
		\begin{align*}
		&\E\Bigg[(X_{j\ell}^{c(i)} - X_{j\ell}^{\prime c(i)})^2
		\prod_{\substack{a=1\\i_a \neq j}}^{p} X_{i_a \ell}^2\,\Bigg\langle\prod_{a=1}^{p} x_{i_a \ell'}\Bigg\rangle^{\!\!\! 2}\,\Bigg]\\
		&\qquad\qquad\qquad\qquad\qquad\qquad\qquad\qquad
		\leq \E\Bigg[(X_{j\ell}^{c(i)} - X_{j\ell}^{\prime c(i)})^4
		\prod_{\substack{a=1\\i_a \neq j}}^{p} X_{i_a \ell}^4\Bigg]^{\nicefrac{1}{2}}
		\E\Bigg[\Bigg\langle\prod_{a=1}^{p} x_{i_a \ell'}\Bigg\rangle^{\!\! 4}\,\Bigg]^{\nicefrac{1}{2}}\\
		&\qquad\qquad\qquad\qquad\qquad\qquad\qquad\qquad
		\leq \E\Bigg[(X_{j\ell}^{c(i)} - X_{j\ell}^{\prime c(i)})^4
		\prod_{\substack{a=1\\i_a \neq j}}^{p} X_{i_a \ell}^4\Bigg]^{\nicefrac{1}{2}}
		\E\Bigg[\Bigg\langle\prod_{a=1}^{p} x_{i_a \ell'}^4\Bigg\rangle\,\Bigg]^{\nicefrac{1}{2}}\\
		&\qquad\qquad\qquad\qquad\qquad\qquad\qquad\qquad
		= \E\Bigg[(X_{j\ell}^{c(i)} - X_{j\ell}^{\prime c(i)})^4
		\prod_{\substack{a=1\\i_a \neq j}}^{p} X_{i_a \ell}^4\Bigg]^{\nicefrac{1}{2}}
		\E\Bigg[\prod_{a=1}^{p} X_{i_a \ell'}^4\Bigg]^{\nicefrac{1}{2}}\\
		&\qquad\qquad\qquad\qquad\qquad\qquad\qquad\qquad
		= \sqrt{\E\big[\big(X_{j\ell}^{c(i)} - X_{j\ell}^{\prime c(i)}\big)^4\,\big]
		\E\big[X_{j\ell'}^{4c(i)}\big]
		\E\Bigg[\prod_{\substack{a=1\\i_a \neq j}}^{p} X_{i_a \ell}^4\Bigg]
		\E\Bigg[\prod_{\substack{a=1\\i_a \neq j}}^{p} X_{i_a \ell'}^4\Bigg]}\;.
%		&= \E\Big[\big(X_{j\ell}^{c(i)} - X_{j\ell}^{\prime c(i)}\big)^4\Big]^{\nicefrac{1}{2}}
%		\E\Big[\big(X_{j\ell'}\big)^{4c(i)}\Big]^{\nicefrac{1}{2}}
%		\E\Bigg[\prod_{\substack{a=1\\i_a \neq j}}^{p} X_{i_a \ell}^4\Bigg]^{\nicefrac{1}{2}}
%		\E\Bigg[\prod_{\substack{a=1\\i_a \neq j}}^{p} X_{i_a \ell'}^4\Bigg]^{\nicefrac{1}{2}}\,.
		\end{align*}
		The first inequality follows from the Cauchy-Schwarz inequality, the second one from Jensen's inequality, and the first equality from the Nishimori identity. The final bound is finite given that $P_X$ has finite $(4p-4)$\textsuperscript{th} order moments.
		Hence, there exists a positive constant $C$ depending only on $P_X$, $K$ and $p$ such that the first term on the right-hand side of \eqref{expectation_delta_H} is bounded by $\nicefrac{C \vert\mathcal{I}_j\vert^2}{n^{2p-2}} \leq C$ (as $\vert \mathcal{I}_j\vert \leq n^{p-1}$).
		Regarding the second term on the right-hand side of \eqref{expectation_delta_H}, we easily get:
		\begin{equation*}
		\E\Big[\Big(\big(X'_j -X_j\big)^T  R \langle x_j \rangle\Big)^2\Big]
		\leq \E\big[\Vert X'_j -X_j\Vert^2 \Vert R\Vert^2 \Vert\langle x_j \rangle\Vert^2\big]
		\leq \Vert R \Vert^2 \E[\Vert X'_j -X_j \Vert^4 ]^{\nicefrac{1}{2}}
		\E[\Vert X_j \Vert^4 ]^{\nicefrac{1}{2}}\:.
		\end{equation*}
		We conclude that there exists a positive constant $C$ depending only on $P_X$, $K$, $p$ and $\Vert R \Vert$ such that
		\begin{equation}\label{bound_delta_hamiltonian}
		\forall j \in \{1,\dots,n\}: \E\big[\big\langle \cH_{t,R}(\bx; \bY^{(j,t)},\widetilde{\bY}^{(j,t,R)}) - \cH_{t,R}(\bx; \bY^{(t)},\widetilde{\bY}^{(t,R)})  \big\rangle^2\,\big]
		\leq C \,.
		\end{equation}
		A similar bound holds when the Gibbs brackets $\langle - \rangle$ are replaced by $\langle - \rangle_{(j)}$.
		Finally, combining \eqref{efron_stein}, \eqref{bounds_delta_free_entropy} and \eqref{bound_delta_hamiltonian}, we obtain the desired upper bound:
		\begin{equation}\label{bound_variance_EF}
		\E \Bigg[\Bigg(\frac{ \ln \cZ_{t,R}}{n}
		- \E\bigg[\frac{\ln \cZ_{t,R}}{n} \bigg\vert \bZ,\widetilde{\bZ} \bigg]
		\Bigg)^{\!\! 2}\:\Bigg]
		\leq \frac{C}{n}\;,
		\end{equation}
		where the positive constant $C$ is not necessarily the same than before but still only depends on $P_X$, $K$, $p$ and $\Vert R \Vert$.\\
		
		The second -- and final -- step is to show that the conditional expectation of the free entropy given $\bZ,\widetilde{\bZ} $ concentrates on its expectation.
		Let $g(\bZ,\widetilde{\bZ}) = \E[\nicefrac{\ln  \cZ_{t,R}}{n} \vert \bZ, \widetilde{\bZ}]$.
		By the Gaussian-Poincar\'{e} inequality (see \cite[Theorem 3.20]{Boucheron2013Concentration}), we have:
		\begin{equation}\label{gaussian_poincare}
		\E \Bigg[\Bigg(\E\bigg[\frac{ \ln \cZ_{t,R}}{n} \bigg\vert \bZ,\widetilde{\bZ}\bigg]
		- \E\bigg[\frac{\ln \cZ_{t,R}}{n}\bigg]\Bigg)^{\!\! 2}\,\Bigg]
		\leq \E\,\big\Vert \nabla g(\bZ,\widetilde{\bZ}) \big\Vert^2 \;.
		\end{equation}
		The squared norm of the gradient of $g$ reads
		$\Vert \nabla g\Vert^2 =
		\sum_{i \in \mathcal{I}} \vert\nicefrac{\partial g}{\partial Z_i}\vert^2
		+ \sum_{j=1}^{n}\sum_{\ell=1}^{K} \vert\nicefrac{\partial g}{\partial \widetilde{Z}_{j\ell}}\vert^2$.
		Each of these partial derivatives takes the form 
		$\partial g = -n^{-1} \big\langle \partial \mathcal{H}_{t,R}\big\rangle$. More precisely:
		\begin{equation*}
		\bigg\vert\frac{\partial g}{\partial Z_i}\bigg\vert
		= n^{-1}\Bigg\vert \Bigg\langle  \sqrt{\frac{(1-t) (p-1)!}{n^{p-1}}} \sum_{k=1}^{K}\prod_{a=1}^{p} x_{i_a k} \Bigg\rangle \Bigg\vert\quad ; \quad
		\bigg\vert \frac{\partial g}{\partial \widetilde{Z}_{j\ell}} \bigg\vert
		= n^{-1} \big\vert \big\langle \big(\sqrt{R}x_j\big)_{\ell} \big\rangle \big\vert \,.
		\end{equation*}
		On one hand, we have
		\begin{multline}\label{first_upperbound_GP}
		\sum_{i \in \mathcal{I}} \E\,\bigg\vert\frac{\partial g}{\partial Z_i}\bigg\vert^2
		\leq \frac{K (p-1)!}{n^{p+1}}\sum_{i \in \mathcal{I}}\sum_{k=1}^{K}
		\E\Bigg[\Bigg\langle \prod_{a=1}^{p} x_{i_a k} \Bigg\rangle^{\!\! 2}\,\Bigg]\\
		\leq \frac{K (p-1)!}{n^{p+1}}\sum_{i \in \mathcal{I}}\sum_{k=1}^{K}
		\E\Bigg[\Bigg\langle \prod_{a=1}^{p} x_{i_a k}^2 \Bigg\rangle\Bigg]
		= \frac{ K (p-1)!}{n^{p+1}}\sum_{i \in \mathcal{I}}\sum_{k=1}^{K}
		\E\Bigg[\prod_{a=1}^{p} X_{i_a k}^2 \Bigg],
		\end{multline}
		where the first two inequalities follow from Jensen's inequality and the equality from the Nishimori identity.
		On the other hand, we have
		\begin{align}\label{second_upperbound_GP}
		\sum_{j=1}^{n}\sum_{\ell=1}^{K} \E\,\bigg\vert\frac{\partial g}{\partial \widetilde{Z}_{j\ell}}\bigg\vert^2
		\leq \frac{1}{n^2} \sum_{j=1}^{n}\sum_{\ell=1}^{K} \E\big[\big\langle \big(\sqrt{R}x_j\big)_{\ell}^2\big\rangle\big]
		&= \frac{1}{n^2} \sum_{j=1}^{n} \E\big[ \big\Vert \sqrt{R}X_j\big\Vert^2\Big]
		\leq \frac{\Vert R \Vert}{n} \E_{X \sim P_X}\big[\Vert X \Vert^2 \big],
		\end{align}
		where the first inequality follows from Jensen's inequality and the equality from the Nishimori identity.
		Both upper bounds in \eqref{first_upperbound_GP} and \eqref{second_upperbound_GP} take the form $\nicefrac{C}{n}$ with $C$ a positive constant $C$ depending only on $P_X$, $K$, $p$ and $\Vert R \Vert$ (remember that $\vert \mathcal{I} \vert \leq n^p$).
		Plugging \eqref{first_upperbound_GP} and \eqref{second_upperbound_GP} in \eqref{gaussian_poincare}, we conclude that
		\begin{equation}\label{bound_variance_GP}
		\E \Bigg[\Bigg(\E\bigg[\frac{ \ln \cZ_{t,R}}{n} \bigg\vert \bZ,\widetilde{\bZ}\bigg]
		- \E\bigg[\frac{\ln \cZ_{t,R}}{n}\bigg]\Bigg)^{\!\! 2}\,\Bigg]
		\leq \frac{C}{n}\,,
		\end{equation}
		where $C$ depends only on $P_X$, $K$, $p$ and $\Vert R \Vert$.
		Combining \eqref{bound_variance_EF} and \eqref{bound_variance_GP} ends the proof of \eqref{bound_variance_free_entropy}.
	\end{proof}
\end{document}